\newif\ifWP
\newif\ifJOURNAL
\newif\ifFULL
\newif\ifLATIN
\newif\ifnotFULL	
\newif\ifnotLATIN	
  \newcommand{\GTPxxviii}{GTP28arXiv}
  \newcommand{\GTPxxix}{GTP29}
  \newcommand{\GTPxxxiii}{GTP33-full}
  \newcommand{\GTPxxviii}{vovk:FS}
  \newcommand{\GTPxxix}{shafer/etal:arXiv0905}
  \newcommand{\GTPxxxiii}{shafer/etal:2011}
\newif\iftwodates
\renewcommand\maketitle{\begin{titlepage}%
  \let\footnotesize\small
  \let\footnoterule\relax
  \let \footnote \thanks
  \null\vfil
  \vskip 30\p@
  \begin{center}%
    {\LARGE \bf \@title \par}%
    \vskip 3em%
    {\large
     \lineskip .75em%
     \begin{tabular}[t]{c}%
       \@author
     \end{tabular}\par}%
     \vskip 1.5em%
  \end{center}\par
  \vfill
  \begin{center}
    \raisebox{1.5cm}{\includegraphics[width=0.58\textwidth]%
      {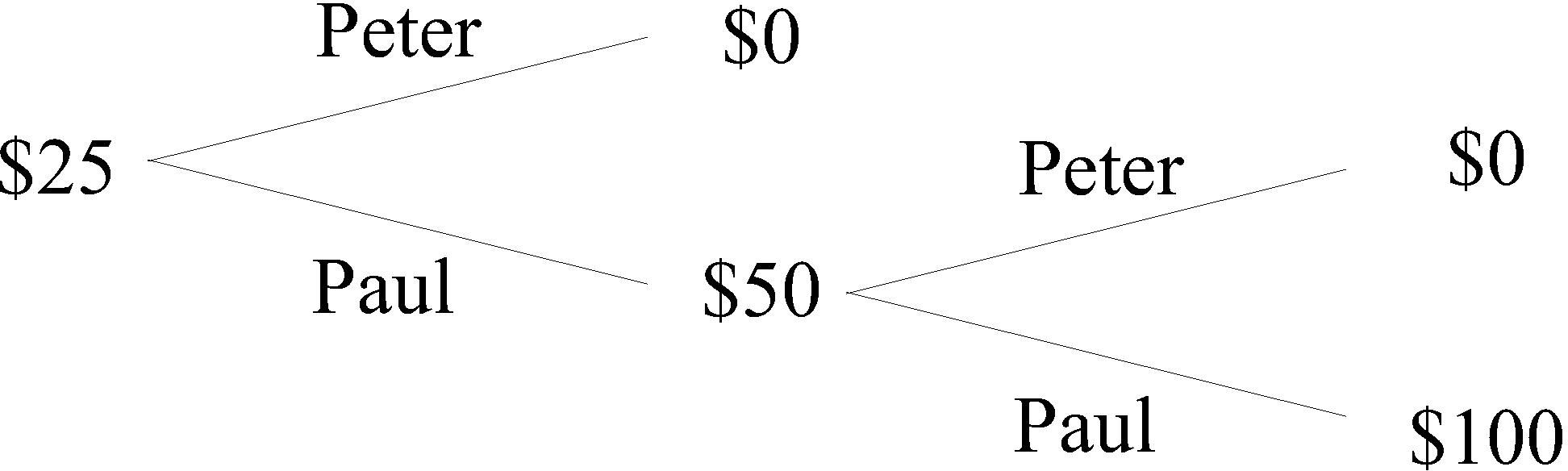}}%
    \hskip 3em%
    \includegraphics[width=0.29\textwidth]%
      {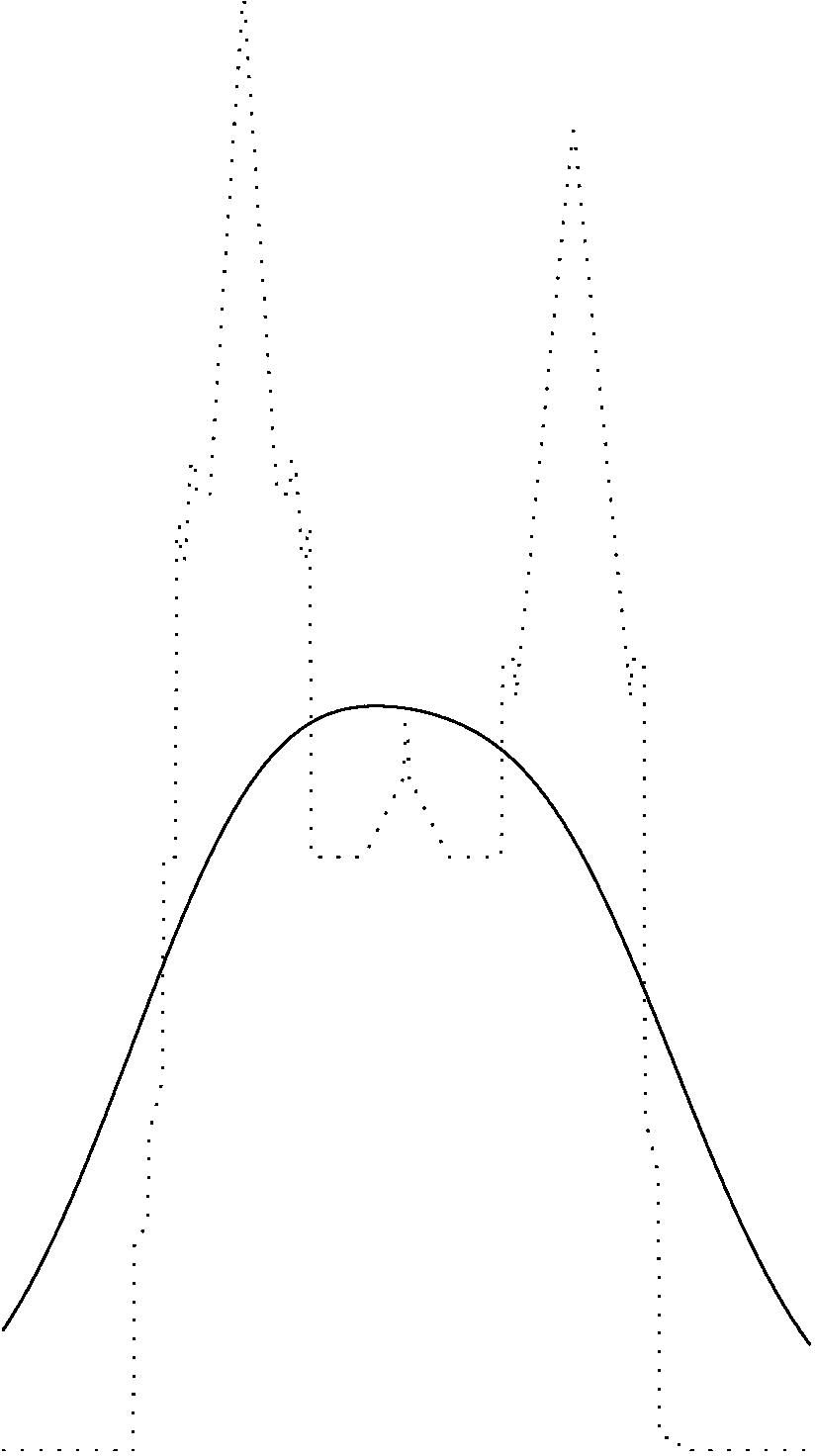}%
  \end{center}
  \@thanks
  \vfill
  \begin{center}
    {\large \bf The Game-Theoretic Probability and Finance Project}
  \end{center}
  \begin{center}
    {\large Working Paper \#\No}
  \end{center}
  \begin{center}
    {\iftwodates\large First posted \firstposted.
    Last revised \@date.\else\large\@date\fi}
  \end{center}
  \begin{center}
    Project web site:\\
    http://www.probabilityandfinance.com
  \end{center}
  \end{titlepage}%
  \setcounter{footnote}{0}%
  \global\let\thanks\relax
  \global\let\maketitle\relax
  \global\let\@thanks\@empty
  \global\let\@author\@empty
  \global\let\@date\@empty
  \global\let\@title\@empty
  \global\let\title\relax
  \global\let\author\relax
  \global\let\date\relax
  \global\let\and\relax
}
\renewenvironment{abstract}{%
  \titlepage
  \null\vfil
  \@beginparpenalty\@lowpenalty
  \begin{center}%
    \Large \bfseries \abstractname
    \@endparpenalty\@M
  \end{center}}%
  {\par\vfill\tableofcontents\endtitlepage}
\renewenvironment{thebibliography}[1]
  {\section*{\refname}%
  \addcontentsline{toc}{section}{\refname}
  \@mkboth{\MakeUppercase\refname}{\MakeUppercase\refname}%
  \list{\@biblabel{\@arabic\c@enumiv}}%
    {\settowidth\labelwidth{\@biblabel{#1}}%
    \leftmargin\labelwidth
    \advance\leftmargin\labelsep
    \@openbib@code
    \usecounter{enumiv}%
    \let\p@enumiv\@empty
    \renewcommand\theenumiv{\@arabic\c@enumiv}}%
    \sloppy
    \clubpenalty4000
    \@clubpenalty \clubpenalty
    \widowpenalty4000%
    \sfcode`\.\@m}
    {\def\@noitemerr
    {\@latex@warning{Empty `thebibliography' environment}}%
  \endlist}
  \newcommand{\bluebegin}{\begingroup\color{blue}}
  \newcommand{\blueend}{\endgroup}
\newcommand{\Vladimir}{Vladimir}
\newcommand{\DOT}{.}
\gdef\mathenddots{\mathinner{\ldotp\ldotp\ldotp\ldotp}}
\newcommand{\st}{\mathrel{|}}
\newcommand{\givn}{\mathrel{|}}
\newcommand{\dd}{\mathrm{d}}		
\newcommand{\K}{\mathcal{K}}		
\newcommand{\CCC}{\mathcal{C}}		
\newcommand{\EEE}{\mathcal{E}}		
\newcommand{\rd}{_{\rm r}}		
\newcommand{\X}{\mathbf{X}}		
\newcommand{\Xstar}{\mathbf{X^*}}	
\DeclareMathOperator{\III}{\boldsymbol{1}}	
\newcommand{\bbbn}{\mathbb{N}}                  
\newcommand{\bbbr}{\mathbb{R}}			
\newcommand{\bbbe}{\mathbb{E}}		
\DeclareMathOperator{\Expect}{\bbbe}
\DeclareMathOperator{\UpExpect}{\lefteqn{\smash{\overline{\bbbe}}}\phantom{\bbbe}}
\newcommand{\XXX}{\mathcal{X}}		
\newcommand{\bbbrbar}{\overline{\mathbb{R}}}	
\newcommand{\bbbrbarXXX}{\smash{\bbbrbar}\vphantom{\bbbr}^{\XXX}}
\theoremstyle{plain}
\newtheorem{theorem}{Theorem}[section]
\newtheorem{proposition}[theorem]{Proposition}
\newtheorem{corollary}[theorem]{Corollary}
\newtheorem{lemma}[theorem]{Lemma}
\theoremstyle{definition}
\newtheorem*{remark}{Remark}
\newcounter{protocol}
\newcounter{temporary}
  \newenvironment{protocol}{%
  \renewcommand{\ALG@name}{Protocol}
  \setcounter{temporary}{\value{algorithm}}
  \setcounter{algorithm}{\value{protocol}}
  \begin{algorithm}
  }{%
  \end{algorithm}
  \renewcommand{\ALG@name}{Algorithm}
  \setcounter{protocol}{\value{algorithm}}
  \setcounter{algorithm}{\value{temporary}}
  }
\title{Probability-free pricing of adjusted American lookbacks}
\author{A. Philip Dawid, Steven de Rooij, Peter Gr\"unwald, Wouter M. Koolen,\\
   Glenn Shafer, Alexander Shen, Nikolai Vereshchagin, and Vladimir Vovk}
  \newcommand{\No}{37}
\newcommand{\acknowledge}{We are grateful to the participants
  of the discussion of our work on Alexei Savvateev's blog
  (\texttt{savvateev.livejournal.com/66517.html})
  and to Andrzej Ruszczy\'nski's for his advice.
  This research has been supported in part
  by ANR grant NAFIT ANR-08-EMER-008-01, EPSRC grant EP/F002998/1,
  and NWO Rubicon grant 680-50-1010.}
\begin{document}
\ifWP
  \maketitle
\fi

\ifJOURNAL
  \begin{titlepage}
  
  \vfill

  \begin{center}
    {\huge Probability-free pricing of adjusted American lookbacks}{\Large \footnotemark[1]}
  \end{center}

  \medskip

    \noindent
    \textbf{A. Philip Dawid} (University of Cambridge, UK),
    \textbf{Steven de Rooij} (Centrum Wiskunde \&\ Informatica, Amsterdam, Netherlands),
    \textbf{Peter Gr\"unwald} (Centrum Wiskunde \&\ Informatica, Amsterdam, Netherlands),
    \textbf{Wouter M. Koolen} (Royal Holloway, University of London, UK,
      and Centrum Wiskunde \&\ Informatica, Amsterdam, Netherlands),
    \textbf{Glenn Shafer} (Rutgers Business School, Newark, NJ, USA),
    \textbf{Alexander Shen} (Laboratoire d'Informatique Fondamentale, CNRS, Marseille, France),
    \textbf{Nikolai Vereshchagin} (Department of Mathematical Logic and the Theory of Algorithms,
      Moscow State University, Moscow, Russia),
    and \textbf{Vladimir Vovk} (Royal Holloway, University of London, UK)

  \bigskip

  \textbf{Key words:} option pricing, American lookbacks, superhedging

  \bigskip

  \begin{center}\textbf{\Large Abstract}\end{center}

\fi

\ifWP
  \begin{abstract}
\fi
  Consider an American option that pays $G(X^*_t)$ when exercised at time $t$,
  where $G$ is a positive increasing function, $X^*_t:=\sup_{s\le t}X_s$,
  and $X_s$ is the price of the underlying security at time $s$.
  Assuming zero interest rates,
  we show that the seller of this option can hedge his position
  by trading in the underlying security
  if he begins with initial capital
  $X_0\int_{X_0}^{\infty}G(x)x^{-2}\dd x$
  (and this is the smallest initial capital that allows him to hedge his position).
  \ifFULL\bluebegin
    We have an ``if and only if'' statement:
    the seller can hedge his position
    if and only if he begins with initial capital
    $X_0\int_{X_0}^{\infty}G(x)x^{-2}\dd x$ or more.
    The same formula gives the cost of hedging the European option
    paying $G(X^*_{\infty})$ at time $\infty$.
  \blueend\fi
  This leads to strategies for trading that are always
  competitive both with a given strategy's current performance
  and, to a somewhat lesser degree,
  with its best performance so far.
  It also leads to methods of statistical testing
  that avoid sacrificing too much of the maximum statistical significance
  that they achieve in the course of accumulating data.
  %
  %
\ifWP
  \end{abstract}
\fi

\ifJOURNAL

  \vfill

  \footnotetext[1]{\acknowledge}
  \end{titlepage}
\fi

\section{Introduction}
\label{sec:introduction}

A financial security, such as a stock,
that gains in price for a period of time may do much worse later
losing much or all of its value.
When this happens,
an investor who persisted in holding the security
will regret not having sold it when its price was high.
This motivates lookback options,
which permit the investor to claim the maximum price
attained over a period of time.
Established methods for pricing such claims
either depend on probabilistic assumptions
about the behaviour of the security price
or assume that some other derivatives, such as call options,
are priced by the market and are available for trading.
In this article,
we show that when only a reasonable fraction of the maximum price is demanded,
such claims can be priced and hedged without any probabilistic assumptions
and without relying on any other derivatives.

Let $X_t$ be the security's price at time $t$
and set
$$ 
  X^*_t:=\sup_{s\le t}X_s.
$$
The classic American lookback option has $X^*_t$ as its payoff
when it is exercised at time $t$.
\ifFULL\bluebegin
  The lookbacks considered by Hobson \cite{hobson:1998}
  are of this type.
\blueend\fi
We explain how to find probability-free upper prices
for more general American options,
options that pay $G(X^*_t,X_t)$ when exercised at time $t$,
where $G$ is a given positive function of two variables.
We call such an option an adjusted American lookback option.
The least initial capital needed to finance a trading strategy
whose capital $\K_t$ will satisfy $\K_t \ge G(X^*_t,X_t)$ for all $t$
regardless of how the prices of the underlying securities evolve
is called the option's \emph{upper price}.
This term is standard in game-theoretic probability \cite{shafer/vovk:2001}.
The upper price of an option is what a seller needs
in order to hedge fully against possible loss,
while the lower price is what a buyer needs for the same purpose.
In an incomplete market the two are not necessarily equal,
and since we are assuming neither probabilities nor market pricing of other options,
our market is very incomplete.
To emphasize that we make no probabilistic assumptions
about the underlying security prices,
we sometimes call our upper prices probability-free.

A closely related and conceptually simpler problem
is whether there is a strategy for the investor
that keeps its capital greater than or equal to $F(X^*_t,X_t)$,
where $F$ is a given positive function of two variables.
For simplicity (but without loss of generality)
we will always consider this question with the initial price $X_0$ fixed to 1.
If there is a strategy whose capital process $\K_t$ satisfies $\K_0=1$
and $\K_t \ge F(X^*_t,X_t)$ for all $t$, and for any price evolution from $X_0=1$,
then we call $F$ a \emph{lookback adjuster}, or \emph{LA}.
If $F$ is an LA and there is no other LA that dominates it,
we call $F$ an \emph{admissible lookback adjuster}, or \emph{ALA}.
We show that every LA is dominated by an ALA, and we characterize the ALAs
(Theorem~\ref{thm:general}).

The picture is clearest in the case of adjusters
and options that depend only on $X^*_t$ (i.e., not on $X_t$).
We call these options and adjusters simple.
\begin{itemize}
\item
  \textbf{Simple lookback adjusters.}
  We call an increasing right-continuous positive function $F$ of one variable
  a \emph{simple lookback adjuster}, or \emph{SLA},
  if there is a strategy for the investor that starts with initial capital $1$
  and keeps its capital greater than or equal to $F(X^*_t)$ for all $t\ge 0$,
  on the assumption that $X_0=1$.
  If $F$ is an SLA and there is no other SLA that dominates it,
  we call $F$ an \emph{admissible simple lookback adjuster}, or \emph{ASLA}.
  We show that $F$ is an SLA if and only if 
  \begin{equation}\label{eq:SLA}
    \int_1^{\infty}
    \frac{F(y)}{y^2}
    \dd y
    \le
    1,
  \end{equation}
  and that $F$ is an ASLA if and only if~(\ref{eq:SLA}) holds with equality
  (Proposition~\ref{prop:SLA}).
\item
  \textbf{Simple lookback options.}
  Consider an American option 
  that pays $G(X^*_t)$ if exercised at time $t$,
  where $G$ is a given increasing right-continuous positive function;
  we only assume that $X_0>0$.
  It follows from the criterion (\ref{eq:SLA}) that this option's 
  upper price at time $0$
  is
  \begin{equation}\label{eq:upper-price}
    X_0
    \int_{X_0}^{\infty}
    \frac{G(x)}{x^{2}}
    \dd x.
  \end{equation}
  Indeed, applying a strategy always ensuring capital $\K_t\ge F(Y^*_t)$
  to the normalised price $Y_t:=X_t/X_0$ (which satisfies $Y_0=1$) and to
  $$
    F(y)
    :=
    \frac{G(X_0y)}{X_0\int_{X_0}^{\infty}G(x)x^{-2}\dd x},
    \quad
    y\in[1,\infty),
  $$
  (satisfying (\ref{eq:SLA}) if we exclude the trivial case $G(x)=0$, $\forall x\ge X_0$),
  we can ensure that
  $$
    \K_t
    \ge
    F(Y^*_t)
    =
    F(X^*_t/X_0)
    =
    \frac{G(X^*_t)}{X_0\int_{X_0}^{\infty}G(x)x^{-2}\dd x}
  $$
  with initial capital $1$;
  therefore, we can ensure that our capital is always
  at least $G(X^*_t)$ with initial capital
  $X_0\int_{X_0}^{\infty}G(x)x^{-2}\dd x$
  (but not with less).
\end{itemize}

The left-hand side of (\ref{eq:SLA})
is the expected value of $F(y)$
when $y$ follows the probability measure $Q_1$ on $[1,\infty)$
whose density is $y^{-2}$.
More generally,
(\ref{eq:upper-price}) is the expected value of $G(x)$
when $x$ follows the probability measure $Q_{X_0}$ on $[X_0,\infty)$
whose density is $X_0x^{-2}$.
This conforms to the standard picture
in which option prices are expected values
with respect to probability distributions,
conventionally called ``risk-neutral'',
which emerge naturally instead of being assumed
(in the case of (\ref{eq:SLA}),
the ``option price'' is the initial unit capital).
What is unusual here is that the risk-neutral measures emerge
even in a heavily incomplete market.
The measure $Q_{X_0}$ is the distribution of the maximum of Brownian motion
started at $X_0$ and stopped when it hits 0;
we will examine this connection further in Section~\ref{sec:literature}.



The most basic lookback option $G(X^*)$ is simply $X^*$,
paying $X^*_t$ at a time $t$ of the owner's choice.
Its upper price is infinite:
$X_0\int_{X_0}^{\infty}x^{-1}\dd x=\infty$.
To get a finite upper price,
we can fix a finite maturity date $T$
and consider the European lookback option with payoff $X_T^*$.
Hobson \cite{hobson:1998} derives upper prices for options of this type
on the assumption that the market prices 
call options on $X$ with maturity date $T$
and all possible strike prices.
Hobson's work has been developed in various directions:
see, e.g., the recent review \cite{hobson:2011}
and references therein.  We are not aware, however, of work on lookbacks
that relies neither on probabilistic assumptions
nor on market pricing of other options.
For other connections with existing literature,
see Section~\ref{sec:literature}.

The centrepiece of this article is Figure~\ref{fig:frame},
which establishes connections between several seemingly very different notions.
Part of this study has been published as \cite{dawid/etal:2011}
in \emph{Statistics and Probability Letters}.

\ifFULL\bluebegin
  This article is about upper prices.
  The reader can ask ``What about lower prices''?
  But it is easy to see that results for lower prices
  can only be vacuous.
\blueend\fi

\subsection*{Terminology, notation, and abbreviations}

We use terms such as ``positive'', ``increasing'', and ``above''
in the wide sense of the inequalities $\le$ and $\ge$.
We use the standard symbol $\bbbr$ for the set of real numbers;
the set of natural numbers is $\bbbn:=\{1,2,\ldots\}$.
We never use primes to mean differentiation;
instead, we use the more specific notation $f\rd$
to mean the right derivative of $f$
(we will use it mainly for concave functions $f$,
when $f\rd$ is guaranteed to exist).
\ifFULL\bluebegin
  In the hidden part we sometimes write $f^r$ or $f_r$ instead of $f\rd$,
  and we write $f^l$ to mean the left derivative of $f$.
\blueend\fi
In Section~\ref{sec:GTP},
the extended real line $[-\infty,\infty]$ will be denoted $\bbbrbar$,
and we will use the convention $\infty+(-\infty):=\infty$.

This is the list of abbreviations used in this article:
\begin{description}
\item[ALA]
  admissible lookback adjuster,
  often denoted $F(X^*,X)$
\item[ASLA]
  admissible simple lookback adjuster,
  often denoted $F(X^*)$
\item[LA]
  lookback adjuster,
  often denoted $F(X^*,X)$
\item[SLA]
  simple lookback adjuster,
  often denoted $F(X^*)$
\end{description}

\section{Insuring against loss of capital, I}
\label{sec:capital-1}

This section's (and most of this article's) trading protocol
is given as Protocol~\ref{prot:competitive-trading}.
It describes a perfect-information game between two players,
Market and Investor.
The players make their moves sequentially in the indicated order.
There is one security, often referred to as $X$,
whose price $X_t$ at time $t>0$ is chosen by Market.
We will refer to $p_t$ as Investor's position in $X$ at time $t$,
or the number of units of $X$ that he holds at time $t$.
For simplicity,
the protocol and our formal results
cover only the case of discrete time,
although in our informal discussions
we will sometimes consider the case of continuous time,
$t\in[0,\infty)$.

\begin{protocol}
  \caption{Simplified trading in a financial security}
  \label{prot:competitive-trading}
  \begin{algorithmic}
    \STATE $X_0:=1$ and $\K_0:=1$
    \FOR{$t=1,2,\dots$}
      \STATE Investor announces $p_t\in\bbbr$
      \STATE Market announces $X_t\in[0,\infty)$
      \STATE $\K_t:=\K_{t-1}+p_t(X_t-X_{t-1})$
    \ENDFOR
  \end{algorithmic}
\end{protocol}

In the bulk of the article we will consider the conceptually simplest case of one security $X$.
However, we may always think of $X_t$ as the capital of a trading strategy, fund, or adviser
when trading in a multi-security market.

In terms of Protocol~\ref{prot:competitive-trading},
we call an increasing function
$F:[1,\infty)\to[0,\infty)$ an \emph{SLA}
if there exists a strategy for Investor that guarantees
$\K_t\ge F(X_t^*)$ for all $t$.
We say that an SLA $F$ \emph{dominates} an SLA $G$
if $F(y)\ge G(y)$ for all $y\in[1,\infty)$.
We say that $F$ \emph{strictly dominates} $G$
if $F$ dominates $G$ and $F(y)>G(y)$ for some $y\in[1,\infty)$.
An SLA is an \emph{ASLA} if it is not strictly dominated by any SLA.

\begin{proposition}\label{prop:SLA}
\begin{enumerate}
\item
  An increasing function $F:[1,\infty)\to[0,\infty)$ is an SLA
  if and only if it satisfies (\ref{eq:SLA}).
\item
  Any SLA is dominated by an ASLA.
\item
  An SLA is admissible (is an ASLA) if and only if
  it is right-continuous and
  \begin{equation}\label{eq:ASLA}
    \int_1^{\infty}
    \frac{F(y)}{y^2}
    \dd y
    =
    1.
  \end{equation}
\end{enumerate}
\end{proposition}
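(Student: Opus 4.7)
The plan is to prove the three parts in the order 1, 3, 2, since part 1 is the main technical content and the other two follow quickly from it. The unifying idea is to think of $F$ as built out of the elementary one-touch adjusters $\III_{[z,\infty)}$, each of which is easy to hedge.

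For the ``if'' direction of part 1, I would use the Stieltjes decomposition
\[
  F(y) \ =\ F(1) + \int_1^\infty \III_{[z,\infty)}(y)\,dF(z).
\]
The building block is a hedge for a single step $\III_{[z,\infty)}$: starting with capital $1/z$, hold $1/z$ shares of $X$ until the first time $X_t\ge z$, then liquidate and keep the proceeds as cash. Before the trigger the component capital equals $X_t/z$; at and after the trigger it is at least $1$, matching the payoff. Aggregating these strategies with weights $dF(z)$ and adding $F(1)$ in cash, the total initial capital required is
\[
  F(1)+\int_1^\infty \frac{dF(z)}{z} \ =\ \int_1^\infty \frac{F(y)}{y^2}\,dy,
\]
the equality being integration by parts, for which one uses the fact that finiteness of the integral forces $F(y)/y\to 0$ as $y\to\infty$. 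If (\ref{eq:SLA}) holds, Investor can implement the strategy with the unit initial capital prescribed by Protocol~\ref{prot:competitive-trading}, leaving any surplus as cash.

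For the ``only if'' direction, I would run Investor against a randomised Market. For $\epsilon>0$ let Market play the multiplicative martingale that at each step either multiplies $X$ by $1+\epsilon$ (with probability $1/(1+\epsilon)$) or drops it to $0$ (with probability $\epsilon/(1+\epsilon)$); then $X^*_\infty=(1+\epsilon)^{\tau-1}$ where $\tau$ is the hitting time of $0$, and $P(X^*_\infty\ge (1+\epsilon)^k)=(1+\epsilon)^{-k}$. Because the hedging strategy of an SLA produces a non-negative capital process and $X$ is a martingale, $\K$ is a non-negative supermartingale with $\K_0=1$, so $\Expect[F(X^*_\infty)]\le 1$. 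Recognising the left-hand side as the Riemann-Stieltjes sum
\[
  \sum_{k\ge 0} F\bigl((1+\epsilon)^k\bigr)\Bigl(\frac{1}{(1+\epsilon)^k}-\frac{1}{(1+\epsilon)^{k+1}}\Bigr)
\]
approximating $\int_1^\infty F(y)/y^2\,dy$ from below and letting $\epsilon\downarrow 0$, Fatou's lemma upgrades this to (\ref{eq:SLA}).

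Parts 2 and 3 then follow. For part 3, if $F$ is right-continuous with $\int F/y^2\,dy=1$ and $G\ge F$ is a dominating SLA, part 1 forces $\int G/y^2\,dy=\int F/y^2\,dy$, hence $G=F$ Lebesgue-almost everywhere; if $G(y_0)>F(y_0)$ at some $y_0$, choose $y_n\downarrow y_0$ with $G(y_n)=F(y_n)$ and combine monotonicity of $G$ with right-continuity of $F$ to reach $F(y_n)\ge G(y_0)>F(y_0)=\lim F(y_n)$, a contradiction. Conversely, if $\int F/y^2<1$ then adding the positive constant $1-\int F/y^2$ strictly improves $F$, and if $F$ has a jump at $y_0$ then replacing $F(y_0)$ by $F(y_0+)$ does. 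Part 2 is then immediate: given any SLA $F$, its right-continuous modification $\bar F$ augmented by the constant $1-\int\bar F/y^2$ is, by parts 1 and 3, an ASLA dominating $F$.

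The main obstacle I expect is the sufficiency of part 1 in discrete time, where Market can leapfrog over a trigger level $z$. The component-$z$ hedge still works (liquidating at $X_t\ge z$ yields $X_t/z\ge 1$), but care is needed when superposing uncountably many components --- justifying Fubini-type interchanges between the sum of strategies and integration against $dF$ --- and when invoking integration by parts under weak regularity of $F$. A secondary concern is the supermartingale step in the necessity direction, since in discrete time $\K$ is a priori only a martingale transform of $X$; non-negativity of $\K$ together with Fatou's lemma should suffice.
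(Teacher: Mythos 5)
Your proposal is correct and follows essentially the same route as the paper: the sufficiency half is the same mixture of one-touch threshold strategies (your weighting by $dF(z)$ with $1/z$ shares per component is the paper's probability measure $P(\dd u)=(1/u)\,dF(u)$ in different clothing, and your integration by parts is its Lemma~\ref{lem:ASLA} plus Lemma~\ref{lem:to-zero}), while the necessity half is the same supermartingale argument against a martingale Market whose running maximum has distribution approximating the density $y^{-2}$. Your multiplicative binary martingale is a cleaner, fully discrete formalisation than the paper's admittedly informal Brownian-motion sketch (which it proposes to formalise via a $\pm 1/N$ random walk), but the underlying idea is identical.
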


We will give two proofs of this result:
in this section we will give a simple direct derivation,
and in Section~\ref{sec:connections}
we will derive it from a much more general statement.

The main idea of the direct derivation is as follows.
For every threshold $u$ we consider the strategy
that holds 1 unit of $X$, selling it when Investor's capital reaches (or exceeds) $u$.
This corresponds to the SLA $F_u(y):=u\III_{\{y\ge u\}}$.
(If $E$ is some property,
$\III_{\{E\}}$ is defined to be $1$ if $E$ is satisfied
and $0$ if not.)
Now we can mix these strategies
according to some probability measure $P$ on $u$.
It remains to notice that every increasing function $F$
satisfying (\ref{eq:SLA})
can be represented as such a mixture:
$F(y)=\int_1^{\infty} F_u(y) P(\dd u) = \int_1^y u P(\dd u)$.
Now we give a formal proof of part of Proposition~\ref{prop:SLA}
and an informal argument for the remaining part.

\begin{proof}[Proof of Proposition~\ref{prop:SLA}]
  First we prove that any increasing function $F:[1,\infty)\to[0,\infty)$ satisfying
  \begin{equation}\label{eq:measure2SLA}
    F(y)
    =
    \int_{[1,y]}
    u
    P(\dd u),
    \quad
    \forall y\in[1,\infty),
  \end{equation}
  for a probability measure $P$ on $[1,\infty]$,
  is an SLA.
  For each $u\ge1$, define the following strategy for Investor:
  on round $t$, the strategy outputs
  \begin{equation}\label{eq:strategy}
    p^{(u)}_t
    :=
    \begin{cases}
      1 & \text{if $X^*_{t-1}<u$}\\
      0 & \text{otherwise}
    \end{cases}
  \end{equation}
  as Investor's move $p_t$.
  (Intuitively, this strategy holds 1 unit of $X$ until $X$'s price reaches $u$;
  as soon as this happens, $X$ is sold.)
  Let $\K_t^{(u)}$ be the capital process of this strategy.
  Set
  \begin{equation}\label{eq:p}
    p_t:=\int_{[1,\infty]}p_t^{(u)}P(\dd u).
  \end{equation}
  This gives $\K_t=\int_{[1,\infty]} \K_t^{(u)} P(\dd u)$:
  indeed, this is true for $t=0$ and the inductive step is
  \begin{align*}
    \K_t
    &=
    \K_{t-1} + p_t(X_t-X_{t-1})\\
    &=
    \int_{[1,\infty]} \K_{t-1}^{(u)} P(\dd u)
    +
    \int_{[1,\infty]}p_t^{(u)}P(\dd u)
    (X_t-X_{t-1})\\
    &=
    \int_{[1,\infty]}
    \left(
      \K_{t-1}^{(u)}
      +
      p_t^{(u)}
      (X_t-X_{t-1})
    \right)
    P(\dd u)\\
    &=
    \int_{[1,\infty]} \K_t^{(u)}P(\dd u).
  \end{align*}
  This strategy will guarantee
  \begin{equation}\label{eq:guarantee}
    \K_t
    =
    \int_{[1,\infty]} \K_t^{(u)} P(\dd u)
    \ge
    \int_{[1,X_t^*]} \K_t^{(u)} P(\dd u)
    \ge
    \int_{[1,X_t^*]} u P(\dd u)
    =
    F(X_t^*).
  \end{equation}

  We can now finish the proof of the statement ``if'' in part~1 of the proposition,
  which says that any increasing function $F:[1,\infty)\to[0,\infty)$
  satisfying~(\ref{eq:SLA}) is an SLA.
  Without loss of generality we can 
  assume that $F$ is right-continuous and that (\ref{eq:ASLA}) holds.
  It remains to apply Lemma~\ref{lem:ASLA} below.

  Let us now check that every SLA satisfies~(\ref{eq:SLA}).
  Our argument will be informal:
  first, it is easy to formalize,
  and second, in Section~\ref{sec:connections} we will deduce this statement
  independently (see Corollary~\ref{cor:SLA}).
  Consider the case of continuous time,
  where the security price $X_t$ depends on $t\in[0,\infty)$
  and Investor's capital $\K_t$ is defined as in \cite{\GTPxxviii}, (2).
  Investor can guarantee $\K_t\ge F(X_t^*)$, $\forall t$.
  Let $X_t$ be the trajectory of Brownian motion started at 1
  and stopped when it hits 0 for the first time.
  The distribution of $X^*_{\infty}$ has density $y^{-2}$, $y\in[1,\infty)$
  (see Section~\ref{sec:literature} for details).
  The expected value of $F(X_{\infty}^*)$ is equal
  to the left-hand side of (\ref{eq:SLA}).
  Since $\K_t$ is a positive supermartingale with initial value 1,
  we obtain that the left-hand side of (\ref{eq:SLA})
  does not exceed
  $$
    \Expect \liminf_{t\to\infty} \K_t
    \le
    \liminf_{t\to\infty} \Expect \K_t
    \le
    1.
  $$
  To formalize this argument,
  it suffices to replace the Brownian motion with the random walk started from 1
  with the increment $\pm1/N$ for a large $N$
  (the $\pm$ is $+$ or $-$ with probability $1/2$).

  We have established part~1 of the theorem.
  Part~3 is now obvious, and part~2 follows from parts~1 and~3.
\end{proof}

The method used in this proof (stopping and combining)
has been used previously by various authors, e.g.,
El-Yaniv et al.\
(\cite{el-yaniv/etal:2001}, Theorem 1,
based on Leonid Levin's personal communication)
and Shafer and Vovk
(\cite{shafer/vovk:2001}, Lemma 3.1).
We have now seen that it gives optimal results in our setting.

The second statement of the following lemma
was used in the proof of Proposition~\ref{prop:SLA}.
\begin{lemma}\label{lem:ASLA}
  An increasing right-continuous function $F:[1,\infty)\to [0,\infty)$
  satisfies~(\ref{eq:ASLA}) if and only if (\ref{eq:measure2SLA})
  holds for some probability measure $P$ on $[1,\infty)$.
  It satisfies~(\ref{eq:SLA}) if and only if (\ref{eq:measure2SLA})
  holds for some probability measure $P$ on $[1,\infty]$.
\end{lemma}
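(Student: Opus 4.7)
The plan is to establish a bijection between finite nonnegative Borel measures $P$ on $[1,\infty)$ and increasing right-continuous functions $F:[1,\infty)\to[0,\infty)$ via the relation $F(y)=\int_{[1,y]}u\,P(\dd u)$, and to observe that under this correspondence the total mass satisfies $P([1,\infty))=\int_1^{\infty}F(y)/y^2\,\dd y$. Both claims of the lemma then fall out immediately: (\ref{eq:ASLA}) corresponds to $P$ being a probability measure on $[1,\infty)$, while (\ref{eq:SLA}) corresponds to $P([1,\infty))\le 1$, whose deficit can be dumped as an atom at $\infty$ to yield a probability measure on $[1,\infty]$ (the atom at $\infty$ contributes nothing to $F(y)$ for finite $y$ because $\infty\notin[1,y]$).

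For the ``if'' direction of both parts, assume (\ref{eq:measure2SLA}) holds for a probability measure $P$ on $[1,\infty]$. Since the possible atom at $\infty$ plays no role, I may regard $P$ as living on $[1,\infty)$ with total mass $P([1,\infty))\le 1$. Tonelli's theorem gives
$$\int_1^{\infty}\frac{F(y)}{y^2}\,\dd y
=\int_{[1,\infty)}u\int_u^{\infty}\frac{\dd y}{y^2}\,P(\dd u)
=\int_{[1,\infty)}P(\dd u)=P([1,\infty)),$$
which is $\le 1$, with equality exactly when $P(\{\infty\})=0$.

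For the ``only if'' direction, given an increasing right-continuous $F$, let $\mu_F$ be the Lebesgue--Stieltjes measure on $[1,\infty)$ with distribution function $F$ under the convention $F(1^-):=0$, so that $\mu_F(\{1\})=F(1)$ and $\mu_F((a,b])=F(b)-F(a)$. Define $P$ on $[1,\infty)$ by $P(\dd u):=u^{-1}\mu_F(\dd u)$. The same Tonelli calculation (in reverse) shows $P([1,\infty))=\int_1^{\infty}F(y)/y^2\,\dd y$, which equals $1$ under (\ref{eq:ASLA}) and is $\le 1$ under (\ref{eq:SLA}); in the latter case extend $P$ to $[1,\infty]$ by setting $P(\{\infty\}):=1-P([1,\infty))$. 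To verify (\ref{eq:measure2SLA}), for any $y\in[1,\infty)$,
$$\int_{[1,y]}u\,P(\dd u)=\int_{[1,y]}u\cdot u^{-1}\,\mu_F(\dd u)=\mu_F([1,y])=F(y).$$

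The only delicate point is bookkeeping at the endpoint $y=1$: the convention $F(1^-):=0$ forces the atom $\mu_F(\{1\})=F(1)$, which is consistent with the identity $F(1)=\int_{\{1\}}u\,P(\dd u)=P(\{1\})$ obtained from the representation at $y=1$. Once this is pinned down, the argument reduces to a single Fubini/Tonelli swap, and there is no other essential obstacle.
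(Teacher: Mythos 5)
Your proposal is correct and follows essentially the same route as the paper: define the Lebesgue--Stieltjes measure $Q$ with distribution function $F$, set $P(\dd u):=u^{-1}Q(\dd u)$, and use a single Tonelli swap to identify $P([1,\infty))$ with $\int_1^{\infty}F(y)y^{-2}\dd y$. The only cosmetic difference is that you treat both statements explicitly (placing the mass deficit as an atom at $\infty$), whereas the paper proves the equality case and notes that the inequality case follows easily.
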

\begin{proof}
  It is sufficient to prove the first statement of the lemma;
  the second then follows easily.

  Let us first check that the existence of a probability measure $P$
  on $[0,\infty)$
  satisfying~(\ref{eq:measure2SLA})
  implies~(\ref{eq:ASLA}).
  We have:
  \begin{multline}
    \int_{[1,\infty)}
    \frac{F(y)}{y^2}
    \dd y
    =
    \int_{[1,\infty)}
    \int_{[1,y]}
    \frac{u}{y^2}
    P(\dd u)
    \dd y \\
    =
    \int_{[1,\infty)}
    \int_{[u,\infty)}
    \frac{u}{y^2}
    \dd y
    P(\dd u)
    =
    \int_{[1,\infty)}
    P(\dd u)
    =
    1.
    \label{eq:integral}
  \end{multline}

  It remains to check that any increasing right-continuous
  $F:[1,\infty)\to[0,\infty)$
  satisfying~(\ref{eq:ASLA}) satisfies~(\ref{eq:measure2SLA})
  for some probability measure $P$ on $[1,\infty)$.
  Let $Q$ be the measure on $[1,\infty)$
  ($\sigma$-finite but not necessarily a probability measure)
  with distribution function $F$,
  in the sense that $Q([1,y])=F(y)$
  for all $y\in[1,\infty)$.
  Set $P(\dd u):=(1/u)Q(\dd u)$.
  We then have~(\ref{eq:measure2SLA}),
  and the calculation (\ref{eq:integral})
  shows that the $\sigma$-finite measure $P$ must be a probability measure
  (were it not, we would not have an equality in~(\ref{eq:ASLA})).
\end{proof}

According to~(\ref{eq:ASLA}),
the function
\begin{equation}\label{eq:class-1}
  F(y)
  :=
  \alpha y^{1-\alpha}
\end{equation}
is an ASLA for any $\alpha\in(0,1)$
(\cite{\GTPxxxiii}, (12)).
Another example (\cite{\GTPxxxiii}, below (12)) is
\begin{equation}\label{eq:class-2}
  F(y)
  :=
  \begin{cases}
    \alpha (1+\alpha)^{\alpha} \frac{y}{\ln^{1+\alpha}y}
      & \text{if $y \ge e^{1+\alpha}$}\\
    0  & \text{otherwise},
  \end{cases}
\end{equation}
where $\alpha>0$.
The measures $P$ corresponding (see (\ref{eq:measure2SLA}))
to (\ref{eq:class-1}) and (\ref{eq:class-2})
are computed in Appendix \ref{app:analytic}.

\section{Insuring against loss of capital, II}
\label{sec:capital-2}

The previous section explains how we can get an insurance
against losing almost all capital
as compared to the peak price of the underlying security.
In this section we will discuss (in fact, this is obvious)
how to get an additional insurance:
not to lose much as compared to the current value of the underlying security.

Condition (\ref{eq:SLA}) implies $\liminf_{y\to\infty}F(y)/y=0$
(and even $\lim_{y\to\infty}F(y)/y=0$,
as we will show in Lemma~\ref{lem:to-zero} below).
Therefore, $\K_t/X_t$ may be very small for some $t$
even if $\K_t\ge F(X^*_t)$ holds.
A simple way
to insure against this possibility
is to hold $c\in(0,1)$ units of $X$ (assuming $X_0=1$)
and to invest $1-c$ into a strategy ensuring $\K_t\ge F(X^*_t)$.
The following corollary says that it leads to an optimal result.

\begin{proposition}\label{prop:insurance}
  Let $c\ge0$ and $F:[1,\infty)\to[0,\infty)$ be an increasing function.
  Investor has a strategy ensuring
  \begin{equation}\label{eq:insurance}
    \K_t\ge cX_t+F(X_t^*)
  \end{equation}
  if and only if $c$ and $F$ satisfy
  \begin{equation}\label{eq:condition}
    \int_1^{\infty} \frac{F(y)}{y^2} \dd y \le 1-c.
  \end{equation}
\end{proposition}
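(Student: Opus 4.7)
My plan is a linear-shift reduction to Proposition~\ref{prop:SLA}. Given any Investor strategy $(p_t)$ with initial capital $\K_0 = 1$ and capital process $(\K_t)$, I form the shifted strategy $p'_t := p_t - c$ starting from initial capital $1 - c$, whose capital process $(\K'_t)$ satisfies, by a one-line telescoping sum with $X_0 = 1$,
\begin{equation*}
  \K'_t
  \;=\;
  (1-c) + \sum_{s=1}^{t}(p_s - c)(X_s - X_{s-1})
  \;=\;
  \K_t - c X_t.
\end{equation*}
Since $p_t \mapsto p'_t$ is a bijection between strategies (with inverse $p'_t \mapsto p'_t + c$), condition (\ref{eq:insurance}) for $\K$ is equivalent to $\K'_t \ge F(X^*_t)$ for $\K'$.

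For $c \in [0, 1)$, a further rescaling by $1/(1-c)$ shows that the latter is achievable starting from $1-c$ if and only if $F/(1-c)$ is an SLA; by Proposition~\ref{prop:SLA}(1), this is equivalent to (\ref{eq:condition}). The boundary cases are elementary. If $c > 1$, both sides of the ``iff'' fail: (\ref{eq:condition}) is inconsistent with $F \ge 0$, and the initial constraint $1 = \K_0 \ge c X_0 + F(X_0^*) \ge c$ already rules out (\ref{eq:insurance}). If $c = 1$, (\ref{eq:condition}) forces $F \equiv 0$ (since $F \ge 0$), in which case buying and holding one unit of $X$ witnesses (\ref{eq:insurance}); conversely, (\ref{eq:insurance}) at $t = 0$ gives $F(1) \le 0$, and at $t = 1$ the two Market threats $X_1 = 0$ and $X_1 = y > 1$ jointly force $p_1 \le 1$ and $p_1 \ge 1 + F(y)/(y-1)$, whence $F(y) = 0$ for all $y > 1$.

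I do not anticipate any real obstacle: the whole argument reduces to Proposition~\ref{prop:SLA} via the affine identity $\K'_t = \K_t - cX_t$, and the boundary cases require only a straightforward case check.
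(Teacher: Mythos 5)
Your proof is correct, and its second half takes a genuinely different route from the paper's. The ``if'' direction is essentially the paper's: your composition of the shift $p'_t\mapsto p'_t+c$ with the rescaling by $1-c$ reproduces exactly the strategy $c+(1-c)p'_t$ used in the paper. The difference is in the ``only if'' direction. The paper proves it by a separate (and admittedly informal) probabilistic argument: it runs the putative strategy against Brownian motion started at $1$ and stopped at $0$, takes expectations of both sides of (\ref{eq:insurance}) using $\Expect X_t=1$, and lets $t\to\infty$ — i.e., it repeats the supermartingale argument already used for Proposition~\ref{prop:SLA}. You instead observe that $p_t\mapsto p_t-c$ is a bijection on strategies under which the capital transforms exactly as $\K'_t=\K_t-cX_t$ (valid because $X_0=1$), so that (\ref{eq:insurance}) is \emph{equivalent} to superhedging $F(X^*_t)$ from initial capital $1-c$, and hence, after rescaling, to $F/(1-c)$ being an SLA; the ``only if'' then falls out of Proposition~\ref{prop:SLA}(1) with no new probabilistic input. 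This buys you two things: the entire content of Proposition~\ref{prop:insurance} is concentrated in Proposition~\ref{prop:SLA}, and your argument automatically inherits whatever level of rigour that proposition has (including its later rigorous justification via Corollary~\ref{cor:SLA}), whereas the paper's proof of the present proposition remains informal on its own. Your handling of the boundary cases $c=1$ and $c>1$ (where the rescaling degenerates) is correct and is needed, since the paper dismisses $c=1$ as ``trivial'' and implicitly excludes $c>1$ via (\ref{eq:condition}); the only cosmetic caveat is that your $t=0$ argument for $c>1$ presupposes that (\ref{eq:insurance}) is required at $t=0$, but Market playing $X_1=1$ yields the same contradiction at $t=1$ in any case.
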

\begin{proof}
  Suppose (\ref{eq:condition}) is satisfied;
  in particular, $c\in[0,1]$.
  The case $c=1$ is trivial, so we assume $c<1$.
  Using $c+(1-c)p'_t$ as Investor's strategy,
  where $p'_t$ are Investor's moves
  guaranteeing $\K_t\ge\frac{1}{1-c}F(X^*_t)$ (cf.\ Proposition~\ref{prop:SLA}),
  we can see that Investor can guarantee (\ref{eq:insurance}).

  The rest of the proof is similar to the second part
  of the proof of Proposition~\ref{prop:SLA},
  and is again informal, for the same reasons.
  Suppose (\ref{eq:insurance}) is satisfied;
  our goal is to demonstrate (\ref{eq:condition}).
  Without loss of generality,
  assume that $F$ is left-continuous.
  Again replacing the discrete time parameter $t\in\{0,1,\ldots\}$ by $t\in[0,\infty)$,
  assuming that $X_t$ is the trajectory of Brownian motion
  started from 1 and stopped when it hits 0,
  and taking the expected value of both sides of (\ref{eq:insurance}),
  we obtain $\Expect F(X^*_t)\le 1-c$;
  by the monotone convergence theorem,
  letting $t\to\infty$ gives $\Expect F(X^*_{\infty})\le 1-c$,
  i.e., (\ref{eq:condition}).
\end{proof}

In fact, the guarantee (\ref{eq:insurance}), and an even stronger guarantee,
can be extracted directly from Equation (\ref{eq:guarantee})
in the previous section.
If we
do not discard the term
$\int_{(X_t^*,\infty]} \K_t^{(u)}P(\dd u)$
in (\ref{eq:guarantee}), we will obtain
\begin{equation}\label{eq:stronger-guarantee}
  \K_t
  \ge
  P((X_t^*,\infty])
  X_t
  +
  F(X_t^*).
\end{equation}
The coefficient $P((X_t^*,\infty])$ in front of $X_t$
shrinks to $c:=P(\{\infty\})$ as $X_t^*\uparrow\infty$,
and the function $F$ in (\ref{eq:stronger-guarantee})
satisfies (\ref{eq:condition}).
Therefore, (\ref{eq:stronger-guarantee}) is stronger than (\ref{eq:insurance}).
This does not contradict the part ``only if'' of Proposition~\ref{prop:insurance},
which does not say that
(\ref{eq:insurance}) cannot be improved;
it only says that the improvement will not be significant enough
to decrease the coefficient in front of $X_t$.

The purpose of the next two sections will be to show that (\ref{eq:stronger-guarantee})
is all we can get even in the situation
when we allow an arbitrary dependence of the right-hand side on $X^*_t$ and $X_t$.

According to (\ref{eq:class-1}) and (\ref{eq:insurance}),
Investor can guarantee
\begin{equation}\label{eq:alpha-insurance}
  \K_t
  \ge
  cX_t
  +
  (1-c)\alpha
  (X^*_t)^{1-\alpha}
\end{equation}
for any constants $c\in[0,1]$ and $\alpha\in(0,1)$.
In Appendix~\ref{app:analytic} we will see that using (\ref{eq:stronger-guarantee})
allows us to improve~(\ref{eq:alpha-insurance}) to
\begin{equation}\label{eq:improvement}
  \K_t
  \ge
  cX_t
  +
  (1-c)\alpha(X^*_t)^{1-\alpha}
  +
  (1-c)(1-\alpha)(X_t^*)^{-\alpha}X_t.
\end{equation}

\ifFULL\bluebegin
  \subsection*{Connection with Vereshchagin's construction}

  Equation (\ref{eq:alpha-insurance}) can be improved.
  This argument will use some notation from the proof of Proposition~\ref{prop:SLA}.

  We can allow $P$ to be a probability measure on $[1,\infty]$ rather than $[1,\infty)$.
  The guarantee (\ref{eq:guarantee}) can be improved to
  \begin{multline}\label{eq:improvement-general}
    \K_t
    =
    \int_{[1,\infty]} \K_t^{(u)} P(\dd u)
    =
    \int_{[1,X_t^*]} \K_t^{(u)} P(\dd u)
    +
    \int_{(X_t^*,\infty]} \K_t^{(u)} P(\dd u)\\
    =
    \int_{[1,X_t^*]} u P(\dd u)
    +
    \int_{(X_t^*,\infty]} X_t P(\dd u)
    =
    F(X_t^*)
    +
    P((X_t^*,\infty]) X_t.
  \end{multline}
  As compared with Proposition~\ref{prop:SLA},
  the new term $P((X_t^*,\infty]) X_t$ goes some way
  towards our goal of insuring against loss of evidence,
  but the coefficient in front of $X_t$ tends to $0$
  as $X_t^*$ grows.
  This is why in the main part of the article
  we put a positive mass, $c$, at $\infty$
  and ignored the term $P((X_t^*,\infty]) X_t$.

  The density $p$ of the probability measure corresponding to (\ref{eq:class-1})
  can be found from the condition
  \begin{equation*}
    \int_{[1,y]} u p(u) \dd u
    =
    \alpha y^{1-\alpha}
  \end{equation*}
  and is $p(y)=\alpha(1-\alpha)y^{-1-\alpha}$.
  Let $c\in[0,1]$.
  Defining $P$ to be the probability measure with density $(1-c) p(y)$
  and with mass $c$ concentrated at the point $\infty$,
  we obtain from (\ref{eq:improvement-general}):
  \begin{align*}
    \K_t
    &\ge
    F(X_t^*)
    +
    P((X_t^*,\infty]) X_t\\
    &=
    (1-c)\alpha(X^*_t)^{1-\alpha}
    +
    \left(
      \int_{X_t^*}^{\infty}
      (1-c)\alpha(1-\alpha)y^{-1-\alpha}
      \dd y
    \right)
    X_t
    +
    cX_t\\
    &=
    (1-c)\alpha(X^*_t)^{1-\alpha}
    +
    (1-c)(1-\alpha)(X_t^*)^{-\alpha}X_t
    +
    cX_t.
  \end{align*}
  When the second term is ignored,
  this becomes (\ref{eq:alpha-insurance}).
  When $c=0$ and $\alpha=1/2$,
  this becomes the third displayed formula on p.~12 of \cite{shen/etal:2007local}.
\blueend\fi

\section{Insuring against loss of capital, III}
\label{sec:capital-3}

In this section we consider more general lookback adjusters,
those that depend on both $X^*_t$ and $X_t$.
A positive function $F(X^*,X)$,
where $X^*$ ranges over $[1,\infty)$ and $X$ over $[0,X^*]$,
is an \emph{LA} if there exists a strategy for Investor
that guarantees $\K_t\ge F(X^*_t,X_t)$ for all $t$.
An LA $F$ \emph{dominates} an LA $G$
if $F(X^*,X)\ge G(X^*,X)$ for all $X^*\in[1,\infty)$ and $X\in[0,X^*]$.
We say that $F$ \emph{strictly dominates} $G$
if $F$ dominates $G$
and $F(X^*,X)>G(X^*,X)$ for some $X^*\in[1,\infty)$ and $X\in[0,X^*]$.
An LA is an \emph{ALA} if it is not strictly dominated by any LA.

Remember that by $f\rd$ we mean the right derivative of $f$;
in particular, $F^=\rd$ is the right derivative of $F^=$.
\begin{theorem}\label{thm:general}
  Every LA is dominated by an ALA.
  A positive function $F(X^*,X)$
  with domain $X^*\in[1,\infty)$ and $X\in[0,X^*]$
  is an ALA
  if and only if the following two conditions are satisfied:
  \begin{itemize}
  \item
    the function
    \begin{equation}\label{eq:ALA2spine}
      F^=(X^*):=F(X^*,X^*),
      \quad
      X^*\in[1,\infty),
    \end{equation}
    is increasing, concave, and satisfies $F^=(1)=1$ and $F^=\rd(1)\le1$;
  \item
    for each $X^*\in[1,\infty)$,
    the function $F(X^*,X)$ is linear in $X$
    and its slope is equal to the right derivative of $F^=$ at the point $X^*$.
  \end{itemize}
\end{theorem}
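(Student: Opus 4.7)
The plan is to prove the two directions separately: the ``if'' direction by a direct strategy construction, and the ``only if'' direction (together with the domination claim) by adapting the Brownian-motion supermartingale argument from the second half of the proof of Proposition~\ref{prop:SLA}.

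For sufficiency, given $F$ satisfying the two conditions, I would take Investor's strategy to be $p_t := F^=\rd(X^*_{t-1})$, which is nonnegative and bounded by $F^=\rd(1)\le 1$ because $F^=\rd$ is nonincreasing by concavity. Then I would verify $\K_t \ge F(X^*_t, X_t)$ by induction on $t$, with base case $\K_0 = 1 = F^=(1) = F(1,1)$. The inductive step splits on whether Market's move creates a new high. If $X_t \le X^*_{t-1}$, so $X^*_t = X^*_{t-1}$, the linearity of $F(X^*_{t-1}, \cdot)$ in $X$ with slope $p_t$ propagates the inductive inequality from $X_{t-1}$ to $X_t$. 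If $X_t > X^*_{t-1}$, so $X^*_t = X_t$ and $F(X^*_t, X_t) = F^=(X_t)$, then the concavity bound $F^=(X_t) \le F^=(X^*_{t-1}) + F^=\rd(X^*_{t-1})(X_t - X^*_{t-1})$ combined with the same linear propagation gives $\K_t \ge F^=(X_t)$. Positivity of $F$ at $X = 0$, also needed for $F$ to be a valid candidate adjuster, follows from $F^=(X^*) - F^=\rd(X^*) X^* \ge 1 - F^=\rd(X^*) \ge 0$, where the first inequality is the supporting line of the concave $F^=$ at $X^*$ evaluated at $y = 1$ (using $F^=(1) = 1$) and the second uses $F^=\rd(X^*) \le F^=\rd(1) \le 1$.

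For necessity together with the domination claim, given any LA $F$ with a hedging strategy, the capital $\K_t$ is a positive supermartingale of initial value $1$, so $\Expect F(X^*_\tau, X_\tau) \le \Expect \K_\tau \le 1$ at every stopping time $\tau$ (informally in continuous time, to be formalized via the random-walk approximation exactly as in the second half of the proof of Proposition~\ref{prop:SLA}). Two families of stopping times pin down the ALA structure. First, the two-boundary stopping time $\tau_y \wedge T_0$, with $\tau_y := \inf\{t : X_t \ge y\}$ and $T_0 := \inf\{t : X_t = 0\}$, contributes $F^=(y)/y$ on the event $\{\tau_y < T_0\}$ (the upper boundary is reached with probability $1/y$); mixing over $y$ against a probability measure on $[1,\infty]$ and invoking Lemma~\ref{lem:ASLA} produces a concave majorant $\hat F^=$ of $F^=$ satisfying $\hat F^=(1) = 1$ and $\hat F^=\rd(1) \le 1$. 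Second, a stopping time that first reaches a new high at level $X^*$ and subsequently visits various $X \in [0, X^*)$ forces $F(X^*, X)$ to lie below the tangent $\hat F^=(X^*) + \hat F^=\rd(X^*)(X - X^*)$, yielding the linear-in-$X$ dominating ALA. Admissibility of any $F$ of the stated form is then immediate: the sufficiency strategy attains equality on the spine, so a strict dominator would strictly dominate on the spine, which combined with necessity applied to the dominator yields a contradiction.

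The chief technical obstacle is the second step of necessity, pinning down the exact slope of $F$ in $X$. While the concavity of the majorant on the spine follows directly from the adaptation of Proposition~\ref{prop:SLA}, the linear-in-$X$ structure requires that Investor's position at state $(X^*, X^*)$ be uniquely determined by two opposing Market responses: an immediate new high forces $p \ge \hat F^=\rd(X^*)$ from below, while an immediate drawdown to $X < X^*$ forces $p$ to be small enough that $F(X^*, X) \le \K + p(X - X^*)$. Making this work for an arbitrary (possibly non-Markov) Investor strategy is the crux; a clean route is to first reduce to the Markov value function $V(X^*, X)$ and argue via a Bellman-type equation that $V$ is necessarily linear in $X$ on each slice.
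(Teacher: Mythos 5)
Your ``if'' half is correct and is exactly the paper's Lemma~\ref{lem:direct}: the strategy $p_t:=F^=\rd(X^*_{t-1})$, the two-case induction, and the positivity check $F(X^*,0)=F^=(X^*)-F^=\rd(X^*)X^*\ge 1-F^=\rd(X^*)\ge0$ all go through. The other two parts have genuine gaps.

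For the domination/necessity claim, the supermartingale-at-stopping-times argument yields global \emph{integral} constraints on $F$ (stopping when the path hits $0$ gives $\int_1^\infty F(y,0)y^{-2}\dd y\le1$; your two-boundary stopping time, after mixing over $y$, gives only $\sup_y F^=(y)/y\le1$), but it does not produce a \emph{pointwise} majorant of the stated form, and in particular it does not yield concavity of the spine: that is a local constraint, extractable only after conditioning on reaching an intermediate state $(X^*,X^*)$ with capital close to optimal, i.e.\ after passing to the value function. You identify this as ``the crux'' and leave it unexecuted, so the necessity direction is a plan rather than a proof. The paper's Lemma~\ref{lem:dominated} does this work non-probabilistically: it sets $F_1(X^*,X):=\inf\K^{\Pi}(\sigma)$ over situations with $(\Xstar(\sigma),\X(\sigma))=(X^*,X)$, takes $F_2^=$ to be the smallest concave increasing majorant of $F_1^=$, extends linearly with slope $(F_2^=)\rd$, and shows $F_2\ge F_1$ by a case analysis in which any failure of domination hands Market a move that violates the hedging guarantee. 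No Brownian limit, and no issue with non-Markov strategies, arises.

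Admissibility is also not ``immediate''. A strict dominator $G$ of $F$ need not exceed $F$ on the spine; and even after replacing $G$ by its stated-form majorant $\hat G$, so that $\hat G^=>F^=$ somewhere, the spine conditions alone give no contradiction: $F^=\equiv1$ and $\hat G^=(y)=\sqrt y$ are both concave, increasing, equal to $1$ at $1$, with right derivative at most $1$ there, and ordered. The contradiction must come from the full two-variable comparison, which is the paper's Lemma~\ref{lem:not-comparable}: if $F\le\hat G$ with both of the stated form and $F\ne\hat G$, then $D:=\hat G-F$ is again linear in $X$ with slope $D^=\rd$, and choosing $X^*$ where $D^=\rd$ is nearly maximal over $[1,X]$ forces $D(X^*,0)=D^=(X^*)-D^=\rd(X^*)X^*<0$, contradicting $D\ge0$. (In the example above, $\hat G(1,0)=1/2<1=F(1,0)$, so $\hat G$ does not in fact dominate $F$.) This calculation is the missing ingredient; without it your closing sentence does not close the argument.
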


Theorem~\ref{thm:general} will be deduced from three lemmas.
The function $F^=:[1,\infty)\to[0,\infty)$ defined by (\ref{eq:ALA2spine})
will be called the \emph{spine} of an ALA $F(X^*,X)$.
\ifFULL\bluebegin
  For each $X^*\in[1,\infty)$,
  define a new function $F_{X^*}(X):=F(X^*,X)$, $X\in[0,X^*]$.
\blueend\fi

By a \emph{situation} we mean any sequence
$\sigma=(X_1,\ldots,X_t)$ of Market's moves;
$\Box$ stands for the empty situation.
We use the notation $\X(\sigma)$ for the last move $X_t$ of Market
and the notation $\Xstar(\sigma)$ for the highest price $\max_{s=0,\ldots,t}X_s$
of the security so far,
setting $\X(\Box)=\Xstar(\Box):=1$.
If $\Pi$ is a strategy for Investor,
$\K^{\Pi}(\sigma)$ is defined as Investor's capital $\K_t$ in the situation $\sigma$
when Investor follows $\Pi$.
Formally, a \emph{strategy for Investor}
(also called a \emph{trading strategy}) is defined as a function
$\Pi:\Sigma\to\bbbr$,
where $\Sigma$ is the set of all situations,
and
$$
  \K^{\Pi}(X_1,\ldots,X_t)
  :=
  1
  +
  \sum_{s=1}^{t}
  p_s(X_s-X_{s-1}),
$$
where $p_s:=\Pi(X_1,\ldots,X_{s-1})$.

\ifFULL\bluebegin
  The following is a very gentle introduction.
  \begin{lemma}\label{lem:linear}
    Let $F$ be an ALA and $X^*\in[1,\infty)$.
    The function $F_{X^*}$ is linear in $X\in[0,X^*]$.
    The linear extension of $F_{X^*}$ onto $[X^*,\infty)$ dominates $F^=$:
    $F_{X^*}(X)\ge F^=(X)$ for all $X\in[X^*,\infty)$,
    using the same notation $F_{X^*}$ for the extension as well.
    The slope of $F_{X^*}:[X^*,\infty)\to\bbbr$ is the smallest
    among the linear functions having a positive slope, taking the value $F^=(X^*)$ at $X^*$,
    and dominating $F^=$ over $[X^*,\infty)$.
  \end{lemma}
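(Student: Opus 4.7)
The plan is to read off the structure of an ALA $F$ by playing a specific game against its witness strategy, then to invoke admissibility to rule out any slack. Let $\Pi$ be a trading strategy with $\K^\Pi(\sigma)\ge F(\Xstar(\sigma),\X(\sigma))$ for all $\sigma$, and fix $X^*\in[1,\infty)$.

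First I would derive an affine upper envelope on $F_{X^*}$. Take any situation $\sigma$ with $\Xstar(\sigma)=\X(\sigma)=X^*$---say the monotonic small-step ascent from $1$ to $X^*$---and write $\K:=\K^\Pi(\sigma)\ge F^=(X^*)$ and $p:=\Pi(\sigma)$. Extending $\sigma$ by any single market move $Y\ge 0$ produces capital $\K+p(Y-X^*)$ and position $(X^*,Y)$ if $Y\le X^*$, position $(Y,Y)$ if $Y>X^*$. The LA inequality at each extension gives
\[
  F(X^*,Y)\le \K+p(Y-X^*)\ \text{for }Y\in[0,X^*],\qquad
  F^=(Y)\le \K+p(Y-X^*)\ \text{for }Y>X^*.
\]
The second inequality already yields assertion~2 of the lemma (any such affine envelope dominates $F^=$ past $X^*$); the first is an affine upper bound on $F_{X^*}$.

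Next I would sharpen these bounds via admissibility. The pointwise infimum of $\K^\Pi(\sigma)$ over diagonal situations ending at $X^*$ must equal $F^=(X^*)$, for otherwise raising $F(X^*,X^*)$ to that infimum produces a pointwise larger function still witnessed by $\Pi$, contradicting admissibility. Along a minimizing sequence $\sigma_n$ with $\K^\Pi(\sigma_n)\to F^=(X^*)$, the slopes $p_n:=\Pi(\sigma_n)$ remain bounded (substituting $Y=0$ in the first envelope inequality bounds $p_n$ from above, and any fixed $Y>X^*$ in the second bounds it from below), so a subsequential limit $p^*$ satisfies
\[
  F(X^*,Y)\le F^=(X^*)+p^*(Y-X^*)\ \text{for }Y\in[0,X^*],\qquad p^*\ge s:=\sup_{Y>X^*}\frac{F^=(Y)-F^=(X^*)}{Y-X^*}.
\]

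The main obstacle is converting these inequalities into equalities with slope exactly $s$, which would simultaneously give assertions~1 and~3. The plan is to construct, whenever the first inequality is strict somewhere on $[0,X^*]$, a pointwise larger LA and thereby contradict admissibility. Define $F'(X^*,Y):=F^=(X^*)+s(Y-X^*)$ on $[0,X^*]$ and $F':=F$ elsewhere; a candidate witness $\Pi'$ is obtained from $\Pi$ by a recursive modification: at every instant when the running maximum first reaches a new diagonal value $Z$, switch to holding $s_Z:=\sup_{Y>Z}(F^=(Y)-F^=(Z))/(Y-Z)$ units of~$X$ until $Z$ is exceeded. On each horizontal slice the resulting capital is affine by construction, and whenever a new peak $Y>Z$ forms the capital $F^=(Z)+s_Z(Y-Z)$ still exceeds $F^=(Y)$ by the very domination derived in the first paragraph, so the constraint at the new peak is preserved. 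Verifying that this recursive construction really does witness $F'$ pointwise is the technical heart of the proof. Once it does, admissibility forces $F=F'$, giving linearity with slope $p^*=s$; the minimality assertion is then immediate from the definition of $s$, since any positive $q<s$ fails to dominate $F^=$ somewhere on $(X^*,\infty)$.
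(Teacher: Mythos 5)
Your argument is sound and reaches all three assertions, but by a genuinely different route from the paper's. The paper fixes the line of smallest positive slope through $(X^*,F^=(X^*))$ dominating $F^=$ (its $g_4$, your $s$) and then runs a three-way case analysis on the witness $\Pi$ --- whether some diagonal situation attains the capital $F(X^*,X^*)$ exactly, and if so whether $\Pi$'s position there equals that slope --- in each case grafting a modified strategy out of pieces of $\Pi$ (run $\Pi$ until $\Xstar=X^*$, hold a fixed position until the maximum is exceeded, then jump to a lower-capital branch of $\Pi$). You instead (a) use admissibility to force $\inf\K^{\Pi}$ over diagonal situations at $X^*$ to equal $F^=(X^*)$, extract a limiting position $p^*\ge s$ along a minimizing sequence, and so obtain the envelope $F(X^*,X)\le F^=(X^*)+s(X-X^*)$ on $[0,X^*]$; and (b) build from scratch the self-contained witness ``hold $s_Z$ units while the running maximum is $Z$'' for the candidate dominator $G(Z,X):=F^=(Z)+s_Z(X-Z)$. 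That verification does close: $G\ge F$ by step (a) applied at every level $Z$, and the induction over record prices works because $F^=(Y)\le F^=(Z)+s_Z(Y-Z)$ holds by the very definition of $s_Z$ as a supremum of difference quotients (not, as you say, by the domination derived in your first paragraph). This avoids the paper's case analysis and strategy-grafting entirely and is closer in spirit to the paper's proof of Lemma~\ref{lem:dominated}; the price is the extra limiting argument needed to make the envelope pass through $F^=(X^*)$ itself. Two small repairs: assertion~2 does not follow ``already'' from your first-paragraph inequality, since at that stage the affine envelope of slope $p$ has not been identified with the linear extension of $F_{X^*}$ --- it follows only at the end, once the slope is known to be $s$; and you should record that $s_Z\ge0$ (otherwise the line of slope $s_Z$ would eventually fall below $F^=\ge0$), which is needed both for the ``positive slope'' clause of assertion~3 and for the legitimacy of your witness.
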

  \begin{proof}
    \textbf{The first two paragraphs of this proof are not used later.}
    Fix an ALA $F$, $X^*\in[1,\infty)$, and a trading strategy $\Pi$
    witnessing that $F$ is an LA,
    in the sense that $\K^{\Pi}(\sigma)\ge F(\Xstar(\sigma),\X(\sigma))$
    for all situations $\sigma$.
    remember that $\K^{\Pi}$ is the capital process of $\Pi$.
    First we prove that $F_{X^*}(X)$ is a concave function of $X\in[0,X^*]$.
    Suppose it is not.
    Choose $X\in(0,X^*)$ in such a way that no straight line passing via $(X,F(X^*,X))$
    lies above the graph of $F_{X^*}$ over $[0,X^*]$.
    The last condition will remain true for all points
    that are sufficiently close to $(X,F(X^*,X))$.
    Since $F$ is admissible, there is a situation $\sigma$
    in which $(\Xstar(\sigma),\X(\sigma))=(X^*,X)$
    and $\K^{\Pi}(\sigma)$ is arbitrarily close to $F(X^*,X)$;
    in particular, 
    we can choose $\sigma$ in such a way
    that no straight line passing via $(X,\K^{\Pi}(\sigma))$
    lies above the graph of $F_{X^*}$ over $[0,X^*]$.
    Now it is clear that regardless of the position in $X$
    chosen by $\Pi$ in this situation,
    Market can choose the next move $X$ in such a way that $\K^{\Pi}(\sigma X)$
    becomes strictly smaller than $F(\Xstar(\sigma X),X)$;
    this contradicts our assumption that $\Pi$ witnesses
    that $F$ is an LA.

    Since the function $F_{X^*}(X)$ of $X\in[0,X^*]$ is concave,
    its left derivative $F^l_{X^*}(X^*)$ at $X^*$ exists.
    It is clear that $F^l_{X^*}(X^*)$ at $X^*$ is positive:
    if not, there is a situation $\sigma$ with $\Xstar(\sigma)=\X(\sigma)=X^*$
    and $\K^{\Pi}(\sigma)$ so close to $F(X^*,X^*)$
    that the position $\Pi(\sigma)$ of $\Pi$ in $X$
    is strictly negative,
    and so Market can make $\K^{\Pi}$ strictly negative
    by making the security price large enough.
    Let $g_1:\bbbr\to\bbbr$ be the linear function
    with slope $F^l_{X^*}(X^*)$ and such that $g_1(X^*)=F(X^*,X^*)$.
    It is easy to see that $g_1$ dominates $F^=$ over $[X^*,\infty)$:
    if not, we can choose a situation $\sigma$
    with $(\Xstar(\sigma),\X(\sigma))=(X^*,X^*)$
    and $\K^{\Pi}(\sigma)$ sufficiently close to $F(X^*,X^*)$;
    regardless of the position in $X$ in the situation $\sigma$,
    Market will be able to violate
    $\K^{\Pi}(\sigma X)\ge F(\Xstar(\sigma X),\X(\sigma X))$
    for some $X\in[0,\infty)$.

    Let $g_4:\bbbr\to\bbbr$ be the linear function
    with the smallest possible positive slope
    that dominates $F^=$ over $[X^*,\infty)$ and satisfies $g_4(X^*)=F(X^*,X^*)$.
    (It is easy to check that the infimum of such slopes is attainable.
    Our numbering of the functions $g_1,g_2,\ldots$ is in the order of decreasing slope.)
    We know that $g_4\le g_1$ over $[X^*,\infty)$.
    Suppose the function $F_{X^*}$ (remember that $F^l_{X^*}(X^*)\ge0$) does not coincide
    with the restriction of $g_4$ to $[0,X^*]$.
    This means that there is $X\in[0,X^*]$ such that $F_{X^*}(X)<g_4(X)$.
    We consider three possibilities:
    \begin{itemize}
    \item
      There is no situation $\sigma$ in which $\Xstar(\sigma)=\X(\sigma)=X^*$
      and the capital is $\K^{\Pi}(\sigma)=F(X^*,X^*)$.
      Let us see that in this case $F_{X^*}$ must coincide with $g_4$ over $[0,X^*]$.
      This is witnessed by the following modification $\Pi'$ of $\Pi$.
      Run $\Pi$ until $\Xstar=X^*$ for the first time.
      We know that the capital $\K^{\Pi}$ at this point
      strictly exceeds $F(X^*,X^*)$,
      $\K^{\Pi}>F(X^*,X^*)$.
      Draw the straight line via the point $(X^*,\K^{\Pi})$ that is parallel to $g_4$.
      Take the position in $X$ equal to the slope of this straight line
      and maintain this position until $\Xstar$ strictly exceeds $X^*$.
      At that point, choose any situation $\sigma$
      such that $\Xstar(\sigma)$ and $\X(\sigma)$
      are equal to the current value of $\Xstar$
      and $\K^{\Pi}(\sigma)$ is below the current capital
      (such a $\sigma$ exists since $F$ is admissible).
      From this point on act
      as $\Pi$ acts after $\sigma$.
      The new strategy $\Pi'$ ensures both $\K^{\Pi'}\ge F(\Xstar,\X)$
      and $\K^{\Pi'}(\sigma)\ge g_4(\Xstar(\sigma))$
      whenever $\Xstar(\sigma)=X^*$;
      therefore, by the admissibility of $F$,
      $F_{X^*}$ and $g_4$ coincide over $[0,X^*]$.
    \item
      There exists a situation $\sigma$ in which $\Xstar(\sigma)=\X(\sigma)=X^*$,
      the capital $\K^{\Pi}(\sigma)$ is $F(X^*,X^*)$,
      and the position in $X$ taken by $\Pi$ in $\sigma$
      is equal to the slope of $g_4$.
      We know that $F(X^*,X)\le g_4(X)$ for all $X\in[0,X^*]$,
      and that $F(X^*,X)<g_4(X)$ for some $X\in[0,X^*]$.
      We will arrive at a contradiction
      by exhibiting a trading strategy $\Pi'$
      witnessing that $F$ modified by setting
      $F(X^*,X):=g_4(X)$ for all $X\in[0,X^*]$ is an LA.
      Run $\Pi$ until $\Xstar=X^*$ for the first time
      (if $\Xstar=X^*$ never happens, run $\Pi$ forever).
      If the current capital $\K^{\Pi}$ strictly exceeds $F(X^*,X^*)$,
      act as in the previous item.
      If $\K^{\Pi}=F(X^*,X^*)$,
      hold the number of units of $X$ equal to the slope of $g_4$
      until reaching a situation $\sigma'$ with $\Xstar(\sigma')>X^*$.
      As soon as this happens,
      from this point on act not as $\Pi$ acts after the current situation
      but as $\Pi$ acts after $\sigma\X(\sigma')$.
    \item
      There exists a situation $\sigma$ in which $\Xstar(\sigma)=\X(\sigma)=X^*$,
      the capital $\K^{\Pi}(\sigma)$ is $F(X^*,X^*)$,
      and the position in $X$ taken by $\Pi$ in $\sigma$
      is different from the slope of $g_4$.
      (The previous and this possibilities are overlapping,
      but there is no harm in this.)
      The position in $X$ taken by $\Pi$ in $\sigma$
      has to be greater than the slope of $g_4$.
      Let $g_2$ be the linear function with $g_2(X^*)=F(X^*,X^*)$
      and the slope equal to the position in $X$ taken by $\Pi$ in $\sigma$.
      It is clear that $F(X^*,X)\le g_2(X)$ for all $X\in[0,X^*]$.
      Choose any linear function $g_3$ with $g_3(X^*)=F(X^*,X^*)$
      and slope strictly intermediate between those of $g_2$ and $g_4$.
      We will arrive at a contradiction
      by exhibiting a trading strategy $\Pi'$
      witnessing that $F$ modified by setting
      $F(X^*,X):=g_3(X)$ for all $X\in[0,X^*]$ is an LA.
      Run $\Pi$ until $\Xstar=X^*$ for the first time
      (if $\Xstar=X^*$ never happens, run $\Pi$ forever).
      If the current capital $\K^{\Pi}$ strictly exceeds $F(X^*,X^*)$,
      act as in the first item.
      If $\K^{\Pi}=F(X^*,X^*)$,
      hold the number of units of $X$ equal to the slope of $g_3$ until $\Xstar>X^*$.
      As soon as this happens,
      choose any situation $\sigma'$
      such that $\Xstar(\sigma')$ is equal to the current value of $\Xstar$
      and $\K^{\Pi}(\sigma')$ is below the current capital.
      From this point on act not as $\Pi$ acts after the current situation
      but as $\Pi$ acts after $\sigma'$.
      \qedhere
    \end{itemize}
  \end{proof}

  Lemma~\ref{lem:linear} shows that ALAs $F$
  are completely determined by their spines.
  The following lemma establishes the structure of the spines.
  \begin{lemma}\label{lem:spine}
    For any ALA $F(X^*,X)$,
    $F^=(X^*)$ is an increasing and concave function of $X^*\in[1,\infty)$,
    $F^=(1)=1$, and $F^=\rd(1)\le1$.
  \end{lemma}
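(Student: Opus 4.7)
The plan is to leverage Lemma~\ref{lem:linear}. For an ALA $F$ that lemma gives
\[
F(X^*, X) = F^=(X^*) + s(X^*)(X - X^*), \quad X \in [0, X^*],
\]
where $s(X^*)$ denotes the smallest positive slope such that the line through $(X^*, F^=(X^*))$ dominates $F^=$ on $[X^*, \infty)$; in particular $F^=(y) \le F^=(X^*) + s(X^*)(y - X^*)$ whenever $y \ge X^* \ge 1$. I introduce the canonical strategy $\Pi^*$ that at round $t$ takes position $p_t := s(X^*_{t-1})$, and track the slack $U_t := \K^{\Pi^*}_t - F(X^*_t, X_t)$. A direct bookkeeping computation shows $U_t$ does not change on steps where the running maximum is constant, and on an update from $X^*_{t-1}$ to $X^*_t$ it changes by $F^=(X^*_{t-1}) + s(X^*_{t-1})(X^*_t - X^*_{t-1}) - F^=(X^*_t)$, which is non-negative by the dominance property just recalled. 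Hence $U_t$ is non-decreasing, and $\K^{\Pi^*}_t \ge F(X^*_t, X_t) + (1 - F^=(1))$.

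For $F^=(1) = 1$, the initial condition $\K_0 = 1 \ge F(1, 1) = F^=(1)$ gives $F^=(1) \le 1$; if the inequality were strict, the previous bound would exhibit $F + (1 - F^=(1))$ as an LA strictly dominating $F$, contradicting admissibility.

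For concavity I argue by contradiction. Suppose $F^=$ is not concave, and let $\tilde F^=$ denote the concave envelope of $F^=$ on $[1, \infty)$; using the linear function through $(1, F^=(1))$ with slope $s(1)$ one sees that $\tilde F^=(1) = F^=(1) = 1$, and by hypothesis $\tilde F^= \ge F^=$ with strict inequality somewhere. Set
\[
\tilde F(X^*, X) := \tilde F^=(X^*) + \tilde F^=\rd(X^*)(X - X^*),
\]
where $\tilde F^=\rd$ is the right derivative of the concave function $\tilde F^=$. The same slack calculation as above, with $\tilde F^=\rd$ in place of $s$, shows that the strategy taking position $\tilde F^=\rd(X^*_{t-1})$ at round $t$ witnesses $\tilde F$ as an LA (the slack increase on a max update is non-negative by concavity of $\tilde F^=$). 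To show $\tilde F \ge F$ everywhere I distinguish touching points $z$ (where $\tilde F^=(z) = F^=(z)$) from non-touching ones: at a touching $z$, the linearity of $\tilde F^=$ between adjacent touching points together with the supremum characterization of $s$ yields $\tilde F^=\rd(z) = s(z)$, so the cross-sections of $\tilde F$ and $F$ at $X^* = z$ coincide; at a non-touching $X^*$, comparing with the chord to the next touching point shows $\tilde F^=\rd(X^*) < s(X^*)$, and the identity
\[
\tilde F(X^*, X) - F(X^*, X) = (\tilde F^=(X^*) - F^=(X^*)) + (\tilde F^=\rd(X^*) - s(X^*))(X - X^*)
\]
becomes a sum of two non-negative terms (using $X - X^* \le 0$), strictly positive at $X = X^*$. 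Hence $\tilde F$ strictly dominates $F$, contradicting admissibility. So $F^=$ is concave and $s = F^=\rd$.

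Once concavity is in hand the last two properties are short. If $F^=\rd(X^*_0) < 0$ for some $X^*_0$, concavity forces $F^=\rd \le F^=\rd(X^*_0) < 0$ on $[X^*_0, \infty)$, so $F^=(y) \to -\infty$, contradicting positivity of $F$; hence $F^=\rd \ge 0$ and $F^=$ is increasing. Finally, positivity of $F$ at $(X^*, X) = (1, 0)$ gives $F(1, 0) = F^=(1) - F^=\rd(1) = 1 - F^=\rd(1) \ge 0$, so $F^=\rd(1) \le 1$. The main obstacle I anticipate is the concavity step, specifically the identification $\tilde F^=\rd(z) = s(z)$ at touching points and the sign analysis showing $\tilde F \ge F$ off the diagonal, both of which rely on the precise characterization of $s(X^*)$ as a supremum of chord slopes and on the piecewise linearity of the concave envelope between touching points.
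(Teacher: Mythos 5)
Your proof is correct, but it takes a genuinely different route from the paper's. The paper's argument is local: for each putative failure of monotonicity or concavity it picks two or three witnessing points and modifies an arbitrary strategy witnessing that $F$ is an LA, restarting at every new running maximum with the minimal dominating slope, so as to raise $F$ at a single diagonal point and contradict admissibility; $F^=(1)=1$ and $F^=\rd(1)\le1$ then get their own short arguments. You instead argue globally: the non-decreasing slack $U_t$ of the canonical strategy $p_t=s(X^*_{t-1})$ delivers $F^=(1)=1$ immediately, and for concavity you pass to the concave envelope $\tilde{F}^=$, check by the same slack computation that the induced $\tilde{F}$ is an LA, and verify $\tilde{F}\ge F$ with strict inequality at a non-touching diagonal point. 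This is closer in spirit to the paper's proof of Lemma~\ref{lem:dominated} (which likewise passes to the smallest concave increasing majorant) than to its proof of the present lemma, and it buys uniformity: monotonicity falls out of concavity plus positivity rather than needing a separate construction. Two points to tighten. First, your justification of $\tilde{F}^=\rd(X^*)\le s(X^*)$ via ``the chord to the next touching point'' breaks down when the non-touching interval is unbounded; the clean argument is to cap $\tilde{F}^=$ on $[X^*,\infty)$ by the line of slope $s(X^*)$ through $(X^*,\tilde{F}^=(X^*))$ (parallel to the dominating line through $(X^*,F^=(X^*))$) and invoke minimality of the envelope, which yields $\tilde{F}^=\rd\le s$ at every point, touching or not---and that inequality is all your domination identity actually needs. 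Second, you should record that $\tilde{F}^=\rd\ge0$ everywhere (a concave function with a negative right derivative somewhere tends to $-\infty$, contradicting $\tilde{F}^=\ge F^=\ge0$); this is needed both for the touching-point identity $\tilde{F}^=\rd=s$ (since $s$ is by definition the smallest \emph{positive} dominating slope) and to guarantee that the positions taken by your witnessing strategy for $\tilde{F}$ are positive. With those repairs the argument is complete.
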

  \begin{proof}
    In this proof we will often use Lemma~\ref{lem:linear}
    (not always mentioning it explicitly).
    Let $\Pi$ be any trading strategy witnessing that $F$ is an LA.
    Suppose $F^=(X^*):=F(X^*,X^*)$ is not increasing.
    Choose any $X^{(1)},X^{(2)}\in[1,\infty)$
    such that $X^{(1)}<X^{(2)}$ but $F^=(X^{(1)})>F^=(X^{(2)})$.
    Let us check that $F$ redefined at one point
    by setting $F(X^{(2)},X^{(2)}):=F^=(X^{(1)})$ will also be an LA;
    this will contradict the admissibility of $F$.
    Run $\Pi$ until the time $\tau_1:=\min\{t\st X^*_t>X^{(1)}\}$.
    It is clear that $\K^{\Pi}_{\tau_1}\ge F^=(X^{(1)})$.
    At time $\tau_1$, take the position in $X$
    equal to the slope of the straight line with the smallest positive slope
    passing through $(X_{\tau_1},\K^{\Pi}_{\tau_1})$
    and dominating $F^=(X^*)$ over $X^*\in[X_{\tau_1},\infty)$.
    Maintain this position until the time
    $\tau_2:=\min\{t\ge\tau_1\st X^*_t>X^*_{\tau_1}\}$.
    At time $\tau_2$, take the position in $X$
    equal to the slope of the straight line with the smallest positive slope
    passing through $(X_{\tau_2},\K^{\Pi}_{\tau_2})$
    and dominating $F^=(X^*)$ over $X^*\in[X_{\tau_2},\infty)$.
    Maintain this position in $X$ until the time
    $\tau_3:=\min\{t\ge\tau_2\st X^*_t>X^*_{\tau_2}\}$.
    Etc.
    It is clear that $\K^{\Pi}_t\ge F^=(X^*_t)$ whenever $X^*_t=X_t=X^{(2)}$.

    Suppose $F^=(X^*)$ is not concave;
    we will arrive at a contradiction in a similar way to the previous paragraph.
    Choose any $X^{(1)},X^{(2)},X^{(3)}\in[1,\infty)$ that witness the lack of concavity:
    $X^{(1)}<X^{(2)}<X^{(3)}$ and $F^=(X^{(2)})<g(X^{(2)})$,
    where $g$ is the linear function such that
    $g(X^{(1)})=F^=(X^{(1)})$ and $g(X^{(3)})=F^=(X^{(3)})$.
    Let us check that $F$ redefined at one point
    by setting $F(X^{(2)},X^{(2)}):=g(X^{(2)})$ is still an LA,
    which will contradict the admissibility of $F$.
    Consider the same strategy as in the previous paragraph.
    At each time $\tau_n$, $n=1,2,\ldots$, the point $(X^{(2)},g(X^{(2)}))$
    will be below the straight line with the smallest positive slope
    passing through $(X_{\tau_n},\K^{\Pi}_{\tau_n})$
    and dominating $F^=(X^*)$ over $X^*\in[X_{\tau_n},\infty)$
    (because the domination implies that $(X^{(3)},F^=(X^{(3)}))$
    will be below that line).
    Therefore, $\K^{\Pi}_t\ge g(X^*_t)$ whenever $X^*_t=X_t=X^{(2)}$.

    The equality $F^=(1)=1$ is witnessed by the following trading strategy.
    At the beginning take the position in $X$ equal to $F^=\rd(1)$.
    Maintain this position in $X$ until the time
    $\tau_1:=\min\{t\st X^*_t>1\}$.
    At the time $\tau_1$ take the position $F^=\rd(X^*_{\tau_1})$ in $X$.
    Maintain this position until the time
    $\tau_2:=\min\{t\ge\tau_1\st X^*_t>X_{\tau_1}\}$.
    Etc.
  
    To prove $F^=\rd(1)\le1$, we argue indirectly.
    Suppose $F^=\rd(1)>1$.
    This implies that $\Pi$'s position in $X$ is more than $1$
    at the beginning.
    If the price of $X$ drops to 0,
    $\K^{\Pi}$ will become strictly negative.
  \end{proof}
\blueend\fi

\begin{lemma}\label{lem:direct}
  If a positive function $F(X^*,X)$, $X^*\in[1,\infty)$, $X\in[0,X^*]$,
  satisfies the two conditions in the statement of Theorem~\ref{thm:general},
  it is an LA.
\end{lemma}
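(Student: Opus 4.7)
The plan is to exhibit an explicit trading strategy $\Pi$ and prove by induction on $t$ that its capital satisfies $\K_t\ge F(X^*_t,X_t)$ on every sequence of Market's moves. Write $\phi:=F^=$ for the spine; since $\phi$ is concave on $[1,\infty)$, its right derivative $\phi\rd$ is defined and finite throughout this interval and is a decreasing function of its argument. The linearity hypothesis together with $F(X^*,X^*)=\phi(X^*)$ forces the closed form
\[
  F(X^*,X) \;=\; \phi(X^*) + \phi\rd(X^*)(X-X^*).
\]
The candidate strategy $\Pi$ is the one that plays the position $p_t:=\phi\rd(X^*_{t-1})$ on round $t$. The base of the induction is immediate: $\K_0=1=\phi(1)=F(1,1)=F(X^*_0,X_0)$, using $F^=(1)=1$.

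For the inductive step, suppose $\K_{t-1}\ge F(X^*_{t-1},X_{t-1})=\phi(X^*_{t-1})+\phi\rd(X^*_{t-1})(X_{t-1}-X^*_{t-1})$. The protocol's capital update gives
\[
  \K_t \;=\; \K_{t-1}+\phi\rd(X^*_{t-1})(X_t-X_{t-1})
       \;\ge\; \phi(X^*_{t-1})+\phi\rd(X^*_{t-1})(X_t-X^*_{t-1}).
\]
If $X_t\le X^*_{t-1}$, then $X^*_t=X^*_{t-1}$ and the right-hand side is precisely $F(X^*_t,X_t)$. If $X_t>X^*_{t-1}$, then $X^*_t=X_t$, and the concavity of $\phi$ supplies the tangent-line inequality $\phi(X_t)\le\phi(X^*_{t-1})+\phi\rd(X^*_{t-1})(X_t-X^*_{t-1})$, whence $\K_t\ge\phi(X_t)=F(X^*_t,X_t)$. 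Either way the inductive bound is preserved.

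I do not expect a real obstacle: the whole argument rests on the algebraic match between the strategy's linear capital growth over rounds where $X^*$ is frozen and the prescribed linearity of $F$ in $X$, while the concavity of $\phi$ absorbs exactly those moves in which $X^*$ jumps upward. The remaining hypothesis $\phi\rd(1)\le 1$ plays no role in the induction itself; it is needed only to ensure the consistency condition $F(1,0)=1-\phi\rd(1)\ge 0$, aligning the formula for $F$ with the positivity assumed in the statement. The use of the \emph{right} derivative rather than the left is essential when $\phi$ has a kink at $X^*_{t-1}$: Market may move $X_t$ either downward (leaving $X^*$ fixed, so the right derivative is the correct slope going forward) or upward (at which point the concavity bound kicks in).
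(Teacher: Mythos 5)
Your proof is correct and uses exactly the strategy the paper uses ($p_t:=F^=\rd(X^*_{t-1})$); the paper merely states this strategy and omits the inductive verification, which you supply correctly, including the tangent-line inequality for the concave spine at upward jumps of $X^*$.
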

\begin{proof}
  The following trading strategy witnesses that $F$ is an LA:
  at any time $t$, take the position
  $p_t:=F^=\rd(X^*_{t-1})$.
  (When we say that a trading strategy $\Pi$ witnesses that $F$ is an LA
  we mean that $\K^{\Pi}(\sigma)\ge F(\Xstar(\sigma),\X(\sigma))$
  for all situations $\sigma$.)
\end{proof}

\begin{lemma}\label{lem:dominated}
  Every LA is dominated by a function
  that satisfies the two conditions
  in the statement of Theorem~\ref{thm:general}.
\end{lemma}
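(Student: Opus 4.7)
My plan is to construct, from the given LA $F$ and a witnessing strategy $\Pi$, a function $G$ satisfying the two conditions of Theorem~\ref{thm:general} with $G\ge F$. Since such a $G$ is determined by its spine via $G(X^*,X)=G^=(X^*)+G^=\rd(X^*)(X-X^*)$, the task reduces to constructing a concave, increasing spine $G^=:[1,\infty)\to[0,\infty)$ with $G^=(1)=1$ and $G^=\rd(1)\le1$, whose tangent line at each $X^*$ dominates $F(X^*,\cdot)$ on $[0,X^*]$.

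First I would extract from $\Pi$ the data of linear upper bounds on $F$ slice by slice. At any reachable ``new-max'' situation $\sigma$, meaning one with $\Xstar(\sigma)=\X(\sigma)=X^*$, continuing $\Pi$ for one more round with Market's next move $X\in[0,X^*]$ yields
\[
  F(X^*,X) \le \K^\Pi(\sigma) + \Pi(\sigma)(X-X^*);
\]
in particular $F^=(X^*)\le\K^\Pi(\sigma)$. The first-round analysis with $X_1=0$ and $X_1\to\infty$ gives $0\le\Pi(\Box)\le1$, and the same reasoning at each new-max situation bounds $\Pi(\sigma)\in[0,\K^\Pi(\sigma)/X^*]$.

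Second, I would assemble these local bounds into a global spine. The natural candidate for the right-derivative profile is
\[
  \beta(y) := \inf\bigl\{\Pi(\sigma) : \sigma \text{ is reachable, } \Xstar(\sigma)=\X(\sigma)=y, \text{ and } \Pi \text{ witnesses } F\bigr\},
\]
with the infimum taken over all witnessing strategies $\Pi$. I would set $G^=(y):=1+\int_1^y\beta(z)\,\dd z$ and $G(X^*,X):=G^=(X^*)+\beta(X^*)(X-X^*)$, which automatically satisfies condition~(ii). Condition~(i) then reduces to $\beta$ being decreasing, together with $\beta(1)\le 1$ (immediate from $\Pi(\Box)\le1$).

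The main obstacle is twofold: verifying that $\beta$ is decreasing, so that $G^=$ is concave, and showing that $G\ge F$ pointwise. For the first, one tracks Investor's position across new-max events under the non-bankruptcy constraint at $X=0$, which forbids large positions once the max is high; this is clean for a single greedy witnessing strategy, but the infimum over all witnesses requires cross-strategy comparison. For the second, one must show $G^=(X^*)\ge\K^\Pi(\sigma)$ at every reachable new-max situation, matching the integral definition of $G^=$ to the capital trajectories of the witnesses. I expect these steps to require a compactness argument in the weak topology on probability measures on $[1,\infty]$, leveraging the bijection between ALAs and such measures implicit in~(\ref{eq:stronger-guarantee}), or else a minimax argument on the game tree.
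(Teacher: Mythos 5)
Your overall plan---reduce to constructing a spine and then extend by tangent lines via $G(X^*,X)=G^=(X^*)+G^=\rd(X^*)(X-X^*)$---matches the shape of the paper's argument, but the construction you propose has a genuine gap, and it sits exactly at the two steps you yourself flag as ``the main obstacle.'' The function $\beta(y):=\inf\Pi(\sigma)$, taken over reachable situations with $\Xstar(\sigma)=\X(\sigma)=y$ and over witnessing strategies, is the wrong object to integrate. Nothing prevents a witnessing strategy from reaching such a situation $\sigma$ with capital $\K^{\Pi}(\sigma)$ far above $F^=(y)$ (Market can drive the price down towards $0$ and back up while the strategy holds a large positive position, multiplying the capital), and at such a high-capital situation the position $\Pi(\sigma)$ may legitimately be $0$ while the strategy still superhedges. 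Hence $\beta$ can collapse to $0$ on $(1,\infty)$, giving $G^=\equiv1$, which fails to dominate a general LA (e.g.\ one with $F^=(y)=\sqrt{y}$). Neither the claimed monotonicity of $\beta$ nor the inequality $G\ge F$ can be recovered by a compactness or minimax argument as stated, because the quantity you minimize is not tied to the payoff $F$ in the right direction.

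The paper's proof instead takes the infimum of the \emph{capital}: for a single fixed witness $\Pi$ it sets $F_1(X^*,X):=\inf\{\K^{\Pi}(\sigma)\st\Xstar(\sigma)=X^*,\ \X(\sigma)=X\}$, which is automatically $\ge F$ and $\le 1+\Pi(\Box)(X^*-1)\le X^*$. It then lets $F_2^=$ be the smallest concave increasing function dominating the diagonal $F_1^=$ and extends by tangent lines. Concavity and monotonicity of the spine are then free from the hull construction, and the only substantive step is $F_2\ge F_1$, which the paper proves by a geometric contradiction: if some point $(X,F_1(X^*,X))$ with $X<X^*$ lay strictly above the tangent line to $F_2^=$ at $X^*$, one can find a situation $\sigma$ with $\Xstar(\sigma)=\X(\sigma)=X^*$ whose capital lies strictly below a chord joining $(X,F_1(X^*,X))$ to a point of the graph of $F_1^=$ to the right of $X^*$; no single position $\Pi(\sigma)$ can then cover both endpoints of the chord, so Market violates the superhedge. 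To complete your argument, replace ``infimum of positions'' by ``infimum of capitals, then concave increasing hull'' and supply this chord argument; the endpoint conditions $F_2^=(1)\le1$ and $(F_2^=)\rd(1)\le1$ then follow from $F_1^=(X)\le1+\Pi(\Box)(X-1)$ and positivity of $F$ at $(1,0)$.
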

\begin{proof}
  Let $F(X^*,X)$ be an LA.
  Choose a trading strategy $\Pi$ that witnesses that $F$ is an LA.
  Notice that $\Pi$'s moves $p_t$ are always positive, $p_t\ge0$:
  indeed, if $p_t<0$, Market can make $\K^{\Pi}$
  negative by choosing large enough $X_t$.

  Define $F_1(X^*,X)$ as the infimum of $\K^{\Pi}(\sigma)$
  over the situations $\sigma$ such that $\Xstar(\sigma)=X^*$ and $\X(\sigma)=X$.
  It is clear that $F_1$ is finite
  (in particular, $F_1(X^*,X)\le1+\Pi(\Box)(X^*-1)\le X^*$)
  and $F_1$ dominates $F$.
  Set $F_1^=(X):=F_1(X,X)$, $X\in[1,\infty)$.
  Let $F_2^=$ be the smallest concave increasing function that dominates $F_1^=$
  (in other words, $F_2^=$ is the lower envelope
  of the straight lines with positive slopes lying above the graph of $F_1^=$),
  and set $F_2(X^*,X):=F^=_2(X^*)+(F^=_2)\rd(X^*)(X-X^*)$,
  where $X^*\in[1,\infty)$ and $X\in[0,X^*]$.

  First we check that $F_2$ dominates $F_1$.
  Suppose it does not.
  There exist $X\in[0,\infty)$ and $X^*\in[1,\infty)$
  such that $X<X^*$ and the point $A:=(X,F_1(X^*,X))$
  lies strictly above the straight line $L_2$ passing through $B:=(X^*,F_2^=(X^*))$
  and having slope $(F_2^=)\rd(X^*)$.
  Let $L_1$ be the straight line passing through the points $A$ and $B$;
  the slope of $L_1$ is strictly less than the slope of $L_2$.
  Consider two cases:
  \begin{description}
  \item[The case $F_2^=(X^*)=F^=_1(X^*)$.]
    The graph of $F_2^=$ is below $L_2$;
    therefore, by the definition of $F^=_2$,
    the graph of $F_1^=$ is also below $L_2$.
    Consider two possibilities:
    \begin{itemize}
    \item
      If the graph of $F_1^=$ does not contain any points
      in the interior of the space between $L_1$ and $L_2$
      to the right of $B$,
      then the graph of $F^=_1$ is below both $L_2$ and $L_1$,
      and therefore, the graph of $F^=_2$ is below both $L_2$ and $L_1$.
      But we know that the graph of $F^=_2$ cannot be below $L_1$
      to the right of $B$.
    \item
      Suppose the graph of $F_1^=$ contains some points
      in the interior of the space between $L_1$ and $L_2$
      to the right of $B$,
      and let $C:=(X',F_1^=(X'))$ be such a point.
      Then $B$ is strictly below $[A,C]$.
      By the definition of $F_1^=$,
      there is a situation $\sigma$ such that
      $\Xstar(\sigma)=\X(\sigma)=X^*$ and the point $(X^*,\K^{\Pi}(\sigma))$
      lies strictly below the segment $[A,C]$ connecting the points
      $A=(X,F_1(X^*,X))$ and $C=(X',F_1(X',X'))$.
      It is clear that regardless of $\Pi(\sigma)$,
      in the situation $\sigma$ Market can choose the next move in such a way
      as to violate $\K^{\Pi}\ge F_1(\Xstar,\X)$.
    \end{itemize}
  \item[The case $F_2^=(X^*)>F^=_1(X^*)$.]
    We consider two possibilities:
    \begin{itemize}
    \item
      If $(F^=_2)\rd(X^*)=0$,
      the slope of $L_1$ is strictly negative, which is impossible:
      by the definition of $F_1^=$ there is a situation $\sigma$
      such that $\Xstar(\sigma)=\X(\sigma)=X^*$
      and $\K^{\Pi}(\sigma)<F_2^=(X^*)<F_1(X^*,X)$;
      since $\Pi(\sigma)\ge0$,
      Market can violate $\K^{\Pi}\ge F_1(\Xstar,\X)$ by choosing $X$
      as the next move.
    \item
      Now suppose $(F^=_2)\rd(X^*)>0$.
      Notice that the function $F_2^=$ is affine (and its graph coincides with $L_2$)
      to the right of $X^*$ in a neighbourhood of $X^*$.
      There are $X'\le X^*$ and $X''>X^*$ such that the segment $[C',C'']$,
      where $C':=(X',F^=_1(X'))$ and $C'':=(X'',F^=_1(X''))$,
      has a positive slope and lies strictly above $(X^*,F^=_1(X^*))$.
      For each $\epsilon>0$, we can choose such a segment
      $[C',C'']=[C'_{\epsilon},C''_{\epsilon}]$
      in such a way that it lies completely in the $\epsilon$-neighbourhood of $L_2$;
      and it is easy to see that the distance between $C''_{\epsilon}$ and $B$
      will stay bounded away from 0 as $\epsilon\to0$.
      This implies that $B$ will lie strictly below the segment $[A,C''_{\epsilon}]$
      for a small enough $\epsilon$.
      Therefore, $(X^*,F_1^=(X^*))$ will lie
      strictly below the segment $[A,C''_{\epsilon}]$.
      By the definition of $F_1^=$,
      there is a situation $\sigma$ such that
      $\Xstar(\sigma)=\X(\sigma)=X^*$ and the point $(X^*,\K^{\Pi}(\sigma))$
      lies strictly below the segment connecting the points
      $A=(X,F_1(X^*,X))$ and $C''_{\epsilon}=(X'',F_1(X'',X''))$,
      for some $X''>X^*$.
      Regardless of $\Pi(\sigma)$,
      in the situation $\sigma$ Market can choose the next move in such a way
      as to violate $\K^{\Pi}\ge F_1(\Xstar,\X)$.
    \end{itemize}
  \end{description}
  We can see that all possibilities lead to contradictions,
  which shows that $F_2$ indeed dominates $F_1$
  and, therefore, dominates $F$.

  \ifFULL\bluebegin
    \textbf{This is the old argument:}
    Therefore, there exists $X^{(1)}>X^*$
    such that the point $A$ 
    lies strictly above the straight line connecting the points
    $B=(X^*,F_2^=(X^*))$ and $C:=(X^{(1)},F_2^=(X^{(1)}))$.
    In other words,
    the point $B$
    lies strictly below the segment connecting the points
    $A$ and $C$.
    By the definition of $F^=_2$,
    there is $X^{(2)}>X^*$ such that the point $B$ 
    lies strictly below the segment connecting the points
    $A$ 
    and $D:=(X^{(2)},F_1^=(X^{(2)}))$.
    (Indeed, if such $X^{(2)}$ did not exist,
    the graph of $F_2^=$ to the right of $B$ would lie below
    the straight line connecting $A$ and $B$.)
    Therefore, the point $B':=(X^*,F_1^=(X^*))$
    lies strictly below the segment connecting the points
    $A$ 
    and $D$. 
    Therefore, there is a situation $\sigma$ such that
    $\Xstar(\sigma)=\X(\sigma)=X^*$ and the point $(X^*,\K^{\Pi}(\sigma))$
    lies strictly below the segment connecting the points
    $A=(X,F_1(X^*,X))$ and $D=(X^{(2)},F_1(X^{(2)},X^{(2)}))$.
    It is clear that regardless of $\Pi(\sigma)$,
    in the situation $\sigma$ Market can choose the next move in such a way
    as to violate $\K^{\Pi}\ge F_1(\Xstar,\X)$.
    This contradiction shows that indeed $F_2$ dominates $F_1$
    and, therefore, dominates $F$.
  \blueend\fi

  The function $F_2$ satisfies all properties listed in the two conditions
  in the statement of Theorem~\ref{thm:general}
  possibly except $F_2^=(1)=1$ and $(F_2^=)\rd(1)\le1$.
  It remains to prove $F_2^=(1)\le1$ and $(F_2^=)\rd(1)\le1$:
  indeed, in this case $F_2$ will be dominated by a function
  satisfying the two conditions.
  Since $F_1^=(X)\le1+\Pi(\Box)(X-1)$ for all $X\ge1$,
  we have $F_2^=(1)\le1$.
  And if $(F_2^=)\rd(1)>1$,
  we would have $F(1,0)\le F_2(1,0)=F_2^=(1)-(F^=_2)\rd(1)<0$.
  \ifFULL\bluebegin

    \textbf{This is the old proof of $(F_2^=)^r(1)\le1$:}
    Suppose $(F_2^=)^r(1)>1$.
    Then there is $X>1$ such that the slope of the segment
    connecting the points $(1,F_2^=(1))$ and $(X,F_2^=(X))$ strictly exceeds 1.
    Then there is a situation $\sigma$ with $\Xstar(\sigma)=\X(\sigma)=1$
    such that the slope of the segment
    connecting the points $(1,\K^{\Pi}(\sigma))$ and $(X,F_2^=(X))$
    strictly exceeds $\K^{\Pi}(\sigma)$,
    i.e., $F_2^=(X)>\K^{\Pi}(\sigma)X$.
    This is, however, impossible: since, for all $X\ge1$,
    $$
      F_1^=(X)
      \le
      \K^{\Pi}(\sigma) + \Pi(\sigma)(X-1)
      \le
      \K^{\Pi}(\sigma) + \K^{\Pi}(\sigma)(X-1)
      =
      \K^{\Pi}(\sigma)X,
    $$
    we have, for all $X\ge1$, 
    $
      F_2^=(X)
      \le
      \K^{\Pi}(\sigma)X
    $.
  \blueend\fi
\end{proof}

\begin{lemma}\label{lem:not-comparable}
  If positive functions $F_1(X^*,X)$ and $F_2(X^*,X)$,
  $X^*\in[1,\infty)$, $X\in[0,X^*]$,
  satisfy the two conditions in the statement of Theorem~\ref{thm:general}
  and $F_1\le F_2$,
  then $F_1=F_2$.
\end{lemma}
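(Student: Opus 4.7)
The plan is to reduce the two-variable comparison to a one-dimensional problem about the spines. Set $f_i := F_i^=$ for $i=1,2$. By the two conditions of Theorem~\ref{thm:general}, each $f_i$ is concave, increasing, positive on $[1,\infty)$ with $f_i(1)=1$ and $f_i\rd(1)\le 1$, and each $F_i$ is determined by its spine through $F_i(X^*,X) = f_i(X^*) + f_i\rd(X^*)(X-X^*)$. So it suffices to prove $f_1 = f_2$.

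I would extract two consequences of $F_1\le F_2$. Taking $X=X^*$ gives $f_1\le f_2$; set $h:=f_2-f_1\ge 0$, noting $h(1)=0$. Taking $X=0$ and rearranging yields the pointwise inequality
\[
  X^* h\rd(X^*) \le h(X^*), \qquad X^*\in[1,\infty),
\]
where I use that $h\rd(x)=f_2\rd(x)-f_1\rd(x)$ exists everywhere on $[1,\infty)$ since both $f_i$ are concave.

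Now I would study $g(x):=h(x)/x$ on $[1,\infty)$. Each $f_i$ is concave with $f_i\rd(1)\le 1<\infty$, so $f_i$ is Lipschitz (with constant at most $f_i\rd(1)$) on every $[1,M]$; hence $h$, and therefore $g$, is absolutely continuous on compact subintervals. At each point of differentiability of $h$ (which is a.e.), the displayed inequality gives $g'(x)=(xh'(x)-h(x))/x^2\le 0$. Absolute continuity then upgrades this to: $g$ is non-increasing on $[1,\infty)$. Combined with $g(1)=0$ and $g\ge 0$, this forces $g\equiv 0$, hence $h\equiv 0$, hence $f_1=f_2$, and finally $F_1=F_2$ via the linear formula above.

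The main obstacle is the step from the pointwise inequality for the right derivative to monotonicity of the ratio $h/x$; one needs enough regularity of $h$ to invoke the fundamental theorem of calculus, and this is supplied (almost for free) by the Lipschitz property of concave functions with a finite right derivative at the boundary. Everything else is bookkeeping around the fact that both the spine and its right derivative show up in the reconstruction formula.
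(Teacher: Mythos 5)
Your proof is correct. It shares with the paper's proof the same essential reduction: both arguments pass to the spines via the linear reconstruction formula $F_i(X^*,X)=F_i^=(X^*)+F_i^=\rd(X^*)(X-X^*)$, and both ultimately exploit the positivity of $F_2-F_1$ at $X=0$, which in terms of $h:=F_2^=-F_1^=$ is exactly your inequality $X^*h\rd(X^*)\le h(X^*)$. Where you genuinely differ is in how this one-dimensional fact is converted into $h\equiv0$. The paper argues by contradiction: assuming $h(X)>0$ for some $X$, it picks a point $X^*\in[1,X]$ at which $h\rd$ comes within $\epsilon$ of its supremum over $[1,X]$, uses absolute continuity to bound $h(X^*)\le(X^*-1)(h\rd(X^*)+\epsilon)$, and concludes that $F_2(X^*,0)-F_1(X^*,0)$ is strictly negative for $\epsilon$ small enough. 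You instead integrate the differential inequality directly, showing that $h(x)/x$ is non-increasing, hence identically zero since it vanishes at $x=1$ and is positive. Your route trades the paper's near-maximizer and its $\epsilon$-bookkeeping for a Gronwall-type monotone-ratio argument; the regularity input (local Lipschitzness of a concave function with finite right derivative at the left endpoint, hence absolute continuity) is the same in both. Both are equally elementary, and your finishing step is arguably the more transparent of the two.
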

\begin{proof}
  %
  Suppose $F_1$ and $F_2$ satisfy the conditions in the statement of the lemma
  but $F_1\ne F_2$.
  Since the functions satisfying the two conditions
  in 
  Theorem~\ref{thm:general}
  are determined by their spines,
  $F_1^=$ and $F_2^=$ must be different.
  Set $F(X^*,X):=F_2(X^*,X)-F_1(X^*,X)\ge0$
  and $F^=(X):=F(X,X)\ge0$.
  Suppose $F^=(X)>0$ for some $X\in[1,\infty)$;
  we fix such $X$ and will arrive at a contradiction.
  Let $X^*$ be a point in $[1,X]$ with the highest value of $F^=\rd$
  to within a small $\epsilon>0$;
  in particular, $F^=\rd(X^*)>0$.
  Since $F^=$ is absolutely continuous, we have:
  \begin{multline*}
    F^=(X^*)
    =
    \int_{[1,X^*]}
    F^=\rd(x)
    \dd x\\
    \le
    \int_{[1,X^*]}
    (F^=\rd(X^*)+\epsilon)
    \dd x
    =
    (X^*-1)(F^=\rd(X^*)+\epsilon).
  \end{multline*}
  Since
  \begin{align*}
    F(X^*,X)
    &=
    F(X^*,X^*)
    +
    F^=\rd(X^*)
    (X-X^*)\\
    &\le
    (X^*-1)(F^=\rd(X^*)+\epsilon)
    +
    F^=\rd(X^*)
    (X-X^*)\\
    &=
    -F^=\rd(X^*)+(X^*-1)\epsilon+F\rd^=(X^*)X,
  \end{align*}
  $F(X^*,0)$ will be strictly negative for $\epsilon$ small enough;
  this contradicts our assumption $F_1\le F_2$.
\end{proof}

\ifFULL\bluebegin
  \begin{corollary}\label{cor:admissible}
    If a function $F(X^*,X)$, $X^*\in[1,\infty)$, $X\in[0,X^*]$,
    satisfies the two conditions in the statement of Theorem~\ref{thm:general},
    it is an ALA.
  \end{corollary}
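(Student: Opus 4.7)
The plan is to combine the three preceding lemmas in a short logical chain. Lemma~\ref{lem:direct} already tells us that any $F$ satisfying the two conditions of Theorem~\ref{thm:general} is an LA, so the only remaining task is to show that such an $F$ is not strictly dominated by any other LA.

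I will argue this by contradiction. Suppose $F$ satisfies the two conditions but is strictly dominated by some LA $G$; in particular $F\le G$ everywhere and $F(X^{*}_{0},X_{0})<G(X^{*}_{0},X_{0})$ at some admissible pair $(X^{*}_{0},X_{0})$. By Lemma~\ref{lem:dominated}, $G$ is itself dominated by a function $H$ that satisfies the two conditions of Theorem~\ref{thm:general}. Chaining the inequalities, $F\le G\le H$, and at $(X^{*}_{0},X_{0})$ we have $F(X^{*}_{0},X_{0})<G(X^{*}_{0},X_{0})\le H(X^{*}_{0},X_{0})$, so $F\ne H$.

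But now both $F$ and $H$ satisfy the two conditions of Theorem~\ref{thm:general} and $F\le H$, so Lemma~\ref{lem:not-comparable} forces $F=H$, contradicting $F\ne H$. This contradiction shows no such $G$ exists, i.e., $F$ is admissible, hence an ALA.

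There is essentially no obstacle here beyond bookkeeping: the three lemmas fit together cleanly, with Lemma~\ref{lem:direct} supplying existence of a witnessing strategy, Lemma~\ref{lem:dominated} reducing an arbitrary dominator to the ``canonical form'' specified by the two conditions, and Lemma~\ref{lem:not-comparable} functioning as a rigidity/uniqueness statement within that form. The only point worth checking carefully is that strict domination at a single point $(X^{*}_{0},X_{0})$ survives the passage from $G$ to $H$, which it does because $G\le H$ pointwise.
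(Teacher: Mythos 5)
Your proof is correct and follows essentially the same route as the paper's: Lemma~\ref{lem:direct} for the LA property, Lemma~\ref{lem:dominated} to replace an arbitrary strict dominator by one in canonical form, and Lemma~\ref{lem:not-comparable} as the rigidity step. You have merely spelled out the contradiction that the paper leaves implicit.
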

  \begin{proof}
    By Lemma~\ref{lem:direct}, $F$ is an LA.
    By Lemma~\ref{lem:dominated},
    it suffices to check that $F$ is not strictly dominated
    by functions satisfying the two conditions
    in the statement of Theorem~\ref{thm:general}.
    It remains to apply Lemma~\ref{lem:not-comparable}.
  \end{proof}

  \begin{corollary}\label{cor:dominated}
    Every LA is dominated by an ALA.
  \end{corollary}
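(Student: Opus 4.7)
The plan is to chain together the lemmas already established. Given any LA $F$, I first invoke Lemma~\ref{lem:dominated} to produce a function $F'$ that dominates $F$ and satisfies the two conditions listed in the statement of Theorem~\ref{thm:general}. The conclusion will then follow once I verify that $F'$ is itself an ALA.

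To see that $F'$ is an ALA, I would argue as in Corollary~\ref{cor:admissible}: by Lemma~\ref{lem:direct}, $F'$ is an LA, so it remains only to rule out strict domination by another LA. Suppose, for contradiction, that some LA $G$ strictly dominates $F'$. Applying Lemma~\ref{lem:dominated} to $G$, we obtain a function $G'$ satisfying the two conditions of Theorem~\ref{thm:general} with $G'\ge G\ge F'$ pointwise, and with strict inequality $G'(X^*,X)\ge G(X^*,X)>F'(X^*,X)$ at some point. But then both $F'$ and $G'$ satisfy the two conditions and $F'\le G'$, so Lemma~\ref{lem:not-comparable} forces $F'=G'$, contradicting the strict inequality. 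Hence no such $G$ exists, and $F'$ is an ALA.

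All the real work has already been done in Lemma~\ref{lem:dominated} (the construction of a dominating concave-spine function) and Lemma~\ref{lem:not-comparable} (rigidity of the conditions under pointwise domination), so no genuine obstacle remains here; the argument is a short assembly of existing pieces. The only small subtlety to bear in mind is that one must apply Lemma~\ref{lem:dominated} a second time to the hypothetical strict dominator $G$ in order to place the comparison inside the class covered by Lemma~\ref{lem:not-comparable}; this is what makes the contradiction go through cleanly.
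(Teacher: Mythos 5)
Your proof is correct and follows essentially the same route as the paper: the paper also obtains the corollary by combining Lemma~\ref{lem:dominated} with the fact (its Corollary~\ref{cor:admissible}) that any function satisfying the two conditions of Theorem~\ref{thm:general} is an ALA, the latter being proved exactly as you do --- via Lemma~\ref{lem:direct}, a second application of Lemma~\ref{lem:dominated} to the hypothetical strict dominator, and the rigidity Lemma~\ref{lem:not-comparable}. No differences worth noting.
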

  \begin{proof}
    Combine Lemma~\ref{lem:dominated} and Corollary~\ref{cor:admissible}.
  \end{proof}

  Theorem \ref{thm:general} is now proved
  as it is the combination of Corollaries \ref{cor:dominated} and \ref{cor:admissible}.
\blueend\fi

\begin{proof}[Proof of Theorem \ref{thm:general}]
  In view of Lemma~\ref{lem:dominated},
  it suffices to prove that any ALA satisfies the two conditions
  in the statement of the theorem
  and that any function satisfying the two conditions
  is an ALA.

  Suppose $F$ is an ALA.
  By Lemmas \ref{lem:dominated} and \ref{lem:direct},
  it is dominated by an LA $F'$ satisfying the two conditions.
  By admissibility, $F=F'$.

  Suppose a function $F$ satisfies the two conditions.
  By Lemma~\ref{lem:direct}, $F$ is an LA.
  By Lemma~\ref{lem:dominated},
  it suffices to check that $F$ is not strictly dominated
  by a function satisfying the two conditions.
  It remains to apply Lemma~\ref{lem:not-comparable}.
\end{proof}

\section{Various connections}
\label{sec:connections}

Figure~\ref{fig:frame} provides a visual frame
for the relationships we discuss in this section
and elsewhere in this article.
ALAs are characterized by the two conditions
in Theorem \ref{thm:general}.
By a ``scaled ASLA'' we mean a function of the form $cF$,
where $c\in[0,1]$ and $F$ is an ASLA;
more fully, such functions may be called \emph{scaled down ASLAs}.
These are increasing right-continuous functions $F$
satisfying (\ref{eq:SLA}).
A \emph{spine} is a function that can be represented
as the spine of some ALA;
such functions are characterized by the first condition
in Theorem~\ref{thm:general}.
A ``measure'' stands for a probability measure on $[0,\infty]$.
We can see that the notions in all four vertices
of the square in Figure~\ref{fig:frame}
have simple analytic characterizations.

\begin{figure}
  \begin{center}
    \input{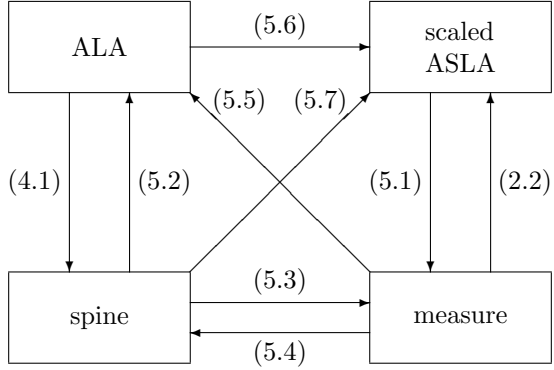}
  \end{center}
\caption{Some relationships between ALAs
  (functions satisfying the two conditions in Theorem~\ref{thm:general}),
  spines
  (concave increasing functions $F:[1,\infty)\to[0,\infty)$
  such that $F(1)=1$ and $F\rd(1)\le1$),
  probability measures on $[1,\infty]$,
  and scaled down ASLAs
  (right-continuous increasing functions $F:[1,\infty)\to[0,\infty)$
  satisfying $\int_1^{\infty}F(y)y^{-2}\dd y\le1$).\label{fig:frame}}
\end{figure}

The arrows in Figure~\ref{fig:frame} represent
various connections between the four notions;
they are labelled by the equations expressing those connections.
Each of the four sides of the square in Figure~\ref{fig:frame}
represents a bijective mapping between the sets of objects
in the adjacent vertices of the square.
The first such bijective mapping was introduced in Section~\ref{sec:capital-1};
it corresponds to the right side of the square.
Given a probability measure $P$ on $[1,\infty]$,
we define the corresponding scaled ASLA $F$ by (\ref{eq:measure2SLA}).
As can be seen from the proof of Lemma~\ref{lem:ASLA},
$P$ is uniquely determined by $F$,
and the expression of the restriction of $P$ to $[1,\infty)$
in terms of $F(X)$ is given there as
\begin{equation}\label{eq:SLA2measure}
  Q([1,y]) := F(y),
  \enspace
  y\in[1,\infty);
  \quad
  P(\dd u) := (1/u) Q(\dd u);
\end{equation}
$P(\{\infty\})$ is then determined uniquely as $1-P([1,\infty))$.

Another easy side of the square is the left one,
considered in Section~\ref{sec:capital-3}.
The spine $F^=$ is just the diagonal (\ref{eq:ALA2spine})
of the corresponding ALA $F$.
According to the second condition in Theorem~\ref{thm:general},
the expression of an ALA $F$ via its spine $F^=$ is
\begin{equation}\label{eq:spine2ALA}
  F(X^*,X)
  =
  F^=(X^*)
  +
  F^=\rd(X^*)
  (X-X^*).
\end{equation}

Next we consider the bottom side of the square.
The following lemma establishes a bijection
between the spines and the probability measures on $[1,\infty]$;
it uses (in the definition (\ref{eq:spine2measure}))
the obvious right-continuity of $F^=\rd$ for a spine $F^=$.
\begin{lemma}\label{lem:bottom-side}
  Let $F^=$ be a spine.
  Define a probability measure $P$ on $[1,\infty]$ by
  setting
  \begin{equation}\label{eq:spine2measure}
    P((X,\infty])
    :=
    F^=\rd(X),
    \quad
    X\in[1,\infty).
  \end{equation}
  Then
  \begin{equation}\label{eq:measure2spine}
    F^=(X)
    =
    \int_{[1,X]}u P(\dd u)
    +
    X
    P((X,\infty])
  \end{equation}
  for all $X\in[1,\infty)$.
  Vice versa,
  if $P$ is a probability measure on $[1,\infty]$,
  the function $F^=$ defined by (\ref{eq:measure2spine}) is a spine
  and satisfies (\ref{eq:spine2measure}).
\end{lemma}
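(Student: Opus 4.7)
The plan is to establish the bijection by working through the mixture-of-kinks identity
\[
  \int_{[1,\infty]}\min(u,X)\,P(\dd u)
  \;=\;
  \int_{[1,X]} u\,P(\dd u) + X\,P((X,\infty]),
\]
(using the convention $\min(\infty,X)=X$) which is the right-hand side of (\ref{eq:measure2spine}) rewritten. For each fixed $u$, the function $X\mapsto\min(u,X)$ is concave, increasing, equal to $1$ at $X=1$, and has right derivative $\III_{\{u>X\}}$ in $X$. All the spine properties and the derivative identity (\ref{eq:spine2measure}) will fall out of this observation by integrating in $u$.

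Start with the direction ``$P \Rightarrow F^=$''. Define $F^=$ by (\ref{eq:measure2spine}). It is a mixture of concave, increasing functions of $X$, hence itself concave and increasing. Plugging $X=1$ in gives $F^=(1)=P([1,\infty])=1$ because $\min(u,1)=1$ for all $u\in[1,\infty]$. To obtain (\ref{eq:spine2measure}), I would compute the right increment directly:
\[
  F^=(X+h)-F^=(X)
  \;=\;
  \int_{(X,X+h]}(u-X)\,P(\dd u)
  \;+\;
  h\,P((X+h,\infty]),
\]
as one sees by splitting the integrals over $[1,X]$, $(X,X+h]$, $(X+h,\infty]$. Dividing by $h>0$ and letting $h\downarrow0$, the first term is bounded by $P((X,X+h])\to 0$, while $P((X+h,\infty])\to P((X,\infty])$ by continuity of $P$ from below along $(X+h,\infty]\uparrow(X,\infty]$. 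Hence $F^=\rd(X)=P((X,\infty])$. The bound $F^=\rd(1)=P((1,\infty])\le 1$ is immediate, so $F^=$ is a spine.

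For the reverse direction ``$F^= \Rightarrow P$'', given a spine $F^=$, the right derivative $F^=\rd$ is positive (as $F^=$ is increasing), decreasing (concavity), and right-continuous; hence (\ref{eq:spine2measure}) defines the tails of a Borel measure $P$ on $[1,\infty]$. Putting $P(\{1\}):=1-F^=\rd(1)\ge 0$ (using $F^=\rd(1)\le 1$) and $P(\{\infty\}):=\lim_{X\to\infty}F^=\rd(X)\ge 0$, one checks $P$ is a probability measure by telescoping. To recover (\ref{eq:measure2spine}), apply Fubini to the identity $F^=(X)=1+\int_1^X F^=\rd(Y)\,\dd Y=1+\int_1^X P((Y,\infty])\,\dd Y$:
\[
  \int_1^X P((Y,\infty])\,\dd Y
  \;=\;
  \int_{[1,\infty]} \bigl(\min(u,X)-1\bigr) P(\dd u),
\]
which, together with $\int P(\dd u)=1$, yields precisely (\ref{eq:measure2spine}).

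The routine book-keeping will be in the Fubini step and in the careful handling of atoms of $P$ at $1$ and at $\infty$ (in particular ensuring that the endpoint conventions for $[1,X]$ versus $(X,\infty]$ match up). The main conceptual obstacle is simply spotting the representation $F^=(X)=\int\min(u,X)\,P(\dd u)$; once this is in hand, both directions, the probability-measure property of $P$, and the derivative identity (\ref{eq:spine2measure}) are immediate consequences.
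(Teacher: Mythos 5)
Your proof is correct, but it takes a genuinely different route from the paper's. The paper's proof rests on integration by parts for the Lebesgue--Stieltjes integral (Folland, Theorem~3.36), applied to $\int_{(1,X]}u\,P(\dd u)=-\int_{(1,X]}u\,\dd F^=\rd(u)$, which yields (\ref{eq:measure2spine}) in one computation; for the converse it defines $F^=(X):=1+\int_{[1,X]}P((x,\infty])\,\dd x$ and reuses the same computation, asserting the spine properties without detailed verification. You instead organize everything around the representation $F^=(X)=\int_{[1,\infty]}\min(u,X)\,P(\dd u)$, replacing integration by parts with Tonelli applied to $\int_1^X P((Y,\infty])\,\dd Y$ and with a direct estimate of the difference quotient $\bigl(F^=(X+h)-F^=(X)\bigr)/h=\frac1h\int_{(X,X+h]}(u-X)\,P(\dd u)+P((X+h,\infty])$ to read off $F^=\rd(X)=P((X,\infty])$. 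What your approach buys: the spine properties (concavity, monotonicity, $F^=(1)=1$) come for free as properties inherited from the kink functions $X\mapsto\min(u,X)$, which the paper leaves as ``easy to see''; it also makes explicit that (\ref{eq:spine2measure}) genuinely determines a probability measure (via the right-continuity and monotonicity of $F^=\rd$ and the atoms $1-F^=\rd(1)$ at $1$ and $\lim_{X\to\infty}F^=\rd(X)$ at $\infty$), a point the paper's statement takes for granted; and it connects the lemma to the mixture-of-stopping-strategies picture of Section~\ref{sec:capital-1} and to (\ref{eq:measure2SLA}). The only steps worth spelling out in a final write-up are the absolute continuity of $F^=$ on $[1,\infty)$ (it is Lipschitz with constant $F^=\rd(1)\le1$, justifying $F^=(X)=1+\int_1^X F^=\rd(Y)\,\dd Y$) and the continuity-from-above argument $P((X,X+h])\to0$, both of which you have correctly identified.
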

\begin{proof}
  Let $F^=$ be a spine
  and a probability measure $P$ on $[1,\infty]$
  be defined by (\ref{eq:spine2measure}).
  Using integration by parts for the Lebesgue--Stiltjes integral
  (see, e.g., \cite{folland:1999}, Theorem~3.36),
  we obtain:
  \begin{multline*}
    \int_{[1,X]} u P(\dd u)
    =
    P(\{1\})
    +
    \int_{(1,X]} u P(\dd u)
    =
    1 - F^=\rd(1)
    -
    \int_{(1,X]} u \dd F\rd^=(u)\\
    =
    1 - F^=\rd(1)
    -
    XF^=\rd(X)
    +
    F^=\rd(1)
    +
    \int_{(1,X]} F\rd^=(u) \dd u\\
    =
    1
    -
    XF^=\rd(X)
    +
    F^=(X) - F^=(1)
    =
    F^=(X) - X P((X,\infty]).
  \end{multline*}
  The equality between the two extreme terms of this chain
  is equivalent to (\ref{eq:measure2spine}).

  We can see that the relations (\ref{eq:spine2measure}) and (\ref{eq:measure2spine})
  establish a bijection between the spines
  and a subset of probability measures on $[1,\infty]$.
  Now let $P$ be any probability measure on $[1,\infty]$
  and define $F^=:[1,\infty)\to[0,\infty)$ by $F^=(1):=1$
  and the equality
  $
    F^=\rd(X)
    =
    P((X,\infty])
  $,
  $X\in[1,\infty)$
  (cf.\ (\ref{eq:spine2measure})).
  Namely, set $F^=(X):=1+\int_{[1,X]}f(x)\dd x$,
  where $f:[1,\infty)\to[0,\infty)$ is the right-continuous decreasing function
  defined by $f(x):=P((x,\infty])$.
  It is easy to see that $F^=$ is a spine,
  and the argument of the previous paragraph shows
  that it satisfies (\ref{eq:measure2spine})
  (which can be taken as the definition of $F^=$).
  This completes the proof that (\ref{eq:spine2measure}) and (\ref{eq:measure2spine})
  establish a bijection between the spines
  and the probability measures on $[0,\infty]$.
  \ifFULL\bluebegin

    The following is a brute-force version
    of the argument in the previous paragraph.
    Let $P$ be any probability measure on $[1,\infty]$
    and define $F^=$ by (\ref{eq:measure2spine}).
    First we check that (\ref{eq:spine2measure}) still holds.
    Set $f(X):=P((X,\infty])$, $X\in[1,\infty)$.
    Using the same formula of integration by parts as before
    (\cite{folland:1999}, Theorem~3.36),
    we obtain:
    \begin{align*}
      F^=(X)
      &=
      P(\{1\})
      -
      \int_{(1,X]} u \dd f(u)
      +
      X f(X)\\
      &=
      P(\{1\})
      -
      X f(X)
      +
      f(1)
      +
      \int_{(1,X]} f(u) \dd u
      +
      X f(X)\\
      &=
      1 + \int_{(1,X]} f(u) \dd u.
    \end{align*}
    Differentiating both sides on the right,
    we obtain $F^=\rd(X)=f(X)$.
    To complete the proof,
    we now check that the function $F^=$ defined by (\ref{eq:measure2spine})
    is a spine:
    \begin{itemize}
    \item
      it is increasing since $X^{(2)}>X^{(1)}\ge1$ implies
      \begin{align*}
        F^=(X^{(2)})
        &=
        \int_{[1,X^{(2)}]} u P(\dd u)
        +
        X^{(2)} P((X^{(2)},\infty])\\
        &=
        \int_{[1,X^{(1)}]} u P(\dd u)
        +
        \int_{(X^{(1)},X^{(2)}]} u P(\dd u)\\
        &\quad{}+
        X^{(1)} P((X^{(2)},\infty])
        +
        (X^{(2)}-X^{(1)}) P((X^{(2)},\infty])\\
        &\ge
        \int_{[1,X^{(1)}]} u P(\dd u)
        +
        X^{(1)} P((X^{(1)},X^{(2)}])\\
        &\quad{}+
        X^{(1)} P((X^{(2)},\infty])
        +
        (X^{(2)}-X^{(1)}) P((X^{(2)},\infty])\\
        &\ge
        \int_{[1,X^{(1)}]} u P(\dd u)
        +
        X^{(1)} P((X^{(1)},\infty])
        =
        F^=(X^{(1)});
      \end{align*}
    \item
      it is concave since, for all $X>1$ and $\Delta\in[0,X-1]$,
      \begin{align*}
        \frac12(F^=(X-\Delta)&+F^=(X+\Delta))
        -
        F^=(X)\\
        &=
        \frac12\int_{[1,X-\Delta]}uP(\dd u)
        +
        \frac12(X-\Delta)P((X-\Delta,\infty])\\
        &\quad+
        \frac12\int_{[1,X+\Delta]}uP(\dd u)
        +
        \frac12(X+\Delta)P((X+\Delta,\infty])\\
        &\quad-
        \int_{[1,X]}uP(\dd u)
        -
        X P((X,\infty])\\
        &=
        -\frac12\int_{(X-\Delta,X]}uP(\dd u)
        +
        \frac12(X-\Delta)P((X-\Delta,X])\\
        &\quad+
        \frac12\int_{(X,X+\Delta]}uP(\dd u)
        -
        \frac12(X+\Delta)P((X,X+\Delta])\\
        &\le
        0
      \end{align*}
      (we have been using the definition of concavity from \cite{hardy/etal:1952},
      (3.5.1);
      its equivalence to the standard definition for positive functions
      is shown in \cite{hardy/etal:1952}, Theorems 111 and 86);
    \item
      $F^=(1)=1$ can be rewritten as $P(\{1\})+P((1,\infty])=1$,
      i.e., as $P([1,\infty])=1$, which is correct;
    \item
      $F^=\rd(1)\le1$ follows from the equality (\ref{eq:spine2measure}).
      \qedhere
    \end{itemize}
  \blueend\fi
\end{proof}

We have established the three bijections corresponding
to the right, left, and bottom sides of the square
in Figure~\ref{fig:frame}.
That figure also contains three shortcuts:
the top side and the diagonals of the square;
these are compositions of bijections
and so are bijections themselves.
(This structure of the diagram,
three basic bijections and three shortcuts,
makes sure that it ``commutes'',
in the terminology of category theory.)

First,
combining (\ref{eq:spine2ALA}), (\ref{eq:measure2spine}), and (\ref{eq:spine2measure}),
we obtain an expression of an ALA $F$
in terms of the corresponding measure $P$ on $[1,\infty]$:
\begin{align}
  F(X^*,X)
  &=
  F^=(X^*)
  +
  F^=\rd(X^*)
  (X-X^*)\notag\\
  &=
  \int_{[1,X^*]}u P(\dd u)
  +
  X^*
  P((X^*,\infty])
  +
  P((X^*,\infty])
  (X-X^*)\notag\\
  &=
  \int_{[1,X^*]}u P(\dd u)
  +
  X
  P((X^*,\infty])
  \label{eq:measure2ALA}
\end{align}
(cf.\ (\ref{eq:stronger-guarantee}) and (\ref{eq:measure2SLA})).

Second,
since the scaled ASLA corresponding to a probability measure $P$ on $[1,\infty]$
is (\ref{eq:measure2SLA})
and the ALA corresponding to $P$ is (\ref{eq:measure2ALA}),
we can see that the composition of
(\ref{eq:ALA2spine}), (\ref{eq:spine2measure}), and (\ref{eq:measure2SLA})
is the function
\begin{equation}\label{eq:ALA2SLA}
  F'(X^*):=F(X^*,0),
  \quad
  X^*\in[1,\infty),
\end{equation}
mapping each ALA $F$ to the corresponding scaled ASLA $F'$.

Third,
combining (\ref{eq:ALA2SLA}) and (\ref{eq:spine2ALA}),
we obtain an expression of the scaled ASLA in terms of the spine:
\begin{equation}\label{eq:spine2SLA}
  F'(X^*)
  =
  F^=(X^*)-F^=\rd(X^*)X^*;
\end{equation}
we can see that $F'(X)$ as a function of $-F^=\rd(X)$
is, essentially, the Legendre transformation of $-F^=(X)$.
\ifFULL\bluebegin
  It is interesting that the function $F'(X)$,
  which is not necessarily convex,
  becomes convex when expressed as a function of $F^=\rd(X)$.

  Can we prove directly that
  $\int_{[0,y]}uP(\dd u)$ as a function of $P(y,\infty])$
  is the Legendre transformation of
  $\int_{[0,y]}uP(\dd u)+P((y,\infty])y$?
\blueend\fi

The argument leading to (\ref{eq:ALA2SLA}) is important enough
to state its conclusion formally:
\begin{corollary}\label{cor:ALA-ASLA}
  Suppose $F(X^*,X)$ is an ALA.
  Then $F(X^*):=F(X^*,0)$ is a scaled ASLA.
  If, furthermore, $F^=\rd(\infty)=0$,
  $F(X^*)$ is an ASLA.
  Vice versa,
  if $F(X^*)$ is a scaled ASLA,
  there exists a unique ALA $F(X^*,X)$
  such that $F(X^*)=F(X^*,0)$ for all $X^*$.
  If, furthermore, $F(X^*)$ is an ASLA,
  this ALA $F(X^*,X)$ will satisfy $F\rd^=(\infty)=0$.
\end{corollary}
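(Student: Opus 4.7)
My plan is to read off the corollary from the commuting diagram in Figure~\ref{fig:frame} and then track the extra condition $F^=\rd(\infty)=0$ through the bijections already established. The top side of the square, namely the map $F(X^*,X)\mapsto F(X^*,0)$, has just been identified in (\ref{eq:ALA2SLA}) as the composition of the three sides (\ref{eq:ALA2spine}), (\ref{eq:spine2measure}), and (\ref{eq:measure2SLA}). Each of these is a bijection (by Theorem~\ref{thm:general}, Lemma~\ref{lem:bottom-side}, and Lemma~\ref{lem:ASLA}, respectively), so their composition is a bijection between the set of ALAs and the set of scaled ASLAs. This immediately gives the two structural assertions of the corollary: if $F(X^*,X)$ is an ALA then $F(X^*,0)$ is a scaled ASLA, and conversely every scaled ASLA arises in this way from a unique ALA.

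It then remains to handle the two ``furthermore'' clauses, which together amount to the single equivalence ``$F(X^*,0)$ is an ASLA iff $F^=\rd(\infty)=0$''. I would prove this by tracking the stronger condition through the intermediate measure $P$. By Proposition~\ref{prop:SLA}, $F(X^*,0)$ is an ASLA precisely when (\ref{eq:ASLA}) holds with equality, and the computation (\ref{eq:integral}) inside the proof of Lemma~\ref{lem:ASLA} shows that, for the scaled ASLA corresponding to a probability measure $P$ on $[1,\infty]$, the equality version of (\ref{eq:ASLA}) is equivalent to $P$ being concentrated on $[1,\infty)$, i.e., $P(\{\infty\})=0$.

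The last step is to translate $P(\{\infty\})=0$ back into a condition on the spine $F^=$. By (\ref{eq:spine2measure}) we have $F^=\rd(X^*)=P((X^*,\infty])$, and since the sets $(X^*,\infty]$ decrease to $\{\infty\}$ as $X^*\to\infty$, downward continuity of the finite measure $P$ gives $F^=\rd(\infty)=\lim_{X^*\to\infty}P((X^*,\infty])=P(\{\infty\})$. Hence $P(\{\infty\})=0$ is equivalent to $F^=\rd(\infty)=0$, which closes the equivalence in both directions. I do not foresee any real obstacle here: the whole corollary is a bookkeeping exercise once the three bijections of Figure~\ref{fig:frame} are in place, and the only mildly delicate point is the limit identity just mentioned, which is a routine application of continuity of measures from above.
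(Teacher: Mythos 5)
Your proof is correct and follows essentially the same route as the paper, which presents this corollary as the formal conclusion of the argument identifying the map $F(X^*,X)\mapsto F(X^*,0)$ in (\ref{eq:ALA2SLA}) as the composition of the three bijections of Figure~\ref{fig:frame}. The only cosmetic difference is that you verify the equivalence between the ASLA property and $F^=\rd(\infty)=0$ by tracking $P(\{\infty\})$ through the measure vertex, whereas the paper's accompanying Remark checks the same thing by the direct computation $\int_1^\infty F(y)y^{-2}\dd y=1-\lim_{y\to\infty}F^=\rd(y)$; both are sound.
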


\begin{remark}
  Let us check analytically the first statement in Corollary~\ref{cor:ALA-ASLA}:
  if $F(X^*,X)$ satisfies the two conditions
  in 
  Theorem~\ref{thm:general},
  then $F(X^*):=F(X^*,0)$ satisfies (\ref{eq:SLA}),
  and if, furthermore, $F^=\rd(\infty)=0$,
  then $F(X^*)$ satisfies (\ref{eq:ASLA}).
  Since
  $$
    (F^=(y)y^{-1})\rd
    =
    F^=\rd(y)y^{-1}
    -
    F^=(y)y^{-2}
    =
    -\frac{F(y,0)}{y^2}
    =
    -\frac{F(y)}{y^2},
  $$
  the absolute continuity of the function $F^=(y)y^{-1}$ over $[1,\infty)$
  \ifFULL\bluebegin
    (see, e.g., \cite{folland:1999}, Exercise 3.5 on p.~108)
  \blueend\fi
  gives
  \begin{multline*}
    \int_1^{\infty}
    \frac{F(y)}{y^2}
    \dd y
    =
    -\left[F^=(y)y^{-1}\right]_{y=1}^{\infty}
    =
    F^=(1) - \lim_{y\to\infty}\frac{F^=(y)}{y}\\
    =
    1 - \lim_{y\to\infty}F^=\rd(y)
    \le
    1,
  \end{multline*}
  and ``${}\le1$'' becomes ``${}=1$'' when $F^=\rd(\infty)=0$.
\end{remark}

\ifFULL\bluebegin
  The following lemma presents an expression of a scaled ASLA
  in terms of the corresponding probability measure on $[1,\infty]$.
  (This lemma has become redundant at some point;
  but its proof contains the intuition behind the formal proof.)
  \begin{lemma}\label{lem:shadow}
    Let $F$ be an ALA.
    Define a probability measure $P$ on $[1,\infty]$ by
    setting (\ref{eq:spine2measure}),
    \begin{equation*}
      P((X^*,\infty])
      :=
      F^=\rd(X^*),
      \quad
      X^*\in[1,\infty).
    \end{equation*}
    Then
    \begin{equation*}
      F(X^*,0)
      =
      \int_{[1,X^*]}u P(\dd u)
    \end{equation*}
    (cf.\ (\ref{eq:measure2SLA}))
    for all $X^*\in[1,\infty)$.
  \end{lemma}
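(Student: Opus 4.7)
My plan is to observe that Lemma~\ref{lem:shadow} is essentially an immediate specialization of identity (\ref{eq:measure2ALA}) already derived in the paper. Indeed, (\ref{eq:measure2ALA}) expresses any ALA in terms of the probability measure $P$ associated with its spine via (\ref{eq:spine2measure}) as
\[
F(X^*,X) = \int_{[1,X^*]} u \, P(\dd u) + X \, P((X^*,\infty]).
\]
Setting $X=0$ kills the second term and yields the claim. So the proof is a one-line substitution once (\ref{eq:measure2ALA}) is in hand.

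If I instead want a self-contained argument not invoking (\ref{eq:measure2ALA}), I would proceed as follows. First, use the second condition in Theorem~\ref{thm:general} to express $F(X^*,X)$ through its spine as in (\ref{eq:spine2ALA}); specializing to $X=0$ gives
\[
F(X^*,0) = F^=(X^*) - F^=\rd(X^*)\, X^*.
\]
Next, apply Lemma~\ref{lem:bottom-side}, specifically equation (\ref{eq:measure2spine}), to rewrite
\[
F^=(X^*) = \int_{[1,X^*]} u \, P(\dd u) + X^*\, P((X^*,\infty]).
\]
Finally, substitute and use the defining relation $P((X^*,\infty]) = F^=\rd(X^*)$ to cancel the two terms involving $X^*$, leaving
\[
F(X^*,0) = \int_{[1,X^*]} u \, P(\dd u),
\]
as required.

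There is no real obstacle: the only thing to be careful about is that the cancellation relies on the consistency between the measure $P$ defined by (\ref{eq:spine2measure}) and the spine $F^=$, which is exactly what Lemma~\ref{lem:bottom-side} provides. Since Lemma~\ref{lem:bottom-side} and Theorem~\ref{thm:general} are both available at this point, the proof is a short composition of those two results.
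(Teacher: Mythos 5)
Your proof is correct, but it is organized differently from the paper's. The paper proves Lemma~\ref{lem:shadow} by a direct, self-contained computation: it writes $F(X^*,0)=F^=(X^*)-X^*F^=\rd(X^*)$ and then evaluates $\int_{[1,X^*]}uP(\dd u)$ from scratch by integration by parts for the Lebesgue--Stieltjes integral (Folland, Theorem~3.36), after first giving an informal ``differential'' derivation ($\dd F(X^*,0)\approx -X^*\,\dd F^=\rd(X^*)\approx\dd\int_{[1,X^*]}uP(\dd u)$) to convey the intuition. You instead treat the statement as a corollary of results already on the page: either specialize (\ref{eq:measure2ALA}) at $X=0$, or compose (\ref{eq:spine2ALA}) with (\ref{eq:measure2spine}) and (\ref{eq:spine2measure}) and cancel the two $X^*F^=\rd(X^*)$ terms. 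Both routes rest on the same core identity $\int_{[1,X]}uP(\dd u)=F^=(X)-XF^=\rd(X)$; the only question is whether one re-derives it (as the paper does here) or cites Lemma~\ref{lem:bottom-side}, whose proof is exactly that integration by parts. Your shortcut is legitimate --- the paper itself remarks that this lemma ``has become redundant'' once (\ref{eq:measure2ALA}) is available --- and your checks are the right ones: (\ref{eq:spine2ALA}) holds for any ALA by the second condition of Theorem~\ref{thm:general}, and Lemma~\ref{lem:bottom-side} applies because the spine of an ALA satisfies the first condition. What the paper's version buys in exchange for the extra length is independence from the later machinery and the explicit intuitive picture of how mass of $P$ at $u$ contributes $u\,P(\dd u)$ to the cash position $F(X^*,0)$.
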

  \begin{proof}
    This is the idea of the proof
    (the proof itself looks like a formal exercise).
    For $X^*=1$, the equality holds:
    $$
      F(X^*,0)
      =
      F(1,0)
      =
      1-F^=\rd(1)
      =
      P(\{1\})
      =
      \int_{[1,1]}u P(\dd u)
      =
      \int_{[1,X^*]}u P(\dd u).
    $$
    Furthermore, when $X^*$ increases by $\dd X^*$, we will typically have
    \begin{multline*}
      \dd F(X^*,0)
      =
      \dd F^=(X^*)
      -
      \dd(F^=\rd(X^*)X^*)\\
      \approx
      F^=\rd(X^*)\dd X^*
      -
      \dd F^=\rd(X^*) X^*
      -
      F^=\rd(X^*) \dd X^*
      =
      -
      \dd F^=\rd(X^*) X^*
    \end{multline*}
    and
    \begin{multline*}
      \dd\int_{[1,X^*]}u P(\dd u)
      =
      \int_{(X^*,X^*+\dd X^*]}u P(\dd u)\\
      \approx
      X^* P((X^*,X^*+\dd X^*])
      =
      -
      \dd F^=\rd(X^*) X^*.
    \end{multline*}

    Now the formal proof.
    Using integration by parts for the Lebesgue--Stiltjes integral
    (see, e.g., \cite{folland:1999}, Theorem~3.36),
    we obtain:
    \begin{multline*}
      \int_{[1,X^*]} u P(\dd u)
      =
      P(\{1\})
      +
      \int_{(1,X^*]} u P(\dd u)
      =
      1 - F^=\rd(1)
      +
      \int_{(1,X^*]} u \dd F\rd^=(u)\\
      =
      1 - F^=\rd(1)
      -
      X^*F^=\rd(X^*)
      +
      F^=\rd(1)
      +
      \int_{(1,X^*]} F\rd^=(u) \dd u\\
      =
      1
      -
      X^*F^=\rd(X^*)
      +
      F^=(X^*) - F^=(1)
      =
      F(X^*,0).
    \end{multline*}

    The statement of the lemma can also be deduced
    from the ``Taylor formula for convex functions'':
    see \cite{davis/etal:arXiv1001}
    (Equation (1.4), with $f''$ in place of $\nu$, for the statement,
    and page 22 for the proof).
  \end{proof}
\blueend\fi

In the proof of Lemma~\ref{lem:bottom-side}
we have used the following alternative expression of a spine
in terms of the corresponding probability measure on $[0,\infty]$:
\begin{equation*} 
  F^=(X)
  =
  1
  +
  \int_{[1,X]}
  P((x,\infty])
  \dd x.
  \tag{\ref{eq:measure2spine}$'$}
\end{equation*}
Using (\ref{eq:measure2spine}$'$) in place of (\ref{eq:measure2spine})
in the derivation of (\ref{eq:measure2ALA}),
we obtain an alternative expression
\begin{equation*} 
  F(X^*,X)
  =
  P([1,X^*])
  +
  \int_{[1,X^*]}
  P((x,X^*])
  \dd x
  +
  P((X^*,\infty])X
  \tag{\ref{eq:measure2ALA}$'$}
\end{equation*}
of an ALA in terms of the corresponding probability measure on $[0,\infty]$.
In combination with (\ref{eq:ALA2SLA}),
this gives an alternative expression
\begin{equation*}
  F'(X^*)
  =
  P([1,X^*])
  +
  \int_{[1,X^*]}
  P((x,X^*])
  \dd x
  \tag{\ref{eq:measure2SLA}$'$}
\end{equation*}
of a scaled ASLA in terms of the corresponding measure.

\subsection*{Generalizations of Propositions~\ref{prop:SLA}
  and~\ref{prop:insurance}}

Theorem~\ref{thm:general} allows us to generalize
Propositions~\ref{prop:SLA} and~\ref{prop:insurance}
by dropping the requirement that the function $F$ should be increasing.
First we generalize the notions of SLA and ASLA.
A function $F:[1,\infty)\to[0,\infty)$ is an \emph{SLA}
if there exists a strategy for Investor that guarantees
$\K_t\ge F(X_t^*)$ for all $t$
(there are no measurability requirements on $F$).
We say that an SLA $F$ \emph{dominates} another SLA $G$
if $F(y)\ge G(y)$ for all $y\in[1,\infty)$.
We say that $F$ \emph{strictly dominates} $G$
if $F$ dominates $G$ and $F(y)>G(y)$ for some $y\in[1,\infty)$.
An SLA is an \emph{ASLA} if it is not strictly dominated by any SLA.
We will use the adjective ``increasing'' to refer to SLAs and ASLAs
as defined in Section~\ref{sec:capital-1}.
(In fact, Corollary~\ref{cor:SLA} will show that all ASLAs
are automatically increasing.)
\ifFULL\bluebegin
  \emph{A priori}, the expression ``increasing SLA'' appears ambiguous;
  but it is easy to check that in fact there is no ambiguity.
\blueend\fi

\begin{lemma}\label{lem:equivalence}
  A function $G(X^*)$ is an SLA
  if and only if it has the form $F(X^*,0)$ for some LA $F$.
\end{lemma}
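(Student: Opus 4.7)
The plan is to dispatch the two implications separately, since they have very different difficulty.

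For the ``only if'' direction, assume $G$ is an SLA. I would set $F(X^*,X):=G(X^*)$, regarded as a positive function on the domain $X^*\in[1,\infty)$, $X\in[0,X^*]$. Any strategy witnessing $G$ as an SLA guarantees $\K_t\ge G(X_t^*)=F(X_t^*,X_t)$, so the same strategy witnesses $F$ as an LA, and by construction $F(X^*,0)=G(X^*)$. Nothing more is needed here.

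For the ``if'' direction, suppose $G(X^*)=F(X^*,0)$ for some LA $F$, and a strategy $\Pi$ guarantees $\K_t\ge F(X_t^*,X_t)$. This does not directly imply $\K_t\ge F(X_t^*,0)=G(X_t^*)$, since for a general LA the dependence on the second argument could be decreasing. The remedy is to pass to an ALA via Theorem~\ref{thm:general}: let $F'$ be an ALA dominating $F$. By the characterization in Theorem~\ref{thm:general}, $F'(X^*,X)$ is linear in $X$ with slope $(F'^=)\rd(X^*)$, and the spine $F'^=$ is increasing, so $(F'^=)\rd\ge 0$. Consequently $F'(X^*,X)\ge F'(X^*,0)$ for all $X\in[0,X^*]$. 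Choosing a strategy that witnesses $F'$ as an LA then gives
\[
  \K_t \;\ge\; F'(X_t^*,X_t) \;\ge\; F'(X_t^*,0) \;\ge\; F(X_t^*,0) \;=\; G(X_t^*),
\]
where the penultimate inequality uses that $F'$ dominates $F$. Hence $G$ is an SLA.

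The only substantive step is the appeal to Theorem~\ref{thm:general}, which is what turns a potentially badly-behaved dependence of $F$ on its second variable into the automatic monotonicity $F'(X^*,\cdot)$ inherits from the nonnegativity of $(F'^=)\rd$. Everything else is bookkeeping.
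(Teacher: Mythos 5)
Your proposal is correct and follows essentially the same route as the paper: the ``only if'' direction takes $F(X^*,X):=G(X^*)$ verbatim, and the ``if'' direction passes to a dominating ALA $F'$ via Theorem~\ref{thm:general} and uses that $F'(X^*,X)$ is increasing in $X$ (because its slope $(F'^=)\rd(X^*)$ is positive) to conclude $\K_t\ge F'(X^*_t,X_t)\ge F'(X^*_t,0)\ge F(X^*_t,0)=G(X^*_t)$. Your explicit remark that the monotonicity in $X$ is the crux (and cannot be assumed for a general LA) is exactly the point the paper's proof relies on.
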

\begin{proof}
  First suppose that $G(X^*)=F(X^*,0)$, $\forall X^*\in[1,\infty)$,
  for some LA $F$.
  There is an ALA $F'\ge F$ (Theorem \ref{thm:general}).
  Some trading strategy ensures $\K_t\ge F'(X^*_t,X_t)$,
  and since $F'(X^*,X)$ is increasing in $X\in[0,X^*]$,
  it therefore ensures $\K_t\ge F'(X^*_t,0)\ge F(X^*_t,0)=G(X^*_t)$.
  So $G$ is an SLA.

  Now suppose that $G$ is an SLA.
  Then $F(X^*,X):=G(X^*)$ is an LA such that $G(X^*)=F(X^*,0)$.
\end{proof}

\ifFULL\bluebegin
  Prove the following lemma.
  \begin{lemma}
    A function $G(X^*)$ is an ASLA
    if and only if it has the form $F(X^*,0)$
    for some ALA $F$ satisfying $\lim_{X^*\to\infty}F^=\rd(X^*)=0$.
  \end{lemma}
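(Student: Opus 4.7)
The plan is to reduce everything to Corollary~\ref{cor:ALA-ASLA}, which already characterizes the original (increasing) ASLAs, and then to upgrade that characterization to the generalized sense via Lemma~\ref{lem:equivalence} and Theorem~\ref{thm:general}.

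For the ``only if'' direction, I will take a generalized ASLA $G$. Since $G$ is a fortiori a generalized SLA, Lemma~\ref{lem:equivalence} writes $G(X^*)=F(X^*,0)$ for some LA $F$; Theorem~\ref{thm:general} dominates $F$ by an ALA $F'$. Then $F'(\cdot,0)$ is a scaled ASLA, hence a generalized SLA, and it weakly dominates $G$; admissibility of $G$ forces $F'(\cdot,0)=G$. To deduce $(F')^=\rd(\infty)=0$ I will argue by contradiction: if $c:=(F')^=\rd(\infty)>0$, the remark following Corollary~\ref{cor:ALA-ASLA} gives $\int_1^{\infty}G(y)y^{-2}\,\dd y=1-c$. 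The function $G(X^*)=(F')^=(X^*)-(F')^=\rd(X^*)X^*$ is automatically increasing by concavity of the spine $(F')^=$. Hence $H(X^*):=G(X^*)+cM\III_{\{X^*\ge M\}}$ is increasing for any $M>1$, satisfies (\ref{eq:SLA}) with equality, and strictly dominates $G$. Proposition~\ref{prop:SLA} then says $H$ is an increasing SLA, hence a generalized SLA, contradicting generalized admissibility of $G$.

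For the ``if'' direction I start from $G(X^*)=F(X^*,0)$ for an ALA $F$ with $F^=\rd(\infty)=0$. Corollary~\ref{cor:ALA-ASLA} already tells us $G$ is an ASLA in the original (increasing) sense. To rule out strict domination by some generalized SLA $H$, I apply Lemma~\ref{lem:equivalence} and Theorem~\ref{thm:general} to $H$: this produces an ALA $F'''$ with $F'''(\cdot,0)$ a scaled ASLA that weakly dominates $H$, and hence dominates $G$ everywhere with strict inequality at any point where $H(X^*)>G(X^*)$. But $F'''(\cdot,0)$ is an increasing SLA by Proposition~\ref{prop:SLA}, which contradicts admissibility of $G$ in the original sense.

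The main step I expect to require care is the construction of the enlarged increasing function $H$ in the forward direction: one needs to spend the leftover integral budget $c$ in a way that keeps $H$ monotone, and the ``step bump'' $cM\III_{\{X^*\ge M\}}$ does this cleanly because it saturates the integral with a constant mass pushed out to infinity, rather than spreading mass across a range where $G$ might already be steep.
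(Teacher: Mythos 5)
Your proof is correct. A point of comparison worth noting first: the paper itself supplies no proof of this lemma --- in the source it appears only in a hidden block prefaced by ``Prove the following lemma'', i.e.\ it is left as an unfinished task --- so there is no official argument to measure yours against. What you have written is the natural way to discharge it with the paper's own machinery: Corollary~\ref{cor:ALA-ASLA} already handles the correspondence for \emph{increasing} ASLAs, and your two directions graft onto it exactly the domination bookkeeping that the paper uses in Corollary~\ref{cor:SLA} to pass between the increasing and the generalized notions of admissibility (Lemma~\ref{lem:equivalence} to lift $G$ or a competitor $H$ to an LA, Theorem~\ref{thm:general} to dominate it by an ALA, then Proposition~\ref{prop:SLA} to recognize the resulting $F(\cdot,0)$ as an increasing SLA). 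The one genuinely new ingredient, ruling out $c:=F^=\rd(\infty)>0$, is handled correctly: the remark after Corollary~\ref{cor:ALA-ASLA} gives $\int_1^{\infty}G(y)y^{-2}\,\dd y=1-c$, and your bump $cM\III_{\{X^*\ge M\}}$ contributes exactly $cM\cdot M^{-1}=c$ to the integral while preserving monotonicity and right-continuity, so Proposition~\ref{prop:SLA} certifies the strictly dominating SLA and the contradiction goes through. (Two harmless simplifications you could make: the monotonicity of $G=F^=-F^=\rd\cdot X^*$ is already part of ``$F(\cdot,0)$ is a scaled ASLA'' in Corollary~\ref{cor:ALA-ASLA}, so no separate concavity argument is needed; and the existence of the limit $F^=\rd(\infty)$ deserves a word --- it follows because $F^=\rd$ is decreasing and positive for a spine.)
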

\blueend\fi

\begin{corollary}\label{cor:SLA}
\begin{enumerate}
\item
  A function $F:[1,\infty)\to[0,\infty)$ is an SLA
  if and only if it satisfies
  \begin{equation}\label{eq:superSLA}
    \int_1^{\infty}
    \frac{F^*(y)}{y^2}
    \dd y
    \le
    1,
  \end{equation}
  where $F^*(y):=\sup_{x\in[1,y]}F(x)$.
\item
  Any SLA is dominated by an ASLA.
\item
  An SLA is an ASLA if and only if
  it is increasing, right-continuous, and satisfies (\ref{eq:ASLA}).
\end{enumerate}
\end{corollary}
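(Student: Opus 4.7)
The plan is to reduce everything to the increasing case already handled by Proposition~\ref{prop:SLA}, using the running maximum $F^*$ together with Lemma~\ref{lem:equivalence} and Theorem~\ref{thm:general}. The overarching observation is that $F^*$ is the smallest non-decreasing function dominating $F$, so any increasing function $\ge F$ automatically dominates $F^*$.

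For part~1, the easy direction uses that $F^*$ is increasing and satisfies~(\ref{eq:SLA}) by hypothesis, so Proposition~\ref{prop:SLA} produces a strategy witnessing $\K_t \ge F^*(X^*_t) \ge F(X^*_t)$, whence $F$ is an SLA. For the nontrivial direction, I would take an SLA $F$, apply Lemma~\ref{lem:equivalence} to write $F(X^*) = \tilde F(X^*,0)$ for some LA $\tilde F$, dominate $\tilde F$ by an ALA $\tilde F'$ using Theorem~\ref{thm:general}, and read off the scaled ASLA $G(X^*) := \tilde F'(X^*,0)$ via Corollary~\ref{cor:ALA-ASLA}. Because $G$ is an increasing function with $G \ge F$, it also satisfies $G \ge F^*$; combined with $\int_1^\infty G(y)/y^2\,\dd y \le 1$ (the defining property of a scaled ASLA) this yields~(\ref{eq:superSLA}).

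For part~2, start from an SLA $F$. The function $F^*$ is increasing and satisfies~(\ref{eq:SLA}) by part~1, so Proposition~\ref{prop:SLA}(2) supplies an increasing ASLA $G \ge F^* \ge F$. The remaining task is to check that $G$ is not strictly dominated by any \emph{new-sense} SLA. If $H$ were such a strict dominator, then $H^* \ge H \ge G$ with $H^*(y_0) > G(y_0)$ at any point $y_0$ where $H(y_0) > G(y_0)$, and $H^*$ is increasing and satisfies~(\ref{eq:SLA}) by part~1; this contradicts the old-sense admissibility of $G$. Part~3 splits into two directions with the same flavour. Given increasing right-continuous $F$ with~(\ref{eq:ASLA}), Proposition~\ref{prop:SLA}(3) makes $F$ old-sense admissible, and any new-sense strict dominator $H$ would yield, as above, an increasing SLA $H^*$ strictly dominating $F$, a contradiction. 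Conversely, if $F$ is a new-sense ASLA, then $F^*$ is itself a new-sense SLA (by part~1 applied to $F^*$) that dominates $F$, so admissibility forces $F = F^*$, i.e.\ $F$ is increasing; applying Proposition~\ref{prop:SLA}(2) gives an old-sense ASLA $G \ge F$, which is again a new-sense SLA, so admissibility forces $G = F$, whence $F$ is an old-sense ASLA and Proposition~\ref{prop:SLA}(3) yields right-continuity and~(\ref{eq:ASLA}).

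The main obstacle is conceptual rather than computational: one must verify that admissibility \emph{in the wider class of non-monotone SLAs} is captured by old-sense admissibility. The bridge is the operation $F \mapsto F^*$, which converts any candidate strict dominator into a monotone one without losing the strictness of the domination at some point, so that old-sense admissibility really does suffice.
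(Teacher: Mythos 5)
Your proof is correct and follows essentially the same route as the paper: part~1 is proved by exactly the paper's argument (Lemma~\ref{lem:equivalence}, Theorem~\ref{thm:general}, and Corollary~\ref{cor:ALA-ASLA} for the hard direction, domination by $F^*$ for the easy one), and parts~2 and~3 are reduced to Proposition~\ref{prop:SLA} via the observation that $F^*$ converts any strict dominator into an increasing one. The paper dismisses parts~2 and~3 as ``obvious'' given part~1; your write-up supplies the details behind that claim, which is a welcome expansion rather than a deviation.
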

\begin{proof}
  First we prove part~1.
  If (\ref{eq:superSLA}) is true,
  $F^*$ is an SLA and so, \emph{a fortiori},
  $F$ is an SLA as well.

  In the opposite direction,
  if $F$ is an SLA,
  $F(X^*)=F_1(X^*,0)$, $\forall X^*\in[1,\infty)$,
  for some LA $F_1$
  (see Lemma~\ref{lem:equivalence}).
  By Theorem \ref{thm:general},
  $F_1$ is dominated by an ALA $F_2$.
  The function $F_3(X^*):=F_2(X^*,0)$ of $X^*\in[1,\infty)$
  is an increasing SLA (by Corollary~\ref{cor:ALA-ASLA})
  that dominates $F$ and, therefore, $F^*$.
  Now (\ref{eq:superSLA}) follows from
  $
    \int_1^{\infty}
    F_3(y)/y^2
    \dd y
    \le
    1
  $.

  Part~3 is now obvious since, by part~1, ASLAs
  must be increasing functions.
  Part~2 follows from parts~1 and~3.
\end{proof}

\begin{corollary}\label{cor:c}
  Let $c\ge0$ and $F:[1,\infty)\to[0,\infty)$.
  Investor has a strategy ensuring (\ref{eq:insurance})
  if and only if $c$ and $F$ satisfy
  \begin{equation}\label{eq:superSLA-general}
    \int_1^{\infty} \frac{F^*(y)}{y^2} \dd y \le 1-c.
  \end{equation}
\end{corollary}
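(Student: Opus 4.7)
The plan is to reduce both directions of Corollary~\ref{cor:c} to Proposition~\ref{prop:insurance} applied to the increasing function $F^*$. The ``if'' direction is immediate: since $F^*$ is an increasing function with $F^* \ge F$, the hypothesis $\int_1^\infty F^*(y)/y^2 \dd y \le 1-c$ together with Proposition~\ref{prop:insurance} (applied to $F^*$ in place of $F$) yields a strategy for Investor ensuring $\K_t \ge cX_t + F^*(X_t^*) \ge cX_t + F(X_t^*)$.

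For the ``only if'' direction, suppose some strategy ensures (\ref{eq:insurance}). Then the positive function $G(X^*,X) := cX + F(X^*)$ is an LA, so by Lemma~\ref{lem:dominated} it is dominated by a function $H$ satisfying the two conditions of Theorem~\ref{thm:general}. By Corollary~\ref{cor:ALA-ASLA}, the function $\phi(X^*) := H(X^*,0) = H^=(X^*) - X^* H^=\rd(X^*)$ is a scaled ASLA, and in particular increasing, so $F = G(\cdot,0) \le \phi$ gives $F^* \le \phi$ pointwise.

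It remains to bound $\int_1^\infty \phi(y)/y^2 \dd y$. Setting $X = X^*$ in $G \le H$ yields $H^=(X^*) \ge cX^* + F(X^*) \ge cX^*$, so $H^=(y)/y \ge c$ for all $y \ge 1$; since $H^=$ is concave and increasing, $H^=(y)/y$ decreases to $H^=\rd(\infty)$ as $y \to \infty$, whence $H^=\rd(\infty) \ge c$. Using the identity $\phi(y)/y^2 = -(H^=(y)/y)\rd$ from the remark after Corollary~\ref{cor:ALA-ASLA}, integration gives $\int_1^\infty \phi(y)/y^2 \dd y = 1 - H^=\rd(\infty) \le 1 - c$, and combining with $F^* \le \phi$ establishes (\ref{eq:superSLA-general}). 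The only delicate point is the identification $\lim_{y\to\infty} H^=(y)/y = H^=\rd(\infty)$, which is a standard consequence of the concavity of $H^=$ and is already used implicitly in the remark after Corollary~\ref{cor:ALA-ASLA}.
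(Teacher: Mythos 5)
Your proof is correct and follows essentially the same route as the paper's: the ``if'' direction is Proposition~\ref{prop:insurance} applied to $F^*$, and the ``only if'' direction dominates the LA $cX+F(X^*)$ by an ALA and shows that the integral of the resulting increasing majorant of $F^*$ is at most $1$ minus the slope at infinity, which is at least $c$. The only difference is bookkeeping: you work in spine coordinates (proving $H^=\rd(\infty)\ge c$ from concavity and using the integral identity from the Remark after Corollary~\ref{cor:ALA-ASLA}), whereas the paper works in measure coordinates (proving $P(\{\infty\})\ge c$ via Lemma~\ref{lem:to-zero} and evaluating the integral via Lemma~\ref{lem:ASLA}); the two are equivalent under the bijection (\ref{eq:spine2measure}).
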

\begin{proof}
  If (\ref{eq:superSLA-general}) is satisfied,
  Investor can ensure (\ref{eq:insurance}) with $F$ replaced by $F^*$,
  and so can ensure (\ref{eq:insurance}) itself.

  In the opposite direction,
  suppose Investor can ensure (\ref{eq:insurance}).
  It means that the function $F_1(X^*,X):=cX+F(X^*)$ is an LA.
  Let $F_2$ be any ALA that dominates $F_1$.
  Represent $F_2$ in the measure form (\ref{eq:measure2ALA}):
  $F_2(X^*,X)=P((X^*,\infty])X+F_3(X^*)$,
  where $F_3(X^*)=\int_{[1,X^*]}uP(\dd u)$.
  Since $F_3(X^*)/X^*\to0$ as $X^*\to\infty$
  (see Lemma~\ref{lem:to-zero} below),
  we have
  $$
    P(\{\infty\})
    =
    \lim_{X^*\to\infty}\frac{F_2(X^*,X^*)}{X^*}
    \ge
    \lim_{X^*\to\infty}\frac{F_1(X^*,X^*)}{X^*}
    \ge
    c.
  $$
  And since
  $$
    F(X^*)=F_1(X^*,0)\le F_2(X^*,0)=F_3(X^*),
  $$
  $F_3$ is an increasing function that dominates $F$,
  thus dominating $F^*$.
  Therefore,
  $$
    \int_1^{\infty}
    \frac{F^*(y)}{y^2}
    \dd y
    \le
    \int_{[1,\infty)}
    \frac{F_3(y)}{y^2}
    \dd y
    =
    P([1,\infty))
    =
    1-P(\{\infty\})
    \le
    1-c
  $$
  (the first equality follows from Lemma~\ref{lem:ASLA}).
\end{proof}

The following lemma (in combination with Lemma~\ref{lem:ASLA})
was used in the proof of Corollary~\ref{cor:c}.
\begin{lemma}\label{lem:to-zero}
  If an increasing function $F:[1,\infty)\to[0,\infty)$ satisfies (\ref{eq:SLA}),
  $\lim_{y\to\infty}F(y)/y=0$.
\end{lemma}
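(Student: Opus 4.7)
The plan is to argue by contradiction, exploiting monotonicity of $F$ to convert a non-vanishing limit into a divergent tail of the integral in (\ref{eq:SLA}).

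Suppose $\lim_{y\to\infty}F(y)/y$ does not exist or is positive. Since $F\ge0$, this is equivalent to $\limsup_{y\to\infty}F(y)/y>0$, so I can pick some $c>0$ and a sequence $y_n\to\infty$ with $F(y_n)\ge cy_n$. By passing to a subsequence, I may further assume $y_{n+1}\ge 2y_n$, which makes the intervals $[y_n,2y_n]$ pairwise (essentially) disjoint.

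The key step is then a local lower bound on $F(y)/y^2$ on each such interval. For $y\in[y_n,2y_n]$, since $F$ is increasing and $y\le 2y_n$, we get
\[
  F(y)\ge F(y_n)\ge cy_n\ge \tfrac{c}{2}\,y,
\]
so $F(y)/y^2\ge c/(2y)$. Integrating,
\[
  \int_{y_n}^{2y_n}\frac{F(y)}{y^2}\,\dd y
  \;\ge\;\frac{c}{2}\int_{y_n}^{2y_n}\frac{\dd y}{y}
  \;=\;\frac{c\ln 2}{2}.
\]
Summing these disjoint contributions gives $\int_1^{\infty}F(y)/y^2\,\dd y=\infty$, contradicting (\ref{eq:SLA}). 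Hence $\limsup_{y\to\infty}F(y)/y=0$, and therefore $\lim_{y\to\infty}F(y)/y=0$.

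There is no real obstacle here; the only subtlety is ensuring the intervals are disjoint (handled by the $y_{n+1}\ge2y_n$ thinning) and that the local lower bound uses monotonicity of $F$ rather than any regularity. The same argument would in fact give the slightly stronger conclusion that $\int_1^\infty F(y)/y^2\,\dd y<\infty$ already forces $F(y)/y\to 0$ for any increasing positive $F$.
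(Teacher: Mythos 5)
Your proof is correct: the negation of the conclusion is indeed $\limsup_{y\to\infty}F(y)/y>0$, the thinning $y_{n+1}\ge 2y_n$ makes the blocks $[y_n,2y_n]$ essentially disjoint, and the monotonicity bound $F(y)\ge F(y_n)\ge cy_n\ge\frac{c}{2}y$ on each block gives a contribution of at least $\frac{c\ln 2}{2}$, so the integral diverges. However, the paper takes a more economical direct route: since $\int_1^\infty F(y)y^{-2}\,\dd y<\infty$, the tails $\int_c^\infty F(y)y^{-2}\,\dd y$ tend to $0$ as $c\to\infty$, and monotonicity gives
$$\frac{F(c)}{c}=F(c)\int_c^\infty y^{-2}\,\dd y\le\int_c^\infty F(y)y^{-2}\,\dd y\to 0,$$
so $F(c)/c\to 0$ with no contradiction argument and no block decomposition. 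Both proofs rest on the same mechanism --- using monotonicity to lower-bound an integral of $F(y)/y^2$ to the right of a point by the value of $F/y$ at that point --- but the paper integrates over the whole tail $[c,\infty)$, which yields exactly $F(c)/c$ and hence the conclusion in one line, whereas your dyadic-block version only recovers a constant per block and therefore needs the contradiction/divergence framing. Your closing observation that only finiteness of the integral (not the bound by $1$) is used is also exactly what the paper's proof exploits.
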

\begin{proof}
  If $\int_1^{\infty} F(y)y^{-2}\dd y<\infty$ for increasing $F$,
  then $\int_c^{\infty} F(y)y^{-2}\dd y\to0$ as $c\to\infty$,
  and so $\int_c^\infty F(c)/y^{-2}\dd y=F(c)/c\to0$ as $c\to\infty$.
\end{proof}

\section{Trading algorithm}
\label{sec:algorithms}

In this short section we will give an explicit trading strategy
(already described briefly in the proof of Lemma~\ref{lem:direct})
ensuring $\K_t\ge F(X^*_t,X_t)$ for all $t$,
where $F$ is an ALA,
or $\K_t\ge F'(X^*_t)$ for all $t$,
where $F'$ is an ASLA,
in the notation of Protocol~\ref{prot:competitive-trading}.
This strategy can be given
in terms of either the corresponding spine $F^=$
(in the spirit of Section \ref{sec:capital-1})
or the corresponding probability measure $P$ on $[0,\infty]$
(in the spirit of Section \ref{sec:capital-3}).

\ifFULL\bluebegin
  Let $F$ be a scaled ASLA;
  our goal is to ensure $\K_t\ge F'(X^*_t)$ for all $t$.
  By Lemma~\ref{lem:ASLA},
  there exists a probability measure $P$ on $[1,\infty]$
  that satisfies (\ref{eq:measure2SLA});
  it is clear that such a $P$ is unique.
  The method used in the proof of part~1 of Proposition~\ref{prop:SLA}
  requires (see (\ref{eq:strategy})) that the algorithm keeps
  $
    P((X_t^*,\infty])
  $
  units of the security $X$ in its portfolio at the beginning of period $t$.
  According to (\ref{eq:stronger-guarantee}),
  the algorithm's cash position is at least $F(X_t^*)$,
  but the bound (\ref{eq:stronger-guarantee})
  involves throwing away part of capital,
  since in general we may have $\K_t^{(u)}>u$ when $u\le X_t^*$.
  Using the same position in the security but avoiding throwing money away,
  we arrive at Algorithm~\ref{alg:P}.
  
  \renewcommand{\algorithmicrequire}{\textbf{Parameter:}}
  \setcounter{algorithm}{0}
  \begin{algorithm} 
    \caption{Ensuring $\K_t\ge F'(X_t^*)$ or $\K_t\ge F(X_t^*,X_t)$}
    \label{alg:P}
    \begin{algorithmic}
      \REQUIRE probability measure $P$ on $[1,\infty]$
      \STATE $S:=P((1,\infty])$, $C:=1-S=P(\{1\})$, and $X^*:=1$
      \FOR{$t=1,2,\dots$}
        \STATE read $X_t$
        \STATE $\K_t:=SX_t+C$
        \IF{$X_t>X^*$}
          \STATE $S:=P((X_t,\infty])$
          \STATE $C:=C+P((X^*,X_t])X_t$
          \STATE $X^*:=X_t$
        \ENDIF
      \ENDFOR
    \end{algorithmic}
  \end{algorithm}

  Intuitively,
  the algorithm's portfolio at the beginning of period $t$
  contains $C$ monetary units in cash and $S$ units of the security.
  The algorithm keeps track of the running maximum $X^*$ of the security's price.
  At periods $t$ when the security reaches a record price,
  $X_t>X^*$,
  the portfolio is updated by selling $P((X^*,X_t])$ units of the security;
  this leads to the increase of $P((X^*,X_t])X_t$ in the cash position.

  Let us now specialize Algorithm~\ref{alg:P}
  to the case where $F$ is defined by (\ref{eq:class-1}).
  Using (\ref{eq:P}) in Appendix~\ref{app:analytic},
  we obtain
  Algorithm~\ref{alg:class},
  which was used in the computer simulations
  reported in \cite{\GTPxxxiii}, Section~8.

  \begin{algorithm} 
    \caption{Ensuring $\K_t\ge\alpha (X_t^*)^{1-\alpha}$}
    \label{alg:class}
    \begin{algorithmic}
      \REQUIRE $\alpha\in(0,1)$
      \STATE $C:=1-\alpha$, $F:=\alpha$, and $X^*:=1$
      \FOR{$t=1,2,\dots$}
        \STATE read $X_t$
        \STATE $\K_t:=CX_t+F$
        \IF{$X_t>X^*$}
          \STATE $C:=(1-\alpha)X_t^{-\alpha}$
          \STATE $F:=F+(1-\alpha)((X^*)^{-\alpha}-X_t^{-\alpha})X_t$
          \STATE $X^*:=X_t$
        \ENDIF
      \ENDFOR
    \end{algorithmic}
  \end{algorithm}
\blueend\fi

If we would like to ensure that $\K_t\ge F(X^*_t,X_t)$
for some ALA $F$,
we can apply Algorithm~\ref{alg:explicit}
to the spine $F^=(X^*):=F(X^*,X^*)$ of $F$.

\begin{algorithm} 
  \caption{Ensuring $\K_t\ge F(X_t^*,X_t)$ or $\K_t\ge F'(X_t^*)$}
  \label{alg:explicit}
  \begin{algorithmic}
    \REQUIRE spine $F^=:[1,\infty)\to[0,\infty)$
    \STATE $X^*:=1$
    \FOR{$t=1,2,\dots$}
      \STATE hold $F^=\rd(X^*)$ units of $X$
      \STATE read $X_t$
      \STATE $\K_t := \K_{t-1} + F^=\rd(X^*) (X_t-X_{t-1})$
      \IF{$X_t>X^*$}
        \STATE $X^*:=X_t$
      \ENDIF
    \ENDFOR
  \end{algorithmic}
\end{algorithm}

If we would like to ensure that $\K_t\ge F'(X^*_t)$
for an ASLA $F'$,
we first need to find the spine $F^=$ corresponding to $F'$;
in other words, to find $F^=$ satisfying (\ref{eq:spine2SLA}).
This can be done by combining (\ref{eq:SLA2measure}) and (\ref{eq:measure2spine}).
After that we can apply Algorithm~\ref{alg:explicit}.

Alternatively,
we could use the probability measure $P$ on $[1,\infty)$
corresponding to $F$ or $F'$, respectively,
as the parameter of Algorithm~\ref{alg:explicit}:
the only difference would be that $F^=\rd(X^*)$
would be replaced by $P((X^*,\infty])$
(cf.\ (\ref{eq:spine2measure})).
This is exactly the trading strategy
used in the proof of Proposition~\ref{prop:SLA}:
see (\ref{eq:strategy}) and (\ref{eq:p}).

\section{Pricing adjusted American lookbacks}

In this section we will consider
a modified version of Protocol~\ref{prot:competitive-trading},
given as Protocol~\ref{prot:trading}.
Now Investor starts with initial capital $\K_0$ equal to $\alpha$, and
the security's initial price $X_0$
is not necessarily 1 but is chosen by Market.

\begin{protocol}
  \caption{Trading in a financial security}
  \label{prot:trading}
  \begin{algorithmic}
    \STATE $\K_0:=\alpha$
    \STATE Market announces $X_0\in[0,\infty)$
    \FOR{$t=1,2,\dots$}
      \STATE Investor announces $p_t\in\bbbr$
      \STATE Market announces $X_t\in[0,\infty)$
      \STATE $\K_t:=\K_{t-1}+p_t(X_t-X_{t-1})$
    \ENDFOR
  \end{algorithmic}
\end{protocol}

A \emph{situation} in Protocol~\ref{prot:trading}
is a non-empty sequence $\sigma=(X_0,X_1,\ldots,X_t)$ of Market's moves,
which now includes $X_0$.
We let $\Sigma$ stand for the set of all situations.
A \emph{strategy for Investor} (or \emph{trading strategy})
is a function $\Pi:\Sigma\to\bbbr$,
and
$$
  \K^{\alpha,\Pi}(X_0,X_1,\ldots,X_t)
  :=
  \alpha +
  \sum_{s=1}^{t}
  \Pi(X_0,\ldots,X_{s-1})
  (X_s-X_{s-1})
$$
is Investor's capital in a situation $(X_0,X_1,\ldots,X_t)$
when he follows $\Pi$ from initial capital $\alpha$.
A \emph{capital process} is a real-valued function on $\Sigma$
that can be represented in the form $\K^{\alpha,\Pi}$
for some $\alpha$ and $\Pi$.

Let $F:\Sigma\to\bbbr$.
The perpetual American option with payoff $F$ entitles its owner
to the payoff $F(X_0,X_1,\ldots,X_t)$
at the time $t\in\{0,1,\ldots\}$ of her choice.
The \emph{upper price} of (the American option with payoff) $F$
in a situation $\iota$ is defined as
\begin{equation}\label{eq:upper-expectation-1}
  \UpExpect(F\givn\iota)
  :=
  \inf
  \left\{
    \K(\iota)
    \st
    \K(\sigma)\ge F(\sigma), \forall\sigma\in\Sigma_{\iota}
  \right\},
\end{equation}
where $\K$ ranges over the capital processes
and $\Sigma_{\iota}$ stands for the set of all situations $\sigma$
such that $\iota$ is a prefix of $\sigma$.
Intuitively, $\UpExpect(F\givn\iota)$ is the price of a cheapest superhedge
for $F$ in the situation $\iota$.

Let $\Omega$ be the set of all infinite sequences
$X_0,X_1,X_2,\ldots$ of Market's moves,
and let $F:\Omega\to(-\infty,\infty]$.
The European option with maturity date $\infty$ and payoff $F$
entitles its owner
to the payoff $F(X_0,X_1,X_2,\ldots)$ at time $\infty$.
The \emph{upper price} of (the European option with maturity date $\infty$ and payoff) $F$
in a situation $\iota$ is defined as
\begin{multline}\label{eq:upper-expectation-2}
  \UpExpect(F\givn\iota)
  :=
  \inf
  \Bigl\{
    \K(\iota)
    \st
    \forall(X_0,X_1,X_2,\ldots)\in\Omega_{\iota}:\\
    \liminf_{t\to\infty}\K(X_0,X_1,\ldots,X_t)\ge F(X_0,X_1,X_2,\ldots)
  \Bigr\},
\end{multline}
where $\K$ ranges over the capital processes
and $\Omega_{\iota}$ is the set of all sequences in $\Omega$
containing $\iota$ as their prefix.

Using the notation $\UpExpect$ in this section
usually implies that the corresponding infimum
(see (\ref{eq:upper-expectation-1}) and (\ref{eq:upper-expectation-2}))
is attained;
the only exception is the second statement of Corollary~\ref{cor:pricing-2}.

As discussed in Section~\ref{sec:introduction},
the results of the previous sections can be recast as a study of the upper prices
of perpetual American options paying $G(X_t^*,X_t)$
for various functions $G$.
The following corollaries list some special cases,
complemented with simple statements about European options.

\begin{corollary}\label{cor:pricing-1}
  Let $G:[0,\infty)\to[0,\infty)$ be an increasing function
  and $X_0\in(0,\infty)$.
  The upper price in the situation $X_0$
  of the perpetual American option with payoff $G(X^*_t)$ is
  $
    X_0\int_{X_0}^{\infty}G(x)x^{-2}\dd x
  $.
  The upper price in the situation $X_0$
  of the European option paying $G(X^*_{\infty})$
  at $\infty$
  is also $X_0\int_{X_0}^{\infty}G(x)x^{-2}\dd x$.
\end{corollary}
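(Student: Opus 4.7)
The plan is to reduce both statements to the normalized setting $X_0=1$ via the change of variable $Y_t := X_t/X_0$, and then to combine the constructive content of Proposition~\ref{prop:SLA} (which turns SLAs into superhedging strategies) with the informal random-walk lower bound appearing in its proof. Setting $\tilde G(y) := G(X_0 y)$, the substitution $x=X_0 y$ shows that the target price $c := X_0\int_{X_0}^{\infty}G(x)x^{-2}\dd x$ equals $\int_1^{\infty}\tilde G(y)y^{-2}\dd y$, and a linear rescaling of strategies (positions and capitals multiplied by $X_0$) converts superhedges between the $X$- and $Y$-markets. Since the integral is unchanged if $G$ is replaced by its right-continuous version $G^{\mathrm r}\ge G$, and a superhedge of $G^{\mathrm r}(X^*_\infty)$ is \emph{a fortiori} one of $G(X^*_\infty)$, I may and will assume $G$ (and hence $\tilde G$) to be right-continuous.

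For the American option, the upper bound follows directly from Proposition~\ref{prop:SLA}: because $\tilde G/c$ is an increasing right-continuous function with $\int_1^{\infty}\tilde G(y)/(cy^2)\dd y=1$, it is an SLA, so there is a strategy in the $Y$-market starting from capital $1$ with $\K_t\ge\tilde G(Y^*_t)/c=G(X^*_t)/c$; scaling yields a strategy in the $X$-market starting from capital $c$ with $\K_t\ge G(X^*_t)$. The matching lower bound follows by contradiction from the same proposition: a capital process with initial value $\alpha<c$ superhedging $G(X^*_t)$ would make $\tilde G/\alpha$ an SLA, forcing $\int_1^{\infty}\tilde G(y)/(\alpha y^2)\dd y\le 1$ and so $\alpha\ge c$.

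For the European option the upper bound uses the same mixture strategy (\ref{eq:p}) with the probability measure $P$ on $[1,\infty]$ associated to $\tilde G/c$ via (\ref{eq:SLA2measure}). Writing $T_u$ for the first hitting time of $[u,\infty)$ by $X$, the individual stopping strategy $p^{(u)}$ has $\K^{(u)}_t=X_t$ while it still holds the security and $\K^{(u)}_t=X_{T_u}\ge u$ from time $T_u$ onwards; hence $\liminf_t \K^{(u)}_t\ge u\III_{\{X^*_\infty\ge u\}}$ in every case. Integrating against $P$ and applying Fatou's lemma gives $\liminf_t\K_t\ge\int_{[1,X^*_\infty]}u\,P(\dd u) = \tilde G(Y^*_\infty)/c=G(X^*_\infty)/c$, so the $c$-scaled strategy delivers $\liminf_t\K_t\ge G(X^*_\infty)$. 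The matching lower bound reuses the random-walk argument from the proof of Proposition~\ref{prop:SLA}: letting $X_t$ be a symmetric $\pm 1/N$ random walk started at $1$ and stopped on hitting $0$, every bounded-position capital process is a positive martingale, so $\Expect\liminf_t\K_t\le\liminf_t\Expect\K_t\le\alpha$; since the distribution of $X^*_\infty$ converges weakly as $N\to\infty$ to the density $y^{-2}$ on $[1,\infty)$, $\Expect G(X^*_\infty)$ tends to $\int_1^{\infty}G(y)y^{-2}\dd y$, forcing $\alpha\ge c$. The main obstacle is the Fatou step for the $\liminf$ of the mixture, since the pointwise guarantee $\K_t\ge G(X^*_t)$ of the American case is not by itself enough to dominate $G(X^*_\infty)$ when $X^*_\infty$ is only approached but not attained; this is why the mixture form of the strategy, which tracks the running maximum through the measure $P$, rather than the pointwise bound alone, is used above.
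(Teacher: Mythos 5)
Your overall route---normalize to $X_0=1$ via $Y_t:=X_t/X_0$, get the American superhedge and both lower bounds from Proposition~\ref{prop:SLA}, and treat the European payoff separately---is exactly the one the paper relies on (the corollary is stated without proof, pointing back to the argument of Section~\ref{sec:introduction}), and your American upper bound and both random-walk lower bounds are fine at the paper's own level of rigour. You also correctly identify the one delicate point: the case where $X^*_\infty$ is approached but never attained. But your resolution of that case rests on a false inequality. The claim $\liminf_t\K^{(u)}_t\ge u\III_{\{X^*_\infty\ge u\}}$ does \emph{not} hold ``in every case'': when $u=X^*_\infty$ and the supremum is not attained, we have $X^*_{t-1}<u$ for all $t$, so the stopping strategy (\ref{eq:strategy}) holds one unit of the security forever, $\K^{(u)}_t=X_t$, and $\liminf_t X_t$ may be $0$. (For every other $u$ the inequality does hold, since $\sup_t X_t>u$ forces some $X_t\ge u$.) After Fatou you therefore obtain only $\int_{[1,X^*_\infty)}uP(\dd u)+P(\{X^*_\infty\})\liminf_tX_t$, which falls short of $\tilde G(Y^*_\infty)/c$ by the jump of $\tilde G$ at $Y^*_\infty$---i.e.\ the mixture buys you nothing beyond the naive American bound precisely where you need it to.

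Concretely, take $X_0=1$ and $G=2\III_{\{x\ge2\}}$, so that $c=1$ and $P=\delta_2$: against a price path whose peaks increase to $2$ without reaching it and whose troughs decrease to $0$, your strategy has $\liminf_t\K_t=\liminf_tX_t=0$ while $G(X^*_\infty)=2$. One can check that \emph{no} capital process started from $1$ superhedges this European payoff (maintaining safety against an immediate jump to $2$ or a crash to $0$ forces $\K_t=X_t$ before the level $2$ is reached), so the claim of an exact superhedge from initial capital $c$ cannot be repaired. The statement itself survives because the upper price is an infimum: for each $\delta>0$ apply your American superhedge to $\tilde G_{\delta}(y):=\tilde G((1+\delta)y)$, whose integral against $y^{-2}\dd y$ is at most $(1+\delta)c$; since $Y^*_t\uparrow Y^*_\infty$, eventually $(1+\delta)Y^*_t>Y^*_\infty$ and hence $\K_t\ge\tilde G_{\delta}(Y^*_t)\ge\tilde G(Y^*_\infty)$, so $\liminf_t\K_t\ge\tilde G(Y^*_\infty)$ from initial capital $(1+\delta)c$; letting $\delta\downarrow0$ gives the upper bound $c$. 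Your write-up needs either this $\epsilon$-argument or a restriction to continuous $G$ in the Fatou step.
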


\begin{corollary}\label{cor:pricing-2}
  Let $c\ge0$, $G:[0,\infty)\to[0,\infty)$ be an increasing function,
  and $X_0\in(0,\infty)$.
  The upper price in the situation $X_0$
  of the perpetual American option with payoff $cX_t+G(X^*_t)$
  is $cX_0+X_0\int_{X_0}^{\infty}G(x)x^{-2}\dd x$.
  The upper price in the situation $X_0$ of the European option
  paying $cX_{\infty}+G(X^*_{\infty})$ at time $\infty$,
  where $cX_{\infty}:=\infty$ when $\lim_{t\to\infty}X_t$ does not exist,
  is $cX_0+X_0\int_{X_0}^{\infty}G(x)x^{-2}\dd x$.
\end{corollary}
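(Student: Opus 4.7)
Set $\alpha := cX_0 + X_0 \int_{X_0}^{\infty} G(x) x^{-2} \dd x$ and assume $\alpha < \infty$ (otherwise the statement is vacuous). The plan is to deduce both claims from Proposition~\ref{prop:insurance} via the normalization $Y_t := X_t/X_0$, handling the non-convergent trajectories needed for the European upper bound by an auxiliary oscillation-harvesting strategy.

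For the American option, set $F(y) := G(X_0 y)/\alpha$; a change of variables gives $\int_1^{\infty} F(y) y^{-2} \dd y = 1 - cX_0/\alpha$. Proposition~\ref{prop:insurance} then furnishes a strategy in Protocol~\ref{prot:competitive-trading} on $Y$, starting from capital $1$, that ensures $\K_t \ge (cX_0/\alpha) Y_t + F(Y_t^*)$; multiplying capital and positions by $\alpha/X_0$ translates this into a strategy in Protocol~\ref{prot:trading} on $X$ with initial capital $\alpha$ satisfying $\K_t \ge cX_t + G(X_t^*)$. Conversely, any superhedge with initial capital $\beta < \alpha$ would, after the same normalization and rescaling to initial capital $1$ in the $Y$-market, contradict the ``only if'' half of Proposition~\ref{prop:insurance}: the resulting inequality $\int_1^{\infty} G(X_0 y)/(\beta y^2) \dd y \le 1 - cX_0/\beta$ rearranges to $\alpha \le \beta$.

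For the European option the lower bound follows from the American one by a freezing argument: given any European superhedge $\K$ with initial capital $\beta$ and any finite situation $\sigma = (X_0, \ldots, X_T)$, extending Market's play by $X_t := X_T$ for $t > T$ keeps $\K_t$ constant after $T$, so $\liminf_{t \to \infty} \K_t = \K(\sigma)$ while the payoff at $\infty$ equals $cX_T + G(X_T^*)$. Thus $\K(\sigma) \ge cX_T + G(X_T^*)$ for every $\sigma$, i.e., $\K$ is an American superhedge, forcing $\beta \ge \alpha$. For the upper bound, I combine, for each $\epsilon > 0$, the American hedge (initial capital $\alpha$) with an oscillation-harvesting auxiliary strategy of initial capital $\epsilon$ whose capital diverges on every trajectory for which $\lim_{t \to \infty} X_t$ fails to exist in $[0,\infty]$; such a strategy is obtained by mixing, over rational pairs $a < b$, the standard buy-when-$X \le a$/sell-when-$X \ge b$ strategies (the game-theoretic analogue of Doob's upcrossing inequality). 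On paths with a limit, the bound $\K_t \ge cX_t + G(X_t^*)$ passes to the $\liminf$ and delivers $cX_\infty + G(X_\infty^*)$, with any discontinuity of $G$ at $X_\infty^*$ handled exactly as in Corollary~\ref{cor:pricing-1}; on oscillating paths, the auxiliary capital drives $\liminf \K_t$ to $\infty = cX_\infty + G(X_\infty^*)$. Letting $\epsilon \downarrow 0$ gives upper price at most $\alpha$; the infimum need not be attained, which is the exception noted just after~(\ref{eq:upper-expectation-2}).

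The principal obstacle is the European upper bound on oscillating trajectories, where the American hedge alone does not force $\liminf \K_t = \infty$; the remedy is the oscillation-harvesting device standard in game-theoretic probability. Everything else is bookkeeping through Proposition~\ref{prop:insurance}.
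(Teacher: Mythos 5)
Your proof is correct and follows essentially the same route as the paper: the American part and the European lower bound reduce to Proposition~\ref{prop:insurance} via normalization (this is the ``argument in Section~\ref{sec:introduction}'' that the paper's proof invokes), and the only genuinely new ingredient---an auxiliary strategy of initial capital $\epsilon$ whose capital tends to infinity on every non-convergent price path, i.e., the game-theoretic Doob upcrossing construction---is exactly what the paper uses, citing Lemma~4.5 of \cite{shafer/vovk:2001}. (Only a trivial bookkeeping slip: to pass from the normalized market $Y$ back to $X$ you multiply positions by $\alpha/X_0$ but capital by $\alpha$, so that the initial capital becomes $\alpha$ as claimed.)
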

\begin{proof}
  The only statement going beyond the argument in Section~\ref{sec:introduction}
  is the one about European options;
  namely, we need to justify the convention
  $cX_{\infty}:=\infty$ when $\lim_{t\to\infty}X_t$ does not exist.
  By the argument in Doob's martingale convergence theorem
  (see, e.g., \cite{shafer/vovk:2001}, Lemma~4.5),
  there exists a strategy $\Pi$ for Investor
  such that $\K^{1,\Pi}$ is always positive
  and $\K^{1,\Pi}(X_1,\ldots,X_t)\to\infty$ as $t\to\infty$
  when $\lim_{t\to\infty}X_t$ does not exist.
  Finally, we can replace the initial capital 1 of $\K^{1,\Pi}$
  by an arbitrarily small $\epsilon>0$.
\end{proof}

\subsection*{Pricing at time $s>0$}

A natural question is what the upper price of the perpetual American option
with payoff $G(X^*_t)$
is at a time $s>0$.
The answer can be obtained by applying the formula
$X_0\int_{X_0}^{\infty}G(x)x^{-2}\dd x$
to the function $x\mapsto G(X^*_s\vee x)$ 
(where $u\vee v$ stands for $\max(u,v)$)
in place of $G(x)$
and to $X_s$ in place of $X_0$;
this gives $X_s\int_{X_s}^{\infty} G(X^*_s \vee x)x^{-2}\dd x$.
The same argument is also applicable to the corresponding European option.
We state this as the following corollary.

\begin{corollary}
  Let $G:[1,\infty)\to[0,\infty)$ be an increasing function.
  The upper price in a situation $(X_0,\ldots,X_s)$ such that $X_s>0$
  of the perpetual American option with payoff $G(X^*_t)$ is
  $
    X_s\int_{X_s}^{\infty} G(X^*_s \vee x)x^{-2}\dd x
  $,
  where $X^*_s:=\max_{i\le s}X_i$.
  The upper price in a situation $(X_0,\ldots,X_s)$, $X_s>0$,
  of the European option paying $G(X^*_{\infty})$
  at $\infty$
  is also
  $
    X_s\int_{X_s}^{\infty} G(X^*_s \vee x)x^{-2}\dd x
  $.
\end{corollary}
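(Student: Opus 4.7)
The plan is to reduce the statement to the $s=0$ case handled by Corollary~\ref{cor:pricing-1}, by viewing the continuation of the game from time $s$ onward as a fresh instance of Protocol~\ref{prot:trading} with initial price $X_s$. Fix a situation $\iota=(X_0,\ldots,X_s)$ with $X_s>0$ and write $M:=X^*_s=\max_{i\le s}X_i$. The key observation is that for any continuation and any $t\ge s$, the running maximum factors as $X^*_t=M\vee\max_{s<i\le t}X_i$, so the payoff of the American option may be rewritten as $G(X^*_t)=\tilde G(\max_{s\le i\le t}X_i)$, where
\[
  \tilde G(x):=G(M\vee x),\quad x\in[0,\infty),
\]
is again an increasing positive function. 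Similarly for the European option, $G(X^*_\infty)=\tilde G(\sup_{i\ge s}X_i)$.

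First I would prove the $\le$ inequality. By Corollary~\ref{cor:pricing-1} applied to $\tilde G$ and initial price $X_s$, there is a trading strategy $\tilde\Pi$ and a capital process starting from $\tilde\alpha:=X_s\int_{X_s}^{\infty}\tilde G(x)x^{-2}\dd x$ that superhedges both $\tilde G(X^{**}_t)$ at every $t\ge 0$ (American) and $\tilde G(X^{**}_\infty)$ at infinity (European), where $X^{**}_t$ denotes the running maximum of the fresh game. Splice this into the original game: in situations extending $\iota$, have Investor play $\Pi(X_0,\ldots,X_{s+r-1}):=\tilde\Pi(X_s,X_{s+1},\ldots,X_{s+r-1})$ for $r\ge 1$ and play arbitrarily on situations not extending $\iota$. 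Up to time $s$ no capital is required (we evaluate $\K$ at $\iota$), and from time $s$ onward the capital process initialised at $\tilde\alpha$ in situation $\iota$ dominates $\tilde G(\max_{s\le i\le t}X_i)=G(X^*_t)$ for all $t\ge s$, and likewise $G(X^*_\infty)$ in the European case. Hence $\UpExpect(\cdot\givn\iota)\le\tilde\alpha$.

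For the $\ge$ direction, I would argue that any capital process $\K$ with $\K(\iota)<\tilde\alpha$ fails to superhedge, by invoking the lower-bound argument used inside the proofs of Proposition~\ref{prop:SLA} and Corollary~\ref{cor:SLA}. Concretely, the restriction $\K'(X_s,X_{s+1},\ldots,X_{s+r}):=\K(X_0,\ldots,X_s,X_{s+1},\ldots,X_{s+r})$ is a capital process in the fresh game on $Y_r:=X_{s+r}$ with initial value $\K(\iota)$. If $\K(\iota)<\tilde\alpha$, then by Corollary~\ref{cor:pricing-1} applied to $\tilde G$ at initial price $X_s$, there is a continuation of the $Y$-game on which $\K'$ fails to dominate $\tilde G(Y^*_t)=G(X^*_t)$ at some $t$ (or fails the European $\liminf$ condition), and this continuation lifts to a continuation of the original game extending $\iota$ witnessing the failure of $\K$ to superhedge. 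Therefore $\UpExpect(\cdot\givn\iota)\ge\tilde\alpha$, and substituting $\tilde G(x)=G(M\vee x)$ gives the claimed formula.

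The only genuine issue is bookkeeping: checking that $\tilde G$ inherits the hypotheses of Corollary~\ref{cor:pricing-1} (increasing and positive, which is immediate), and verifying that the splicing and restriction operations above produce bona fide capital processes in the two protocols. Neither of these presents a real obstacle, and once they are set down carefully the two inequalities above combine to give the corollary for both the American and the European variant simultaneously.
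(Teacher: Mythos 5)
Your proposal is correct and follows essentially the same route as the paper, which proves this corollary only by the remark preceding it: apply the time-$0$ formula of Corollary~\ref{cor:pricing-1} to the function $x\mapsto G(X^*_s\vee x)$ in place of $G$ and to $X_s$ in place of $X_0$. Your write-up merely makes explicit the splicing and restriction of capital processes that the paper leaves implicit.
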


\subsection*{More general American lookbacks, I}

Let $F(X^*,X)$ be a positive function whose domain includes all $(X^*,X)$
with $X^*>0$ and $X\in[0,X^*]$.
In this subsection
we will discuss the upper price in a situation $X_0>0$ of the American option
paying $F(X^*_t,X_t)$ at a time $t$ of the owner's choice.
To do this,
we first notice that the formula (\ref{eq:spine2ALA})
for transition from a spine to the corresponding ALA
can be applied to any concave increasing function
with domain $[X_0,\infty)$.
Formally,
we define an operator $G\mapsto\overline{G}$
on the concave increasing functions $G:[X_0,\infty)\to\bbbr$
by
\begin{align} 
  \overline{G}(X^*,X)
  :=
  G(X^*)
  +
  G\rd(X^*)&
  (X-X^*),\\
  &X^*\in[X_0,\infty),
  \enspace
  X\in[0,X^*]
  \label{eq:domain}
\end{align}
(our notation does not reflect the dependence of this operator on $X_0$).

The upper price $\UpExpect(F\givn X_0)$ of the American option paying $F(X^*_t,X_t)$
can be determined in two steps:
\begin{itemize}
\item
  Let $H:[X_0,\infty)\to[0,\infty)$ be the smallest concave increasing function
  such that $\overline{H}\ge F$ in the domain (\ref{eq:domain}).
  (The function $H$ can be defined as the infimum of all concave increasing functions $G$
  satisfying $\overline{G}\ge F$;
  the inequality $\overline{H}\ge F$ then follows from Lemma \ref{lem:exists} below.
  If such $G$ do not exist, set $H:=\infty$ on $[X_0,\infty)$.)
\item
  The function $H$ determines $\UpExpect(F\givn X_0)$ via
  \begin{equation}\label{eq:recipe}
    \UpExpect(F\givn X_0)
    =
    H(X_0).
  \end{equation}
\end{itemize}
Given the initial capital $H(X_0)$ in the situation $X_0$,
the option's seller can meet his obligation by holding
$p_t:=H\rd(X^*_{t-1})$
units of $X$ at time $t$.
And Theorem~\ref{thm:general} implies that $H(X_0)$
is the smallest initial capital allowing the option's seller
to meet his obligation for sure.

\begin{lemma}\label{lem:exists}
  Let $X_0>0$ and $\{G^{\alpha}\st\alpha\in A\}$
  be an indexed set of positive concave increasing functions $G^{\alpha}(X^*,X)$,
  where $(X^*,X)$ ranges over the domain (\ref{eq:domain}).
  Then
  $$
    \overline{\inf_{\alpha\in A}G^{\alpha}}
    \ge
    \inf_{\alpha\in A}
    \overline{G^{\alpha}}.
  $$
\end{lemma}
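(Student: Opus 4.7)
My plan is to fix a point $(X^*, X)$ in the domain (\ref{eq:domain}) and work pointwise; write $H := \inf_{\alpha \in A} G^{\alpha}$, which, as a pointwise infimum of positive concave increasing functions on $[X_0,\infty)$, is again positive, concave, and increasing, so $\overline{H}$ is well-defined. At $X = X^*$ both sides of the asserted inequality reduce to $\inf_\alpha G^\alpha(X^*) = H(X^*)$, so I would restrict to the case $X < X^*$. The strategy is to approximate $H$ at $X^*$ by a minimizing sequence of the $G^\alpha$ and then transfer information from values to right derivatives.

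First I would pick a sequence $(\alpha_n) \subset A$ with $G^{\alpha_n}(X^*) \downarrow H(X^*)$. Concavity of each $G^{\alpha_n}$ gives the tangent-line bound $G^{\alpha_n}(X') \le G^{\alpha_n}(X^*) + G^{\alpha_n}\rd(X^*)(X' - X^*)$ for every $X' > X^*$, and combining this with $H(X') \le G^{\alpha_n}(X')$ yields
\begin{equation*}
  \frac{H(X') - H(X^*)}{X' - X^*}
  \le
  \frac{G^{\alpha_n}(X^*) - H(X^*)}{X' - X^*}
  + G^{\alpha_n}\rd(X^*).
\end{equation*}
Passing to $\liminf_n$ on the right first (this kills the first summand, whose numerator vanishes) and then taking $\sup_{X' > X^*}$ on the left produces $H\rd(X^*) \le \liminf_n G^{\alpha_n}\rd(X^*)$.

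To finish, I would extract a subsequence along which $G^{\alpha_n}\rd(X^*) \to s \in [H\rd(X^*),\infty]$. If $s < \infty$, then
\begin{equation*}
  \inf_{\alpha \in A} \overline{G^\alpha}(X^*, X)
  \le
  \lim_n \overline{G^{\alpha_n}}(X^*, X)
  =
  H(X^*) + s(X - X^*),
\end{equation*}
and since $X - X^* < 0$ while $0 \le H\rd(X^*) \le s$, we have $s(X - X^*) \le H\rd(X^*)(X - X^*)$, so the right-hand limit is at most $\overline{H}(X^*, X)$. If $s = \infty$, the same limit equals $-\infty$ for $X < X^*$, so the infimum is $-\infty$ and the inequality holds vacuously.

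The main technical subtlety is the order of operations in the derivative bound: one must take $\liminf_n$ \emph{before} $\sup_{X' > X^*}$, since otherwise the singular term $[G^{\alpha_n}(X^*) - H(X^*)]/(X' - X^*)$ blows up as $X' \downarrow X^*$ and spoils the estimate. Everything else is routine concavity bookkeeping combined with tracking the sign of $X - X^*$.
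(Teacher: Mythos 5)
Your proof is correct and is essentially the paper's argument: both hinge on replacing right derivatives by difference quotients over a small interval to the right of $X^*$, using $G^{\alpha}(X^*)\approx H(X^*)$ together with $G^{\alpha}\ge H$ to compare slopes, and then exploiting the sign of $X-X^*$. The paper packages this as a proof by contradiction with a single well-chosen $\Delta$ and a single $\alpha$ close to the infimum, whereas you run it directly with a minimizing sequence, a $\liminf$ of the right derivatives, and a case split on whether that $\liminf$ is finite, but the underlying mechanism is identical.
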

\begin{proof}
  Let 
  $
    \inf_{\alpha\in A}
    \overline{G^{\alpha}}
    \ge
    F
  $,
  i.e.,
  $\overline{G^{\alpha}}\ge F$ for all $\alpha\in A$.
  Our goal is to prove $\overline{H}\ge F$,
  where $H:=\inf_{\alpha\in A}G^{\alpha}$.
  Fix an arbitrary $(X^*,X)$ in the domain (\ref{eq:domain}).
  Our goal reduces to proving $\overline{H}(X^*,X)\ge F(X^*,X)$.

  \ifFULL\bluebegin
    We start from the index set $A:=\{1,2\}$ and write $G_{\alpha}$ for $G^{\alpha}$.
    First consider the case $G_1(X^*,X)\ne G_2(X^*,X)$;
    without loss of generality, suppose $G_1(X^*,X)<G_2(X^*,X)$.
    We then have $\overline{H}(X^*,X)=\overline{G_1}(X^*,X)\ge F(X^*,X)$.
    It remains to consider the case $G_1(X^*,X)= G_2(X^*,X)$;
    In this case $H^r(X^*)=G_1^r(X^*)\wedge G_2^r(X^*)$
    (where $u\wedge v$ stands for $\min(u,v)$);
    without loss of generality suppose $G_1^r(X^*)\le G_2^r(X^*)$.
    We then again have $\overline{H}(X^*,X)=\overline{G_1}(X^*,X)\ge F(X^*,X)$.

    Now let $A$ be an arbitrary set.
  \blueend\fi

  Suppose $\overline{H}(X^*,X)\ge F(X^*,X)$ is false, i.e.,
  $$
    H(X^*) + H\rd(X^*)(X-X^*)
    <
    F(X^*,X).
  $$
  Taking $\Delta>0$ small enough, we obtain
  $$
    H(X^*) + \frac{H(X^*+\Delta)-H(X^*)}{\Delta}(X-X^*)
    <
    F(X^*,X).
  $$
  Choosing $\alpha\in A$
  such that $G^{\alpha}(X^*)$ is close enough to $H(X^*)$,
  we obtain
  $$
    G^{\alpha}(X^*) + \frac{G^{\alpha}(X^*+\Delta)-G^{\alpha}(X^*)}{\Delta}(X-X^*)
    <
    F(X^*,X),
  $$
  which implies
  $$
    G^{\alpha}(X^*) + G^{\alpha}\rd(X^*)(X-X^*)
    <
    F(X^*,X),
  $$
  which contradicts our assumption $\overline{G^{\alpha}}\ge F$.
\end{proof}

\subsection*{More general American lookbacks, II}

The lookbacks paying $X_t^*$ at some time $t$
that have been our motivation in this article
are the most basic ones,
but several other kinds have been considered in literature.
According to the standard nomenclature,
the full name for the American option paying $X^*_t$
at time $t\in[0,\infty)$
is ``perpetual American lookback call option with fixed strike 0''.
Fixing a finite maturity date $T$ does not change much
(it does not change anything at all
in our probability-free framework in the case of continuous time;
we have chosen the discrete-time framework in this article
only for simplicity).

Let $G$ be a positive increasing function.
Replacing the strike 0 by $c>0$ will change the pricing formula
for adjusted American lookbacks:
it is easy to see that the upper price in a situation $X_0>0$ of the American option
paying $G((X^*_t-c)^+)$ is
$X_0\int_{X_0}^{\infty}G((x-c)^+)x^{-2}\dd x$.
The other popular kinds of American lookbacks are:
\begin{itemize}
\item
  the American lookback put option with fixed strike $c$,
  whose payoff is $(c-\min_{s\le t}X_s)^+$;
\item
  the American lookback call option with floating strike,
  whose payoff is $X_t-\min_{s\le t}X_s$;
\item
  the American lookback put option with floating strike,
  whose payoff is $X^*_t-X_t$.
\end{itemize}
The first two payoffs depend on $\min_{s\le t}X_s$,
and so the methods of this article are not applicable to them.
The adjusted version of the last one
can be easily dealt with by our methods:
applying the recipe (\ref{eq:recipe}) to $F(X^*_t,X_t):=G(X^*_t-X_t)$,
we obtain $\UpExpect(F\givn X_0)=\UpExpect(F'\givn X_0)$, where
$$
  F'(X^*,X)
  :=
  F(X^*,0)
  =
  G(X^*).
$$
(Indeed, since $\overline{H}(X^*,X)$ is increasing in $X$
and $G(X^*-X)$ is decreasing in $X$,
the inequality $\overline{H}(X^*,X)\ge G(X^*-X)$ holds for all $(X^*,X)$
if and only if $\overline{H}(X^*,X)\ge G(X^*)$ holds for all $(X^*,X)$.)
Therefore,
by Corollary~\ref{cor:pricing-1},
$\UpExpect(F\givn X_0)=X_0\int_{X_0}^{\infty}G(x)x^{-2}\dd x$.
In other words, the term ``${}-X_t$'' in $G(X^*_t-X_t)$ does not help.

\section{Other connections with literature}
\label{sec:literature}

In addition to Hobson's approach mentioned in Section~\ref{sec:introduction},
this article's results have links with the recent probability-free version
\cite{\GTPxxviii}
(motivated by \cite{takeuchi/etal:2009})
of Dubins and Schwarz's \cite{dubins/schwarz:1965}
reduction of continuous martingales to Brownian motion
and with the Az\'ema--Yor solution \cite{azema/yor:1979}
to the Skorokhod embedding problem.

\subsection*{Risk-neutral probability measures}

\ifFULL\bluebegin
  The most basic risk-neutral probability measure considered in this article
  is the measure $Q(\dd y)$ on $[1,\infty)$ with density $y^{-2}$.
  In \cite{\GTPxxxiii} and \cite{dawid/etal:2011},
  on a few occasions we have used the fact that $Q$ is the distribution
  of the random variable $1/U$
  where $U$ is distributed uniformly on $[0,1]$.
\blueend\fi

In Section~\ref{sec:introduction},
we noticed that (\ref{eq:upper-price})
is the expected value of $G$ w.r.\ to the probability measure $Q_{X_0}$
on $[X_0,\infty)$ with density $X_0x^{-2}$.
In this somewhat informal subsection we will discuss the origins of $Q_{X_0}$.

A natural interpretation of $Q_{X_0}$ can be given
in the case of continuous time $[0,\infty)$
and a continuous price path $X_t$, $t\in[0,\infty)$.
For the details of the definition of capital processes, upper prices, etc.,
in continuous time,
see \cite{\GTPxxviii}.
It is easy to see that this article's results carry over to this continuous-time framework.
In particular, the upper price at time 0 of the European option
paying $G(X^*_{\infty})$ at time $\infty$,
where $G$ is a positive increasing function,
is equal to the expected value
$X_0\int_{X_0}^{\infty}G(x)x^{-2}\dd x$
with respect to the risk-neutral probability measure
$X_0x^{-2}\dd x$ on $[X_0,\infty)$.
In this section we will additionally assume that the function $G$
is bounded.

In the case of continuous price paths,
the emergence of the risk-neutral probability measure $X_0 x^{-2}\dd x$
on $[X_0,\infty)$
can be regarded as a corollary of the emergence of Brownian motion
discussed in \cite{\GTPxxviii}.
Indeed,
by Theorem~6.2 of \cite{\GTPxxviii},
the upper price of $G(X^*_{\infty})$ in the situation $X_0$
is equal to the expected value $\int G(X_0+\omega^*_{\tau})W(\dd\omega)$,
where $W$ is the Wiener measure on $\omega\in C([0,\infty))$
and $\tau:=\inf\{t\st X_0+\omega_t=0\}$.
In other words,
the upper price of $G(X^*_{\infty})$ in $X_0$
can be obtained by averaging $G$ with respect to the distribution $Q$
of the maximum of Brownian motion
started at $X_0$ and stopped when it hits 0.
The density of $Q$ is $X_0 x^{-2}$,
in agreement with this article's results;
indeed, the probability that Brownian motion started at $X_0$ hits level $x\ge X_0$,
before hitting 0
is $X_0/x$
(see, e.g., \cite{morters/peres:2010}, Theorem 2.49;
this follows from Brownian motion being a martingale);
therefore, the distribution function of $Q$ is $1-X_0/x$,
and its density is $X_0/x^2$.
This intuitive picture for the risk-neutral measure
was used in the informal parts of the proofs
of Propositions~\ref{prop:SLA} and~\ref{prop:insurance}.

It is easy to see that Brownian motion
can be replaced by any martingale in a wide class $\CCC$ of martingales.
By Dubins and Schwarz's classic result \cite{dubins/schwarz:1965},
each continuous martingale that is nowhere constant and unbounded almost surely
is a time-transformed Brownian motion;
therefore, we can include all such martingales in $\CCC$.

But it is clear that the class of allowable martingales is much wider;
e.g., in \cite{dawid/etal:2011} we used the martingale whose trajectories
are of the form
$$
  X_t
  =
  \begin{cases}
    1 & \text{if $t\le 1$}\\
    t & \text{if $1<t\le T$}\\
    0 & \text{otherwise},
  \end{cases}
$$
where $T\ge1$ depends on the trajectory
(we say ``the'' as this condition completely determines
the distribution of the martingale's trajectories).
The informal arguments in the proofs
of Propositions~\ref{prop:SLA} and~\ref{prop:insurance}
could have been based on this martingale rather than Brownian motion
(analogously to the proof of an analogous statement in \cite{dawid/etal:2011}:
cf.\ the end of the proof of Theorem~1 in \cite{dawid/etal:2011}).

In general, we can extend $\CCC$ by adding to it all right-continuous martingales $X_t$
that never make upward jumps when they are positive,
never make downward jumps from strictly positive to strictly negative values,
and such that
$\liminf_{t\to\infty}X_t\le0$ or $\limsup_{t\to\infty}X_t=\infty$ almost surely.
To see this,
use the standard martingale argument
given in \cite{morters/peres:2010}, Theorem 2.49.
(We assume that the first time when $X_t$ reaches or crosses some level
is a stopping time;
this will be the case for a reasonable choice of the definitions.)

\begin{remark}
  A very informal picture inspired by the use of improper priors in Bayesian statistics
  is that there is just one risk-neutral measure $Q$,
  with density $y^{-2}$ on $(0,\infty)$,
  and each probability distribution $Q_{X_0}$ for $X^*_{\infty}$
  is obtained from $Q$ by conditioning on the event $X^*_{\infty}\ge X_0$.
\end{remark}

\subsection*{ALAs and the Az\'ema--Yor solution to the Skorokhod embedding problem}

Let $X_t$, $t\in[0,\infty)$, be Brownian motion started at $0$.
Wald's lemmas (see, e.g., \cite{morters/peres:2010}, Theorems 2.44 and 2.48)
say that if $\tau$ is a stopping time with $\Expect\tau<\infty$,
we have $\Expect X_{\tau}=0$ and $\Expect X_{\tau}^2=\Expect\tau$.
The Skorokhod embedding problem goes in the opposite direction:
given a random variable $\xi$ with $\Expect\xi=0$ and $\Expect\xi^2<\infty$,
find a stopping time $\tau$ such that $X_{\tau}$ is distributed as $\xi$
and $\Expect\tau<\infty$ (i.e., $\Expect\tau=\Expect\xi^2$).
For a recent review of solutions to the Skorokhod embedding problem,
see \cite{obloj:2004}.

The most well-known solution to the Skorokhod embedding problem
is given by Az\'ema and Yor \cite{azema/yor:1979}.
It is based on the fact that if $f$ is a $C^1$ function,
the process $f(X^*_t)+(X_t-X_t^*)f\rd(X^*_t)$ is a local martingale.
(For a definitive generalization of this fact, see \cite{obloj:2006}.)
In other words, if $F^=$ is a $C^1$ function,
the process $F(X^*_t,X_t)$,
where $F$ is defined by (\ref{eq:spine2ALA}),
is a local martingale.
Therefore, the Az\'ema--Yor solution
is based on the notion of ALA
in which our requirements on a spine
are replaced by the requirement that a spine should be a $C^1$ function.

\section{Insuring against loss of evidence}
\label{sec:GTP}

In this section we will apply our results about insuring against loss of capital
to the problem of insuring against loss of evidence.
The latter problem was the topic of \cite{\GTPxxxiii}
in the standard framework of measure-theoretic probability;
we will consider the more general framework of game-theoretic probability.

In game-theoretic probability
(see, e.g., \cite{shafer/vovk:2001})
Sceptic tries to prove Forecaster wrong
by gambling against him:
the values of Sceptic's capital $\K_t$ measure the changing evidence against Forecaster.
We assume that Sceptic's initial capital is $\K_0=1$,
and that Sceptic is required to ensure that $\K_t\ge0$ at each time $t$.

Sceptic can lose as well as gain evidence.
At a time $t$ when $\K_t$ is large
Forecaster's performance looks poor,
but then $\K_i$ for some later time $i$ may be lower
and make Forecaster look better.
Our result (a simple corollary of the results of the previous sections)
will show that, for a modest cost, Sceptic can avoid losing too much evidence.

Suppose we exaggerate the evidence against Forecaster
by considering not the current value $\K_t$ of Sceptic's capital
but the greatest value so far:
$
  \K^*_t := \max_{s\le t} \K_s
$.
We will see that there are many functions $F:[1,\infty)\to[0,\infty)$
such that
\begin{enumerate}
\item
  $F(y)\to\infty$ as $y\to\infty$ almost as fast as $y$, and
\item
  Sceptic's moves can be modified on-line
  in such a way that the modified moves lead to capital
  \begin{equation}\label{eq:goal}
    \K'_t
    \ge
    F(\K_t^*),
    \quad
    t=1,2,\mathenddots
  \end{equation}
\end{enumerate}
If we are dissatisfied by the asymptotic character of the first of these two conditions,
which does not prevent $\K'_t/\K_t$ from becoming very small for some $t$,
we can compromise by putting a fraction $c\in(0,1)$ of the initial capital
on Sceptic's original moves and the remaining fraction $1-c$ on the modified moves,
thus obtaining capital $c\K_t+(1-c)\K'_t$ at each time $t$.
This way Sceptic may sacrifice a fraction $1-c$ of his capital
but gets extra insurance against losing evidence.

As we will see (in Corollary~\ref{cor:GTP}), the set of functions $F$
for which (\ref{eq:goal}) can be achieved
is exactly the set of all SLAs.

Our prediction protocol
(Protocol~\ref{prot:competitive-scepticism})
involves four players:
Forecaster, Sceptic, Rival Sceptic, and Reality.
The parameter of the protocol is a set $\XXX$,
from which Reality chooses her moves;
$\mathbf{E}$ is the set of all ``outer probability contents'' on $\XXX$
(to be defined shortly).
We always assume that $\XXX$ contains at least two distinct elements.
The reader who is not interested in the most general statement of our result
can interpret $\mathbf{E}$ as the set of all expectation functionals
$\EEE:f\mapsto\int f \dd P$,
$P$ being a probability measure on a fixed $\sigma$-algebra on $\XXX$;
in this case Sceptic and Rival Sceptic are required to output
functions that are measurable w.r.\ to that $\sigma$-algebra.

\begin{protocol} 
  \caption{Competitive scepticism}
  \label{prot:competitive-scepticism}
  \begin{algorithmic}
    \STATE $\K_0:=1$ and $\K'_0:=1$
    \FOR{$t=1,2,\dots$}
      \STATE Forecaster announces $\EEE_t\in\mathbf{E}$
      \STATE Sceptic announces $f_t\in[0,\infty]^{\XXX}$ such that $\EEE_t(f_t)\le\K_{t-1}$
      \STATE Rival Sceptic announces $f'_t\in[0,\infty]^{\XXX}$
        such that $\EEE_t(f'_t)\le\K'_{t-1}$
      \STATE Reality announces $x_t\in\XXX$
      \STATE $\K_t:=f_t(x_t)$ and $\K'_t:=f'_t(x_t)$
    \ENDFOR
  \end{algorithmic}
\end{protocol}

In general,
an \emph{outer probability content} on $\XXX$
is a function $\EEE:\bbbrbarXXX\to\bbbrbar$
(where $\bbbrbarXXX$ is the set of all functions $f:\XXX\to\bbbrbar$)
that satisfies the following four axioms:
\begin{enumerate}
\item 
  If $f,g\in\bbbrbarXXX$ and $f\le g$,
  then $\EEE(f)\le\EEE(g)$.
\item 
  If $f\in\bbbrbarXXX$ and $c\in(0,\infty)$, then $\EEE(cf)=c\EEE(f)$.
\item 
  If $f,g\in\bbbrbarXXX$, then $\EEE(f+g)\le\EEE(f)+\EEE(g)$.
\item 
  For each $c\in\bbbr$,
  $\EEE(c)=c$,
  where the $c$ in parentheses
  is the function in $\bbbrbarXXX$ that is identically equal to $c$.
\end{enumerate}
An axiom of $\sigma$-subadditivity on $[0,\infty]^{\XXX}$
is sometimes added to this list,
but we do not need it in this article.
(And it is surprising how rarely it is needed in general:
see, e.g., \cite{\GTPxxix}.)

\begin{remark}
  There is a dazzling array of terms that have been used in place
  of our ``outer probability contents''.
  In our terminology we follow \cite{hoffmann-jorgensen:1987} and \cite{\GTPxxix}.
  Upper previsions studied in the theory of imprecise probabilities
  (see, e.g., \cite{decooman/hermans:2008})
  are closely related to (but somewhat more restrictive than)
  outer probability contents.
  Coherent risk measures introduced in \cite{artzner/etal:1999}
  are essentially outer probability contents,
  but applied to $-f$ in place of $f$.
  A lot of different terms have been used by numerous authors
  developing \cite{artzner/etal:1999}.
\end{remark}

Protocol~\ref{prot:competitive-scepticism} describes a perfect-information game
in which Sceptic tries to discredit the outer probability contents $\EEE_t$
issued by Forecaster
as a faithful description of Reality's $x_t\in\XXX$.
On each round Sceptic and Rival Sceptic choose gambles $f_t$ and $f'_t$
on how $x_t$ is going to come out,
and their resulting capitals are $\K_t$ and $\K'_t$, respectively.
Discarding capital is allowed,
but Sceptic and Rival Sceptic are required to ensure
that $\K_t\ge0$ and $\K'_t\ge0$, respectively;
this is achieved by requiring that $f_t$ and $f'_t$ should be positive.

\begin{corollary}\label{cor:GTP}
  Let $F:[1,\infty)\to[0,\infty)$ be an increasing function.
  In Protocol~\ref{prot:competitive-scepticism},
  Rival Sceptic can ensure (\ref{eq:goal})
  if and only if $F$ is an SLA.
  More generally, let $c\in[0,1)$.
  Rival Sceptic can ensure
  \begin{equation}\label{eq:Rival-Sceptic-capital}
    \K'_t\ge c\K_t+F(\K_t^*),
    \enspace
    \forall t,
  \end{equation}
  if and only if $F/(1-c)$ is an SLA.
\end{corollary}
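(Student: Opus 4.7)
The plan is to reduce both directions of Corollary~\ref{cor:GTP} to Proposition~\ref{prop:SLA} by treating Sceptic's capital $\K_t$ as the ``security price'' of Protocol~\ref{prot:competitive-trading}. For the ``if'' direction, Rival Sceptic simulates an Investor trading in this security; for the ``only if'' direction, Forecaster and Sceptic conspire so that $\K_t$ evolves as an additive random walk, after which the Brownian motion argument used in Proposition~\ref{prop:SLA} applies.

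\textbf{The ``if'' direction.} Assume $F$ is an SLA; by Lemma~\ref{lem:ASLA} fix a probability measure $P$ on $[1,\infty]$ with $F(y)=\int_{[1,y]}u\,P(\dd u)$. Set $p_t:=P((\K^*_{t-1},\infty])$ and let Rival Sceptic play $f'_t:=\K'_{t-1}+p_t(f_t-\K_{t-1})$. Since $p_t\ge0$, positive homogeneity, subadditivity, and the constant axiom for $\EEE_t$ give $\EEE_t(f'_t)=p_t\EEE_t(f_t)+\K'_{t-1}-p_t\K_{t-1}\le\K'_{t-1}$. Imitating (\ref{eq:guarantee}) with the price path $X_t:=\K_t$, and letting $\K^{(u)}_t$ denote the capital of the Investor strategy $p^{(u)}$ from (\ref{eq:strategy}) run against $\K_t$, induction shows that $\K'_t=\int_{[1,\infty]}\K^{(u)}_t\,P(\dd u)$, which is pointwise at least $\int_{[1,\K^*_t]}u\,P(\dd u)=F(\K^*_t)$ and at least $\int_{(\K^*_{t-1},\infty]}\K_{t-1}\,P(\dd u)=p_t\K_{t-1}$. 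The latter bound (at time $t-1$) forces $f'_t\ge0$, as required by the protocol.

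\textbf{The ``only if'' direction.} Suppose Rival Sceptic can ensure (\ref{eq:goal}). For each $N\in\bbbn$, consider the randomized play in which Forecaster announces $\EEE_t$ as the expectation under the two-point distribution placing mass $1/2$ at each of $\K_{t-1}\pm1/N$ (with appropriate absorbing modification when $\K_{t-1}<1/N$), Sceptic plays $f_t(x):=x$ (which satisfies $\EEE_t(f_t)=\K_{t-1}$), and Reality draws $x_t$ from the same distribution. Then $\K_t$ is a stopped additive random walk started at $1$, and the constraint $\EEE_t(f'_t)\le\K'_{t-1}$ makes $\K'_t$ a positive supermartingale with $\Expect\K'_t\le1$. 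Combining $\K'_t\ge F(\K^*_t)$ with Fatou's lemma yields $\Expect F(\K^*_\infty)\le1$; letting $N\to\infty$ so that $\K^*_\infty$ converges in distribution to the maximum of Brownian motion started at $1$ and stopped at $0$, with density $y^{-2}$ on $[1,\infty)$ (Section~\ref{sec:literature}), gives $\int_1^\infty F(y)y^{-2}\,\dd y\le1$, so $F$ is an SLA.

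\textbf{General $c$.} For ``if'', Rival Sceptic splits her initial capital as $c+(1-c)$: use $c$ to track Sceptic (play $cf_t$, contributing $c\K_t$) and $1-c$ to run the above strategy for the SLA $F/(1-c)$ (contributing $F(\K^*_t)$); sublinearity of $\EEE_t$ makes the combined move admissible. For ``only if'', the same random walk embedding makes $\K_t$ a genuine martingale with $\Expect\K_t=1$, so $\K'_t\ge c\K_t+F(\K^*_t)$ and $\Expect\K'_t\le1$ yield $\Expect F(\K^*_\infty)\le1-c$, i.e., $F/(1-c)$ is an SLA. The main obstacle I anticipate is the bookkeeping of positivity $f'_t\ge0$ (handled by the $p_t\K_{t-1}$ lower bound above) together with the boundary behaviour of the random walk near $\K_{t-1}=0$ and the passage to the limit $N\to\infty$; all three are routine adaptations of steps already carried out in the proof of Proposition~\ref{prop:SLA}.
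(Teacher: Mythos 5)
Your ``if'' direction is essentially the paper's proof: the paper also takes the move (\ref{eq:Rival-Sceptic-move}), i.e.\ $f'_t:=\K'_{t-1}+p_t(f_t-\K_{t-1})$ with the positions $p_t\ge0$ supplied by an Investor strategy for Proposition~\ref{prop:insurance}, and checks admissibility via the axioms of outer probability contents exactly as you do (your explicit verification that $\K'_{t-1}\ge p_t\K_{t-1}$, hence $f'_t\ge0$, is in fact more detailed than the paper's). Your ``only if'' direction is organized differently: the paper simply observes that Protocol~\ref{prot:competitive-trading} embeds into Protocol~\ref{prot:competitive-scepticism} (taking $\EEE_t$ to be the superhedging functional, so that Rival Sceptic's options coincide with Investor's) and then invokes the already-established characterization in the trading protocol (Proposition~\ref{prop:insurance}/Corollary~\ref{cor:c}), whereas you instantiate a two-point-distribution Forecaster and redo the random-walk/supermartingale argument directly inside Protocol~\ref{prot:competitive-scepticism}. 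The two routes are mathematically the same content --- the paper's own proof of the ``only if'' halves of Propositions~\ref{prop:SLA} and~\ref{prop:insurance} is precisely your random-walk argument --- but the paper's reduction lets it lean on the fully rigorous version of that characterization (via Theorem~\ref{thm:general} and Corollary~\ref{cor:c}), while your inlined version remains at the informal level that the paper itself flags as ``easy to formalize''. Your self-contained version buys independence from Theorem~\ref{thm:general}; the paper's buys rigour for free.

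The one genuine omission is the case $\K_{t-1}=\infty$, which can occur because Sceptic's moves live in $[0,\infty]^{\XXX}$. There your defining formula $f'_t:=\K'_{t-1}+p_t(f_t-\K_{t-1})$ involves $f_t(x)-\infty$ and, when $\K'_{t-1}$ is finite and $p_t>0$, would produce $-\infty$, violating the requirement $f'_t\in[0,\infty]^{\XXX}$. The paper treats this separately: letting $s$ be the first time with $\K_s=\infty$, if $p_s>0$ then already $\K'_s=\infty$ and Rival Sceptic can play $f'_i:=\infty$ thereafter, while if $p_s=0$ then $c=0$ and $\K'_{s-1}\ge F(\infty)$, so playing $f'_i:=0$ from then on still secures (\ref{eq:Rival-Sceptic-capital}). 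You should add this case; everything else is a routine tightening of steps you have already indicated.
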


\noindent
The meaning of (\ref{eq:goal}) and (\ref{eq:Rival-Sceptic-capital})
when $\K_t^*=\infty$
is provided by the usual convention $F(\infty):=\lim_{y\to\infty}F(y)$.

\begin{proof}
  To establish the part ``if'',
  notice that Protocol~\ref{prot:competitive-scepticism} reduces
  to Protocol~\ref{prot:competitive-trading}
  (with Sceptic corresponding to Market
  and Rival Sceptic to Investor).
  In the latter, it is clear that any strategy for Investor
  ensuring (\ref{eq:insurance})
  always chooses $p_t\ge0$.
  Fix such a strategy $\Pi$.
  It can be used by Rival Sceptic in Protocol \ref{prot:competitive-scepticism}:
  if Sceptic's move on round $t$ is $f_t$
  and his capital at the beginning of the round is $\K_{t-1}<\infty$
  (so that $\EEE_t(f_t)\le\K_{t-1}$)
  and the strategy $\Pi$ recommends move $p_t$ for Investor,
  Rival Sceptic's move should be
  \begin{equation}\label{eq:Rival-Sceptic-move}
    f'_t
    :=
    \K'_{t-1} + p_t (f_t - \K_{t-1}).
  \end{equation}
  We will have both $\EEE_t(f'_t)\le\K'_{t-1}$
  and $\K'_t=\K'_{t-1}+p_t(\K_t-\K_{t-1})$.

  The case $\K_{t-1}=\infty$ has to be considered separately.
  Let $s\le t-1$ be the first time when $\K_s=\infty$.
  If $p_s>0$, we have $\K'_s=\infty$,
  and so we can set $f'_i:=\infty$ for all $i>s$;
  in particular, $\K'_t=\infty$.
  If $p_s=0$, we have $c=0$ and $\K'_{s-1}\ge F(\infty)$;
  therefore, (\ref{eq:Rival-Sceptic-capital}) will hold
  if we set $f'_i:=0$ for all $i\ge s$.

  The part ``only if'' follows from Protocol~\ref{prot:competitive-trading}
  being a special case of Protocol~\ref{prot:competitive-scepticism}.
  (One way to embed Protocol~\ref{prot:competitive-trading}
  into Protocol~\ref{prot:competitive-scepticism}
  is to set $\XXX:=[0,\infty)$ and make Forecaster output
  $$
    \EEE_t(f)
    :=
    \inf\{\K\st\exists p\in\bbbr\;\forall x\in\XXX:\K+p(x-X_{t-1})\ge f(x)\}
  $$
  on round $t$.)
\end{proof}

We refrain from giving a similar restatement of Theorem~\ref{thm:general}.

It is easy to see that Algorithm~\ref{alg:explicit}
is applicable not only in the financial context of Section~\ref{sec:algorithms}
but also in the context of Protocol~\ref{prot:competitive-scepticism}.
Namely, on round $t$ of Protocol~\ref{prot:competitive-scepticism}
Rival Sceptic should choose the move (\ref{eq:Rival-Sceptic-move}),
where $p_t$ is output by Algorithm~\ref{alg:explicit}.

In \cite{\GTPxxxiii} we use a simple method based on L\'evy's zero-one law
to prove a result similar to Corollary~\ref{cor:GTP}
that can be used for insuring against loss of evidence
in measure-theoretic probability and statistics.
As we explain there, 
the value $\K_t$ of the capital process is the dynamic version of Bayes factors,
and its running maximum $\K^*_t$ is the dynamic version of p-values;
SLAs transform inverse p-values into inverse Bayes factors.

\appendix
\makeatletter
  \renewcommand{\section}{\@startsection{section}{1}{0pt}%
  {-3.5ex plus -1ex minus -.2ex}{2.3ex plus.2ex}%
  {\normalfont\Large\bfseries\noindent Appendix~}}
\makeatother
\section{Details of the specific examples of ALAs and ASLAs}
\label{app:analytic}
\makeatletter
  \renewcommand{\section}{\@startsection{section}{1}{0pt}%
  {-3.5ex plus -1ex minus -.2ex}{2.3ex plus.2ex}%
  {\normalfont\Large\bfseries}}
\makeatother

In Section~\ref{sec:capital-1} we gave two examples of ASLAs,
(\ref{eq:class-1}) and (\ref{eq:class-2}).
In this appendix we will find the corresponding measures, spines, and ALAs
(cf.\ Figure \ref{fig:frame}).
It will be a good illustration of the absence
at the top of Figure~\ref{fig:frame}
of an arrow pointing to the left,
from ``scaled ASLA'' to ``ALA''.
To find the ALA corresponding to a given scaled ASLA,
we will have to move around the square via ``measure'' and ``spine''.

\subsection*{ASLAs and ALAs related to (\ref{eq:class-1})}

Let us first find the probability measure $P$ on $[1,\infty]$
corresponding to the ASLA $F$ defined by (\ref{eq:class-1}).
Using (\ref{eq:SLA2measure})
we find $Q([1,y])=\alpha y^{1-\alpha}$ for all $y\in[1,\infty)$,
and so $Q$ gives weight $\alpha$ to $1$
and has density $\alpha(1-\alpha)y^{-\alpha}$ over $(1,\infty)$.
Therefore, $P$ gives weight $\alpha$ to $1$
and has density $\alpha(1-\alpha)y^{-1-\alpha}$ over $(1,\infty)$;
it is clear that it gives weight $0$ to $\infty$.
Now we can find
\begin{equation}\label{eq:P}
  P((X,\infty])
  =
  \int_{X}^{\infty}
  \alpha(1-\alpha)y^{-1-\alpha}
  \dd y
  =
  (1-\alpha)
  X^{-\alpha}.
\end{equation}
We can see that the distribution function of the probability measure $P$ is
$
  P([1,X])
  =
  1
  -
  (1-\alpha)
  X^{-\alpha}
$,
$X\ge1$.

The spine corresponding to the $F$ defined by (\ref{eq:class-1})
has an even simpler expression:
using (\ref{eq:P}) and (\ref{eq:measure2spine}),
we obtain
$$
  F^=(X)
  =
  F(X) + X P((X,\infty])
  =
  \alpha X^{1-\alpha} + X (1-\alpha) X^{-\alpha}
  =
  X^{1-\alpha}.
$$

In Section~\ref{sec:capital-2} we implicitly considered the ALAs
corresponding to the probability measure
$P_c:=(1-c)P+c\delta_{\infty}$,
where $c\in[0,1]$
and $\delta_{\infty}$ is the probability measure on $[1,\infty]$
that is concentrated at $\infty$.
The corresponding spine is
$$
  F^=(X)
  =
  (1-c)X^{1-\alpha}+cX,
$$
and so, by (\ref{eq:spine2ALA}), the corresponding ALA is
\begin{align*}
  F(X^*,X)
  &=
  (1-c)(X^*)^{1-\alpha}+cX^*
  +
  \bigl(
    (1-c)
    (1-\alpha)
    (X^*)^{-\alpha}
    +
    c
  \bigr)
  (X-X^*)\\
  &=
  cX
  +
  (1-c)\alpha(X^*)^{1-\alpha}
  +
  (1-c)(1-\alpha)(X^*)^{-\alpha}X;
\end{align*}
cf.\ (\ref{eq:improvement}).

\subsection*{ASLAs and ALAs related to (\ref{eq:class-2})}

Let us now find the probability measure $P$ on $[1,\infty]$
and the spine $F^=$ corresponding to (\ref{eq:class-2}).
Since $Q([1,y])=\alpha(1+\alpha)^{\alpha}y\ln^{-1-\alpha}y$
when $y\in[e^{1+\alpha},\infty)$
and $Q([1,y])=0$ otherwise,
we obtain that $Q(\{e^{1+\alpha}\})=\frac{\alpha}{1+\alpha}e^{1+\alpha}$
and that over $(e^{1+\alpha},\infty)$ the measure $Q$
is absolutely continuous with density
$
  q(y)
  :=
  \alpha(1+\alpha)^{\alpha}\ln^{-1-\alpha}y
  -
  \alpha(1+\alpha)^{1+\alpha}\ln^{-2-\alpha}y
$.
Therefore, $P(\{e^{1+\alpha}\})=\frac{\alpha}{1+\alpha}$
and over $(e^{1+\alpha},\infty)$ the probability measure $P$
is absolutely continuous with density $q(y)/y$.
For any $X\ge e^{1+\alpha}$ we now obtain
\begin{align*}
  P((X,\infty))
  &=
  \alpha(1+\alpha)^{\alpha}
  \int_{X}^{\infty}
  \frac{\ln^{-1-\alpha}y}{y}
  \dd y
  -
  \alpha(1+\alpha)^{1+\alpha}
  \int_{X}^{\infty}
  \frac{\ln^{-2-\alpha}y}{y}
  \dd y\\
  &=
  (1+\alpha)^{\alpha}
  \ln^{-\alpha}X
  -
  \alpha(1+\alpha)^{\alpha}
  \ln^{-1-\alpha}X.
\end{align*}
\ifFULL\bluebegin
  Let us check that $P$ is indeed a probability measure:
  $$
    P(\{e^{1+\alpha}\})
    +
    P((e^{1+\alpha},\infty))
    =
    \frac{\alpha}{1+\alpha}
    +
    (1+\alpha)^{\alpha}
    (1+\alpha)^{-\alpha}
    -
    \alpha(1+\alpha)^{\alpha}
    (1+\alpha)^{-1-\alpha}
    =
    1.
  $$
\blueend\fi
Equation (\ref{eq:measure2spine}) now gives, for $X\ge e^{1+\alpha}$,
\begin{align*}
  F^=(X)
  &=
  \alpha(1+\alpha)^{\alpha}X\ln^{-1-\alpha}X
  +
  (1+\alpha)^{\alpha}
  X\ln^{-\alpha}X
  -
  \alpha(1+\alpha)^{\alpha}
  X\ln^{-1-\alpha}X\\
  &=
  (1+\alpha)^{\alpha}
  X\ln^{-\alpha}X.
\end{align*}
For $X<e^{1+\alpha}$,
the same equation gives $F^=(X)=X$.
Therefore,
$$
  F^=(X)
  =
  \begin{cases}
    (1+\alpha)^{\alpha}X\ln^{-\alpha}X & \text{if $X\ge e^{1+\alpha}$}\\
    X & \text{otherwise}.
  \end{cases}
$$
\ifFULL\bluebegin
  These are the calculations:
  \begin{align*}
    (X\ln^{-\alpha}X)'
    &=
    \ln^{-\alpha}X - \alpha \ln^{-1-\alpha}X,\\
    (X\ln^{-\alpha}X)''
    &=
    \frac{\alpha(1+\alpha)\ln^{-2-\alpha}X - \alpha \ln^{-1-\alpha}X}{X}.
  \end{align*}
  Therefore,
  $(X\ln^{-\alpha}X)''\le0$ is equivalent to $\alpha(1+\alpha)-\alpha\ln X\le0$,
  i.e., $X\ge e^{1+\alpha}$.
  At $X=e^{1+\alpha}$,
  \begin{align*}
    X\ln^{-\alpha}X
    &=
    e^{1+\alpha} (1+\alpha)^{-\alpha},\\
    (X\ln^{-\alpha}X)'
    &=
    (1+\alpha)^{-\alpha}-\alpha(1+\alpha)^{-1-\alpha}
    =(1+\alpha)^{-1-\alpha}.
  \end{align*}
\blueend\fi
This function satisfies the first condition in the statement of Theorem~\ref{thm:general}
by definition;
it is also easy to check directly
(notice that $X\ln^{-\alpha}X$ is concave only over $(e^{1+\alpha},\infty)$).

\ifWP
\section*{Acknowledgements}

\ifFULL\bluebegin
  Comments by \emph{Finance and Stochastics} referees were very helpful.
\blueend\fi
\acknowledge
\fi

\end{document}